\definecolor{darkblue}{rgb}{0,0,0.5}
\definecolor{cerule}{RGB}{53,122,183}
\definecolor{cardinal}{RGB}{184,32,16}
\newcommand\mop[1]{\mathop{\operatorname{#1}}}
\def\Res{\mop{Res}}
\def\diag{\mop{diag}}
\def\res{\mop{res}}
\def\lm{\mop{lm}}
\def\A{\mathbb{A}}
\def\C{\mathbb{C}}
\def\N{\mathbb{N}}
\def\Q{\mathbb{Q}}
\def\Z{\mathbb{Z}}
\def\K{{\mathbb{K}}}
\def\R{\mathbb{R}}
\def\F{\mathbb{F}}
\def\L{\mathbb{L}}
\def\smile{\smallsmile}
\def\leq{\leqslant}
\def\geq{\geqslant}
\def\cB{\mathcal{B}}
\def\cD{\mathcal{D}}
\def\cS{\mathcal{S}}
\def\cL{\mathcal{L}}
\def\cM{\mathcal{M}}
\def\allbigallsmall{\textsc{AllLargeOrAllSmall}}
\newcommand{\abs}[1]{\left| #1 \right|}
\newcommand{\slf}[1]{(\!(#1)\!)}
\def\ud{\mathrm{d}}
\def\eqdef{\stackrel{\text{def}}{=}}
\newcommand{\st}{\mathrel{}\middle|\mathrel{}}
\newcommand{\ssum}{\mathop{\sum\nolimits^{\mathrlap{\prime}}}}
\newcommand{\ul}{\underaccent{\bar}}
\newtheorem{theorem}{Theorem}[section]
\newtheorem{proposition}[theorem]{Proposition}
\newtheorem{lemma}[theorem]{Lemma}
\newtheorem{corollary}[theorem]{Corollary}
\newtheorem{conjecture}[theorem]{Conjecture}
\theoremstyle{definition}
\newtheorem{definition}[theorem]{Definition}
\newtheorem{example}{Example}
\theoremstyle{remark}
\title{Multiple binomial sums}
\author[A. Bostan]{Alin Bostan}
\address[A. Bostan]{
  Inria Saclay Île-de-France\\ 
  Bâtiment Alan Turing\\ 
  1 rue Honoré d'Estienne d'Orves\\ 
  91120 Palaiseau\\
  France
}
\email{
  alin.bostan@inria.fr
}
\author[P. Lairez]{Pierre Lairez}
\address[P. Lairez]{
  Technische Universität Berlin\\
  Fakultät II, Sekretariat 3-2\\
  Straße des 17. Juni 136\\
  10623 Berlin\\
  Germany
}
\email{
  pierre@lairez.fr
}
\author[B. Salvy]{Bruno Salvy}
\address[B. Salvy]{
  Inria, LIP (U. Lyon, CNRS, ENS Lyon, UCBL)\\
  France}
\email{
  bruno.salvy@inria.fr
}
\date{\today}
\subjclass[2010]{
  05A10 
  (33F10 
  68W30)
}
\keywords{Binomial sum, multiple sum, symbolic computation, diagonal, integral representation}
\begin{document}

\begin{abstract}
  Multiple binomial sums form a large class of multi-indexed sequences, closed under partial summation, which contains most of the sequences obtained by multiple summation of products of binomial coefficients and also all the sequences with algebraic generating function.  We study the representation of the generating functions of binomial sums by integrals of rational functions.  The outcome is twofold. Firstly, we show that a univariate sequence is a multiple binomial sum if and only if its generating function is the diagonal of a rational function.  Secondly, we propose algorithms that decide the equality of multiple binomial sums and that compute recurrence relations for them.  In conjunction with geometric simplifications of the integral representations, this approach behaves well in practice.  The process avoids the computation of certificates and the problem of the appearance of spurious singularities that afflicts discrete creative telescoping, both in theory and in practice.
\end{abstract}

\maketitle

\section*{Introduction}

The computation of definite sums in computer algebra is classically handled by
the method of \emph{creative telescoping} initiated in the 1990s
by Zeilberger~\parencite{Zeilberger1990,Zeilberger1991,WilfZeilberger1992a}.
For example, it applies to sums like
\begin{equation}
  \sum_{k=0}^n \frac{4^k}{\binom{2k}{k}} \text{,}\quad \sum_{k=0}^n\left(\sum_{j=0}^k \binom{n}{j}\right)^3 \text{ or }
  \sum_{i=0}^n \sum_{j=0}^n\binom{i+j}{j}^2\binom{4n-2i-2j}{2n-2i}.
  \label{eqn:sum-exem-intro}
\end{equation}
In
order to compute a sum~$\sum_k u(n,k)$ of a bivariate sequence~$u$, this method
computes an identity of the form
\[a_p(n)u(n+p,k)+\dotsb+a_0(n)u(n,k)=v(n,k+1)-v(n,k).\]
Provided that it is possible to sum both sides over~$k$ and that the
sequence~$v$ vanishes at the endpoints of the domain of summation, the
left-hand side---called a \emph{telescoper}---gives a recurrence for the
sum. The sequence~$v$ is then called the \emph{certificate} of the
identity.

In the case of multiple sums, this idea leads to searching for a telescoping identity of the form
\begin{multline}\label{eqn:dtelesc}
a_p(n)u(n+p,k_1,\dots,k_m)+\dots+a_0(n)u(n,k_1,\dots,k_m)=\\
\big(v_1(n,k_1+1,k_2,\dots,k_m)-v_1(n,k_1,\dots,k_m)\big)+\dotsb\\
+\big(v_m(n,k_1,\dots,k_m+1)-v_m(n,k_1,\dots,k_m)\big).
\end{multline}
Again, under favorable circumstances the sums of the sequences on the
right-hand side telescope, leaving a recurrence for the sum on the left-hand
side.

This high-level presentation hides practical difficulties.
It is important to check that the sequences on both sides of the
identities above are defined over the whole range of
summation~\parencite{Abramov2006,AbramovPetkovsek2005}.
More often than not, singularities do appear.  
To the best of our knowledge, no algorithm based on
creative telescoping manages to work around this difficulty; they all let the
user handle it. 
As a consequence, computing
the certificate is not merely a useful by-product of the algorithm, but indeed
a necessary part of the computation.  Unfortunately, the size of the
certificate may be much larger than that of the final recurrence and thus
costly in terms of computational complexity.

The computation of multiple integrals of rational functions has some
similarities with the computation of discrete sums and the method of creative
telescoping applies there too. It may also produce extra singularities in the
certificate, but in the differential setting this is not an issue anymore: for
the integrals we are interested in, the integration path can always be moved
to get around any extra singularity. Moreover, we have
showed~\parencite{BostanLairezSalvy2013,Lai15} that integration of multivariate
rational functions over cycles can be achieved efficiently without computing
the corresponding certificate and without introducing spurious singularities.
In that case, the algorithm computes a linear differential equation for the
parameterized integral. It turns out that numerous multiple sums can be cast
into problems of rational integration by passing to generating functions. The
algorithmic consequences of this observation form the object of the present
work.

\subsection*{Content}

In~\S\ref{sec:algbinomsums}, we define a class of multivariate sequences, called
\emph{(multiple) binomial sums}, that contains the binomial coefficient sequence and
that is closed under pointwise addition, pointwise multiplication, linear
change of variables and partial summation.  Not
every sum that creative telescoping can handle is a binomial sum: for example,
among the three sums in Eq.~\eqref{eqn:sum-exem-intro}, the second one and the third one
are binomial sums but the first one is not, since it contains the inverse of a
binomial coefficient; moreover, it cannot be rewritten as a binomial sum (see \S\ref{sec:caracbs}). Yet many sums coming from combinatorics and number theory
are binomial sums.  In~\S\ref{sec:gfun}, we explain how to compute \emph{integral
representations} of the generating function of a binomial sum in an automated way.
The outcome is twofold. Firstly, in \S\ref{sec:diagonals}, we work further on
these integral representations to show that the generating functions of
\emph{univariate} binomial sums are exactly the diagonals of rational power series.
This equivalence characterizes binomial sums in an
intrinsic way which dismisses the arbitrariness of the definition.  All the
theory of diagonals transfers to univariate binomial sums and gives many interesting
arithmetic properties.  Secondly, in \S\ref{sec:computing}, we show how to use integral
representations to actually compute with binomial sums (e.g. find recurrence
relations or prove identities automatically) \emph{via} the computation of
Picard-Fuchs equations.  The direct approach leads to
integral representations that involve far too many variables to be
efficiently handled.  In~\S\ref{sec:geomred}, we propose a general method, that we call \emph{geometric
reduction}, to reduce tremendously the number of variables in practice.
In~\S\ref{sec:opt}, we describe some variants with the purpose of
implementing the algorithms, and finally, in~\S\ref{sec:applications}, we show how the
method applies to some classical identities and more recent ones that were
conjectural so far.

All the algorithms that are presented here are implemented in Maple and are available at
  \url{https://github.com/lairez/binomsums}.

\subsection*{Example}
The following proof of an identity of \textcite{Dix91}, 
\begin{equation}
  \sum_{k=0}^{2n} (-1)^k \binom{2n}{k}^3 = (-1)^n \frac{(3n)!}{n!^3},
  \label{eqn:intro:dixon}
\end{equation}
illustrates well the main points of the method.
The strategy is as follows: find an integral representation of the generating
function of the left-hand side; simplify this integral representation using
partial integration; use the simplified integral representation to compute a differential
equation of which the generating function is solution; transform this equation
into a recurrence relation; solve this recurrence relation.

First of all,
the binomial coefficient~$\binom{n}{k}$ is the coefficient of~$x^k$ in~$(1+x)^n$.
Cauchy's integral formula ensures that
\[ \binom{n}{k} = \frac{1}{2\pi i} \oint_{\gamma} \frac{(1+x)^n}{x^{k}} \frac{\ud x}{x}, \]
where~$\gamma$ is the circle~$\left\{ x\in\C \st |x|=\frac12 \right\}$.
Therefore, the cube of a binomial coefficient can be represented as a triple integral
\[ \binom{2n}{k}^3 = \frac{1}{(2\pi i)^3}\oint_{\gamma\times\gamma\times\gamma} \frac{(1+x_1)^{2n}}{x_1^{k}}\frac{(1+x_2)^{2n}}{x_2^{k}}\frac{(1+x_3)^{2n}}{x_3^{k}} \frac{\ud x_1}{x_1}\frac{\ud x_2}{x_2}\frac{\ud x_3}{x_3}. \]
As a result, the generating function of the left-hand side of Equation~\eqref{eqn:intro:dixon}
is
\begin{align*}
  y(t) &\eqdef \sum_{n = 0}^{\infty} t^n\sum_{k=0}^{2n} (-1)^k \binom{2n}{k}^3\\
  &=  \frac{1}{(2i\pi)^3}\oint_{\gamma^3}\sum_{n = 0}^{\infty} \sum_{k=0}^{2n} \left(t \prod_{i=1}^3(1+x_i)^2\right)^n \left(\frac{-1}{x_1x_2x_3}\right)^k  \frac{\ud x_1}{x_1}\frac{\ud x_2}{x_2}\frac{\ud x_3}{x_3}\\
  &= \frac{1}{(2i\pi)^3}\oint_{\gamma^3}\sum_{n = 0}^{\infty} \left(t \prod_{i=1}^3(1+x_i)^2\right)^n \frac{1 - \left(\frac{-1}{x_1x_2x_3}\right)^{2n+1}}{1+\frac{1}{x_1x_2x_3}}  \frac{\ud x_1}{x_1}\frac{\ud x_2}{x_2}\frac{\ud x_3}{x_3}\\
  &= \frac{1}{(2i\pi)^3}\oint_{\gamma^3}
  \frac{\left(x_1x_2x_3-t\prod_{i=1}^3(1+x_i)^2\right)\ud x_1\ud x_2\ud x_3}{\left(x_1^2x_2^2x_3^2-t\prod_{i=1}^3(1+x_i)^2\right)\left(1-t\prod_{i=1}^3(1+x_i)^2\right)}.
\end{align*}
The partial integral with respect to~$x_3$ along the circle~$|x_3|=\frac12$ is
the sum of the residues of the rational function being integrated at the poles
whose modulus is less than~$\frac12$.  When~$|t|$ is small
and~$|x_1|=|x_2|=\frac12$, the poles coming from the
factor~$x_1^2x_2^2x_3^2-t\prod_{i=1}^3(1+x_i)^2$ all have a modulus that is
smaller than~$\frac12$: they are asymptotically proportional to~$|t|^{1/2}$.  In contrast, the poles coming
from the factor~$1-t\prod_{i=1}^3(1+x_i)^2$ behave like~$|t|^{-1/2}$ and have
all a modulus that is bigger than~$\frac12$.  In particular, any two poles that
come from the same factor are either both asymptotically small or both
asymptotically large.  This implies that the partial integral is a rational
function of~$t$, $x_1$ and~$x_2$; and we compute that
\[ y(t) = \frac{1}{(2i\pi)^2}\oint_{\gamma\times\gamma} 
  \frac{ x_1x_2 \ud x_1\ud x_2}{x_1^2x_2^2-t (1+x_1)^2(1+x_2)^2(1-x_1x_2)^2}. \]
This formula echoes the original proof of \textcite{Dix91} in which the left-hand side of~\eqref{eqn:intro:dixon}
is expressed as the coefficient of~$(xy)^{4n}$ in~$((1-y^2)(1-z^2)(1-y^2z^2))^{2n}$.
Using any algorithm that performs definite integration of rational functions~\parencite{Chy00,Kou10,Lai15}
reveals a differential equation satisfied by~$y(t)$:
\[ t(27t+1)y''+(54t+1)y'+6y = 0. \]
Looking at the coefficient of~$t^n$ in this equality leads to
the recurrence relation
\[ 3(3n+2)(3n+1)u_n+(n+1)^2 u_{n+1} = 0, \]
where~$u_n = \sum_{k=0}^{2n} (-1)^k \binom{2n}{k}^3$.
Since~$u_0=1$, it leads to a proof of Dixon's identity by induction on~$n$.  The
treatment above differs in one important aspect from what follows: the use of genuine
integrals and explicit integration paths rather than \emph{formal residues} that
will be introduced in~\S\ref{sec:gfun}.

\subsection*{Comparison with creative telescoping} 
As mentioned above, the computation of multiple binomial sums can be handled by
the method of creative telescoping.  The amount of work in this direction is
considerable and we refer the reader to surveys~\parencite{Chy14,Kou13a}. In the 
specific context of multiple sums, the most relevant works are those of 
\textcite{Wegschaider1997}, \textcite{Chy00},
\textcite{ApZe06},
and \textcite{GaroufalidisSun2010}.
We show on the example of Dixon's identity how the method of creative
telescoping and the method of generating functions differ fundamentally even on a single sum. 

Let~$u_{n,k} = (-1)^k\binom{2n}{k}^3$.
This bivariate sequence satisfies the recurrence relations
\begin{equation}
\begin{aligned}
  (2n+2-k)^3(2n+1-k)^3 u_{n+1,k} - 8(1+n)^3(1+2n)^3 u_{n,k} &= 0 \\
   \text{and}\quad (k+1)^3 u_{n,k+1} + (2n-k)^3u_{n,k} &= 0.
\end{aligned}
  \label{equ:recbinom}
\end{equation}
With these relations as input, Zeilberger's algorithm finds the sequence
\begin{multline*}
  v_{n,k} = \frac{P(n,k)}{2(2n+2-k)^3(2n+1-k)^3} u_{n,k}, \\
  \text{where}\quad P(n,k) = k^3(9k^4n-90k^3n^2+348k^2n^3-624kn^4+448n^5\\ +6k^4-132k^3n
  +792k^2n^2-1932kn^3+1760n^4-48k^3+594k^2n\\-2214kn^2
  +2728n^3+147k^2-1113kn+2084n^2-207k+784n+116),
\end{multline*}
that satisfies
\begin{equation}
  3(3n+2)(3n+1)u_{n,k}+(n+1)^2 u_{n+1,k} = v_{n,k+1} - v_{n,k}.
  \label{equ:telescopic-dixon}
\end{equation}
Whatever the way the sequence~$v_{n,k}$ is found, it is easy to check the
telescopic relation \eqref{equ:telescopic-dixon}: using the recurrence relations
for~$u_{n,k}$, each of the four terms in~\eqref{equ:telescopic-dixon}
rewrites in the form~$R(n,k)u_{n,k}$, for some rational function~$R(n,k)$.
However, for some specific values of~$n$ and~$k$, the sequence~$v_{n,k}$ is
not defined, due to the denominator.

To deduce a recurrence relation for~$a_n = \sum_{k=0}^{2n} u_{n,k}$, it is desirable to sum the telescopic relation~\eqref{equ:telescopic-dixon}, over~$k$,
from~$0$ to~$2n+2$.  Unfortunately, that would hit the forbidden set where~$v_{n,k}$ is
not defined.  We can only safely sum up to~$k=2n-1$. Doing so, we obtain
that
\begin{equation*}
  3(3n+2)(3n+1) \sum_{k=0}^{2n-1} u_{n,k} +(n+1)^2 \sum_{k=0}^{2n-1} u_{n+1,k} = v_{n,2n} - v_{n,0},
\end{equation*}
and then
\begin{multline*}
 3(3n+2)(3n+1)(a_n - u_{n,2n}) +(n+1)^2 (a_{n+1} - u_{n+1,2n+2}-u_{n+1,2n+1}-u_{n+1,2n}) \\
  = n^3(8n^5+52n^4+146n^3+223n^2+185n+58).  
\end{multline*}
It turns out that~$3(3n+2)(3n+1)u_{n,2n} +(n+1)^2 (
u_{n+1,2n+2}+u_{n+1,2n+1}+u_{n+1,2n})$ evaluates exactly to the right-hand side
of the above identity, and this leads to Dixon's identity.

In this example, spurious singularities clearly appear in the range of
summation.  Thus, deriving an identity such as Dixon's  from a telescopic
identity such as~\eqref{equ:telescopic-dixon} is not straightforward and
involves the certificate. 
A few works address this issue for single
sums \parencite{AbramovPetkovsek2005,Abramov2006}, but none for the case of multiple
sums: existing algorithms \parencite{Wegschaider1997,Chy00,GaroufalidisSun2010} only
give the telescopic identity without performing the summation.  A recent
attempt by \textcite{ChyMahSibTas14} to check the recurrence satisfied by Apéry's
sequence in the proof assistant Coq has shown how difficult it is to formalize
this summation step. Note that because of this issue, even the \emph{existence}
of a linear recurrence for such sums can hardly be inferred from the fact that the
algorithm of creative telescoping always terminates with success.

This issue is rooted in the method of creative telescoping by the fact that
sequences are \emph{represented} through the linear recurrence relations that
they satisfy.  Unfortunately, this representation is not very faithful when the
leading terms of the relations vanish for some values of the indices.  The
method of generating functions avoids this issue.  For example, the binomial
coefficient~$\binom{n}{k}$ is represented unambiguously as the coefficient
of~$x^k$ in~$(1+x)^n$ (to be understood as a power series when~$n < 0$), rather
than as a solution to the recurrence relations~$(n-k)\binom{n}{k} =
n\binom{n-1}{k}$ and~$k\binom{n}{k} = n\binom{n-1}{k-1}$.

\subsection*{Related work} The method of generating functions is classical
and has been largely studied, in particular for the approach described here by
\textcite{Egorychev1984}, see also \parencite{EgZi08}. Egorychev's method is a general approach to summation, but not quite an algorithm. In this work, we make it completely effective and practical for the class of binomial sums.

The special case when the generating functions are differentially finite (D-finite) has been studied by \textcite{Lipshitz1989}. From the effectivity point of view, the starting point is his proof that diagonals of D-finite power series are D-finite \parencite{Lip88}. The argument, based on linear algebra, is constructive but does not translate into an efficient algorithm because of the large dimensions involved. This led \textcite[596]{WilfZeilberger1992a} to comment that \emph{``This approach, while it is explicit in principle, in fact yields an infeasible algorithm.''}
Still, using this construction of diagonals, many closure properties of the sequences under consideration (called P-recursive) can be proved (and, in principle, computed). Then, the representation of a convergent definite sum amounts to evaluating a generating series at~1 and this proves the existence of linear recurrences for the definite sums of all P-recursive sequences. \Textcite{AbramovPetkovsek2002} showed that in particular the so-called \emph{proper hypergeometric sequences} are P-recursive in the sense of Lipshitz. The proof is also constructive, relying on Lipshitz's construction of diagonals to perform products of sequences. 

While we are close to Lipshitz's approach, three enhancements make the method
of generating functions presented here efficient: we use more efficient algorithms for
computing multiple integrals and diagonals that have appeared in the last twenty
years \parencite{Chy00,Kou10,BostanLairezSalvy2013,Lai15}; we restrict ourselves to binomial sums, which makes it possible to
manipulate the generating functions through rational integral
representations (see~\S\ref{sec:gfbs} and~\S\ref{sec:residues}); and a third decisive improvement comes with the geometric reduction procedure
for simplifying integral representations (see~\S\ref{sec:geomred}).

Creative telescoping is another summation algorithm developed by \textcite{Zeilberger1991a} and proved to work for all proper hypergeometric sequences \parencite{WilfZeilberger1992a}. This method has the advantage of being applicable and often efficient in practice. However, as already mentioned, it relies on certificates whose size grows fast with the number of variables \parencite{BostanLairezSalvy2013} and, more importantly, whose summation is not straightforward, making the complete automation of the method problematic. For proper hypergeometric sums, a different effective approach developed by \textcite{Takayama95} does not suffer from the certificate problem. It consists in expressing the sum as the evaluation of a hypergeometric series and reducing its shifts with respect to a non-commutative Gr{\"o}bner basis of the contiguity relations of the series, reducing the question to linear algebra in the finite-dimensional quotient. 

The class of sums we consider is a subclass of the sums of proper hypergeometric sequences. We give an algorithm that avoids the computation of certificates in that case, and relies on an efficient method to deal with the integral representations of sums. The same approach has been recently used  by \textcite{BBCHM13} on various examples, though in a less systematic manner.
Examples in \S\ref{sec:applications} give an idea of the extent of the class we are dealing with. It is a subclass of the \emph{balanced multisums}, shown by \textcite{Garoufalidis09} to possess nice asymptotic properties. More recently, a smaller family of \emph{binomial multisums} was studied by \textcite{GaPa14}: they are further constrained to be diagonals of $\mathbb{N}$-rational power series.

\subsection*{Acknowledgments.} We thank Marko Petkov\v{s}ek for orienting us in some of the literature. This work has been supported in part by
FastRelax ANR-14-CE25-0018-01 and by the research grant BU 1371/2-2 of the
Deutsche Forschungsgemeinschaft.  

\section{The algebra of binomial sums}
\label{sec:algbinomsums}

\subsection{Basic objects}
\label{sec:def}

 For all~$n,k\in\Z$, the binomial
 coefficient~$\binom{n}{k}$ is considered in this work as \emph{defined} to be the coefficient of~$x^k$ in the
formal power series~$(1+x)^n$. In other words, 
\[\binom{n}{k}  = 
\frac{n(n-1)(n-2)\cdots(n-k+1)}{k!} \; \text{for} \; k \geq 0 \quad \text{and}\quad \binom{n}{k}  = 0 \; \text{for} \; k < 0.
\]

For all~$a,b\in \Z$, we define the \emph{directed sum}~$\sum_{k=a}^{\prime \, b}$ as
\[ \ssum_{k=a}^b u_k \eqdef
  \begin{cases}
    \sum_{k=a}^b u_k & \text{if $a\leq b$}, \\
    0 & \text{if $a=b+1$}, \\
    - \sum_{k=b+1}^{a-1} u_k & \text{if $a>b+1$,}
  \end{cases} \]
in contrast with the usual convention that~$\sum_{k=a}^b u_k=0$ when~$a > b$.
This implies the following flexible summation rule for directed sums:
\[ \ssum_{k=a}^b u_k + \ssum_{k=b+1}^c u_k = \ssum_{k=a}^c u_k, \quad \text{for all} \quad a,b,c\in \Z,\] 
and also the geometric summation formula
\[\ssum_{k=a}^{b} r^k = \frac{r^a-r^{b+1}}{1-r} \quad \text{for any} \;  r\neq 1,\]
valid independently of the relative position of~$a$ and~$b$.

Let~$\K$ be a field of characteristic zero, and let $d\geq 1$.  We denote
by~$\cS_d$ the~$\K$-algebra of sequences~$\Z^d\to\K$, where  addition and
 multiplication are performed component-wise.  Elements of~$\Z^d$ are
denoted using underlined lower case letters, such as~$\ul n$.  The algebra~$\cS_d$ may be embedded in the algebra of all
functions~$\Z^{\N}\to\K$ by sending a sequence~$u:\Z^d\to\K$ to the
function~$\tilde u$ defined by
\[ \tilde u(n_1,n_2,\dotsc) = u(n_1,\dotsc,n_d). \]
Let~$\cS$ be the union of the~$\cS_d$'s in the set of all functions~$\Z^{\N}\to\K$.
For~$u\in\cS$ and~$\ul n\in \Z^d$, the notation~$u_{\ul n}$ represents~$u_{\ul n, 0,0,\dotsc}$.
These conventions let us restrict or extend the number of indices as needed without keeping track of it.

\begin{definition}\label{def:bs}
  The algebra of \emph{binomial sums} over the field~$\K$, denoted~$\cB$, is the smallest subalgebra of~$\cS$ such that:
  \begin{enumerate}[(a)]
    \item\label{def:it:delta} The Kronecker delta sequence~$n\in\Z\mapsto\delta_n$, defined by~$\delta_0=1$ and~$\delta_n=0$ if~$n\neq 0$, is in~$\cB$.
    \item\label{def:it:geom} The geometric sequences~$n\in\Z\mapsto C^n$, for all~$C\in\K\setminus\left\{ 0 \right\}$, are in~$\cB$.
    \item\label{def:it:binom} The binomial sequence~$(n,k)\mapsto \binom{n}{k}$ (an element of~$\cS_2$) is in~$\cB$.
    \item\label{def:it:lambda} If~$\lambda : \Z^d\to \Z^e$ is an affine map and if~$u\in\cB$,
      then~$\ul n\in\Z^d\mapsto u_{\lambda(\ul n)}$ is in~$\cB$.
    \item\label{def:it:sum} If~$u\in \cB$, then the following directed indefinite sum is in~$\cB$:
      \[ (\ul n, m)\in \Z^d\times\Z \mapsto \ssum_{k=0}^m u_{\ul n, k}. \]
  \end{enumerate}
\end{definition}

\begin{figure}[tp]
  \centering
\begin{tikzpicture}[level distance=2cm,level 1/.style={sibling distance=3cm},level 2/.style={sibling distance=2cm}]
  \node {$\displaystyle \sum_{k=0}^n \binom{n}{k}\binom{n+k}{k} \ssum_{j=0}^k \binom{k}{j}^3$} [grow'=up]
  child { node {$H_n$} child { node {$\delta_n$} edge from parent node[left] {\ref{def:it:sum}} }   }
  child { node {$\displaystyle \ssum_{k=0}^n \binom{n}{k}\binom{n+k}{k} \ssum_{j=0}^k \binom{k}{j}^3$} 
    child { node {$\displaystyle \ssum_{k=0}^m \binom{n}{k}\binom{n+k}{k} \ssum_{j=0}^k \binom{k}{j}^3$}
      child { node {$\displaystyle \binom{n}{k}\binom{n+k}{k} \ssum_{j=0}^k \binom{k}{j}^3$}
        child { node {$\displaystyle \binom{n}{k}$} }
        child { node {$\displaystyle \binom{n+k}{k}$}
          child { node {$\displaystyle \binom{n}{k}$}
            edge from parent node[right] {\ref{def:it:lambda}} node [left] {$n \mapsto n+k$}
          }
        }
        child {
          node {$\displaystyle \ssum_{j=0}^k \binom{k}{j}^3$}
          child { node {$\displaystyle \ssum_{j=0}^m \binom{k}{j}^3$}
            child { node {$\displaystyle \binom{k}{j}^3$}
              child { node {$\displaystyle \binom{k}{j}$} }
              child { node {$\displaystyle \binom{k}{j}$} }
              child { node {$\displaystyle \binom{k}{j}$} }
              edge from parent node[left] {\ref{def:it:sum}}
            }
            edge from parent node[left] {\ref{def:it:lambda}} node [right] {$m \mapsto k$}
          }
        }
        edge from parent node[left] {\ref{def:it:sum}}
      }
      edge from parent node[left] {\ref{def:it:lambda}}  node [right] {$m \mapsto n$}
    } };
\end{tikzpicture}
  
  \caption{Construction of a binomial sum. The last step replaces the directed sum
    by a usual sum in order to have nonzero values for~$n\geq 0$ only.}
  \label{fig:binomtree}
\end{figure}

\label{sec:bsexamples}
Let us give a few useful examples.  All polynomial sequences~$\Z^d\to\K$ are
in~$\cB$, because~$\cB$ is an algebra and because the sequence~$n\mapsto \binom{n}{1}=n$ is in~$\cB$, thanks to
points~\ref{def:it:binom} and~\ref{def:it:lambda} of the definition.

Let~$(H_n)_{n\in \Z}$ be the sequence defined by~$H_n = 1$ if~$n\geq 0$ and~$H_n=0$ if $n<0$.
It is a binomial sum since~$H_n = \sum_{k=0}^{\prime\,n}\delta_k$.
As a consequence, we obtain the closure of~$\cB$ by usual (indefinite) sums since
\[ \sum_{k=0}^m u_{\ul n, k} = H_m\ssum_{k=0}^m u_{\ul n,k}. \]

By combining the rules~\ref{def:it:lambda} and~\ref{def:it:sum} of the
definition, we also obtain the closure of~$\cB$ under sums whose bounds depend
linearly on the parameter: if~$u\in\cB$ and if~$\lambda$ and~$\mu$ are affine
maps~$:\Z^d\to\Z$, then the sequence
\begin{equation}
 \ul n\in\Z^d \mapsto \ssum_{k=\lambda(\ul n)}^{\mu(\ul n)} u_{\ul n,k}
 \label{eqn:generalizedsummation}
\end{equation}
is a binomial sum.

See also Figure~\ref{fig:binomtree} for an example of a classical binomial sum.

\subsection{Characterization of binomial sums}
\label{sec:caracbs}

A few simple criteria make it possible to prove that a given sequence is not a binomial sum.
For example:
\begin{itemize}
  \item The sequence~$(n!)_{n\geq 0}$ is not a binomial sum. Indeed, the set of
    sequences that grow at most exponentially is closed under the rules
    that define binomial sums. Since~$\binom{n}{k} \leq 2^{|n| + |k|}$, 
    every binomial sum grows at most exponentially; but this is not the case for the
    sequence~$(n!)_{n\geq 0}$.
  \item The sequence of all prime integers is not a binomial sum
    because it does not satisfy any nonzero linear recurrence relation with polynomial
    coefficients \parencite{FlaSalGer05}, whereas every binomial sum does, see
    Corollary~\ref{cor:bs-dfinite}.  
  \item The sequence~$(1/n)_{n\geq 1}$ is not a binomial sum. To prove this, we
    can easily reduce to the case where~$\K$ is a number field and then study
    the denominators that may appear in the elements of a
    binomial sum. One may introduce new prime divisors in the denominators only by
    multiplying with a scalar or with rule~\ref{def:it:geom}, so that the
    denominators of the elements of a given binomial sum contain only finitely
    many prime divisors. This is clearly not the case for the sequence~$(1/n)_{n\geq 1}$.

    By the same argument, the first sum of Eq.~\eqref{eqn:sum-exem-intro} is not a binomial sum. Indeed, by creative telescoping, it can be shown to equal ${(2n+1)4^{n+1}}/{\binom{2n+2}{n+1}}+{1}/{3}$
    and thus all prime numbers appear as denominators.
  \item The sequence $(u_n)_{n\geq 0}$ defined by $u_0=0, u_1=1$ and by the recurrence 
$(2n+1)u_{n+2} - (7n+11)u_{n+1} +(2n+1)u_n =0$ is not a binomial sum.
This follows from the asymptotic estimate $u_n \sim C \cdot (4/(7-\sqrt{33}))^n \cdot n^{\sqrt{75/44}}$, with~$C \approx 0.56$, and the fact that $\sqrt{75/44}$ is not a rational number \parencite[Theorem 5]{Garoufalidis09}.

\end{itemize}

These criteria are, in substance, the only ones that we know to prove that a
given sequence is not a binomial sequence.  Conjecturally, they characterize univariate
binomial sums. Indeed, we will see that the equivalence between univariate binomial sums and
diagonals of rational functions (Theorem~\ref{thm:main}) leads, among many interesting corollaries, to the following reformulation of 
a conjecture due to \textcite[Conjecture~4]{Christol1990}: “any sequence $(u_n)_{n\geq 0}$ of integers
that grows at most exponentially and that is solution of a linear recurrence
equation with polynomial coefficients is a binomial sum.” 

%
%
%

\section{Generating functions}
\label{sec:gfun}

To go back and forth between sequences and power series, one can use convergent
power series and Cauchy's integrals to extract coefficients, as shown in the
introduction with Dixon's identity, or one can use formal power series.  Doing
so, the theory keeps close to the actual algorithms and we avoid the tedious
tracking of convergence radii.  However, this requires the introduction of a
\emph{field} of multivariate formal power series that makes it possible, by
embedding the rational functions into it, to define what the coefficient of a
given monomial is in the power series expansion of an arbitrary multivariate
rational function.  We choose here to use the field of \emph{iterated Laurent
series}. It is an instance of the classical field of Hahn series when the value
group is~$\Z^n$ with the lexicographic order. We refer to \textcite{Xin04} for
a complete treatment of this field and we simply gather here the main
definitions and results.

\subsection{Iterated Laurent series}
\label{sec:laurent}

For a field~$\A$, let~$\A\slf{t}$ be the field of univariate formal Laurent power series
with coefficients in~$\A$.  
For~$d \geq 0$, let~$\L_d$ be the
field of iterated formal Laurent power
series~$\K\slf{z_d}\slf{z_{d-1}}\dotsm\slf{z_1}$.
It naturally contains the field of rational functions in~$z_1,\dotsc,z_d$.
For~$\ul n = (n_{1},\dotsc,n_{d})
\in \Z^d$, let~$\ul z^{\ul n}$ denote the monomial~$z_1^{n_1}\dotsm
z_d^{n_d}$, and for~$f\in\L_d$, let~$[\ul z^{\ul n}] f$ denote the coefficient
of~$\ul z^{\ul n}$ in~$f$, that is the coefficient of~$z_d^{n_d}$ in the coefficient of~$z_{d-1}^{n_{d-1}}$ in [\dots] in the coefficient of~$z_1^{n_1}$ of~$f$.
An element of~$f\in\L_d$ is entirely characterized by its coefficient
function~$\ul n\in\Z^d \mapsto [\ul z^{\ul n}] f$.
On occasion, we will write~$[z_1^n]f$, this means~$[z_1^n z_2^0 \dotsm z_d^0]f$, e.g., $[1] f$ means~$[z_1^0 \dotsm z_d^0]f$; more generally we write~$[m] f$, where~$m$ is a monomial.
In any case, the bracket notation always yields an element of~$\K$.

Let~$\prec$ denote the lexicographic ordering on~$\Z^d$.  For~$\ul n,\ul
m\in\Z^d$ we write~$\ul z^{\ul n} \prec \ul z^{\ul m}$ if~$\ul m \prec \ul n$
(mind the inversion: a monomial is larger when its exponent is smaller).  In
particular~$z_1\prec \dotsb \prec z_d \prec 1$.  With an analytic point of
view, we work in an infinitesimal region where~$z_d$ is infinitesimally small
and where~$z_i$ is infinitesimally small compared to~$z_{i+1}$. The
relation~$\ul z^{\ul n} \prec \ul z^{\ul m}$ means that~$|\ul z^{\ul n}|$
is smaller than~$|\ul z^{\ul m}|$ in this region.

For~$f\in\L_d$, the \emph{support} of~$f$, denoted~$\mop{supp}(f)$, is the
set
of all~$\ul n\in\Z_d$ such that~$[\ul z^{\ul n}]f$ is not zero. 
It is well-known \parencite[e.g.][Prop.~2-1.2]{Xin04} that a function~$\varphi:\Z^d\to\K$ is
the coefficient function of an element of~$\L_d$ if and only
if the support of~$\varphi$ is well-ordered for the order~$\prec$ (that is to say every subset
of the support of~${\varphi}$ has a least element).
The \emph{valuation} of a non zero~$f \in \L_d$, denoted~$v(f)$, is the smallest element
of~$\mop{supp}(f) \subset \Z^d$ for the ordering~$\prec$.  Since~$\mop{supp}(f)$ is
well-ordered, $v(f)$ does exist.  The \emph{leading monomial} of~$f$,
denoted~$\lm(f)$, is~$\ul z^{v(f)}$; it is the largest monomial that appears
in~$f$.

For~$1\leq i \leq d$, the partial derivative~$\partial_i =
\partial/\partial z_i$ with respect to the variable~$z_i$, defined for
rational functions, extends to a derivation in~$\L_d$ such that~$[\ul z^{\ul
n}] \partial_i f = (n_i+1) [\ul z^{\ul n}](f/z_i)$ for any~$f\in \L_d$.

Let~$w_1,\dotsc,w_d$ be monomials in the variables~$z_1,\dotsc,z_e$. When~$f$ is a
rational function, the substitution~$f(w_1,\dotsc,w_d)$ is defined in a
simple way.  For elements of~$\L_d$, it is slightly more technical.  Let~$\varphi : \Z^d
\to \Z^e$ be the additive map defined by~$\varphi(\ul n) = v(\ul w^{\ul n})$ (recall that $v$ denotes the valuation).
Conversely, this map entirely determines the monomials~$w_1,\dotsc,w_d$.
When~$\varphi$ is strictly increasing (and thus injective) with
respect to the lexicographic ordering,
we define~$f^\varphi$ as the unique element of~$\L_e$ such
that
\[ [\ul z^{\ul n}] f^{\varphi} = 
\begin{cases}
  [\ul z^{\varphi^{-1}(\ul n)}] f & \text{if $\ul n\in \varphi(\Z^d)$}, \\
  0 & \text{otherwise.}
\end{cases}
\]
The map~$f\in\L_d \mapsto f^\varphi\in\L_e$ is a field morphism.
In particular, for a monomial~$\ul z^{\ul n}$, we check that~$\left( \ul z^{\ul n} \right)^\varphi = \ul z^{\varphi(\ul n)}$.
When~$f$ is a rational function, $f^\varphi$ coincides with the usual
substitution~$f(w_{1}, \ldots, w_d)$.

Another important construction is the sum of geometric sequences.  Let~$f\in\L_d$ be a
Laurent power series with~$\lm(f) \prec 1$.  The set of all~$\ul n\in \Z^d$
such that there is at least one~$k\in\N$ such that~$[\ul z^{\ul n}] f^k \neq 0$ is well-ordered \parencite[Theorem~3.4]{Neu49}.  Moreover, for any~$\ul n\in \Z^d$, the
coefficient~$[\ul z^{\ul n}] f^k$ vanishes for all but finitely
many~$k\in\N$ \parencite[Theorem~3.5]{Neu49}.
The following result is easily deduced.
\begin{lemma}
  \label{lem:geomsum}
    Let~$f\in\L_d$ and let~$\ul z^{\ul n}$ be a monomial. If $\lm(f)\prec 1$,
  then~$[\ul z^{\ul n}] f^k = 0$ for all but finitely many~$k\in\N$ and
  moreover
  \[ [\ul z^{\ul n}] \frac{1}{1-f} = \sum_{k\geq 0} [\ul z^{\ul n}] f^k. \]
\end{lemma}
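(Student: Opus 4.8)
The first assertion is nothing but \parencite[Theorem~3.5]{Neu49}, recalled just above; the real content is the series identity, and the plan is to produce the element $g\in\L_d$ whose coefficient function is $\ul n\mapsto\sum_{k\geq 0}[\ul z^{\ul n}]f^k$ and then to check that $(1-f)g=1$. Since $\L_d$ is a field and $\lm(f)\prec 1$ gives $\lm(1-f)=1$, so that $1-f\neq 0$, once $(1-f)g=1$ is established I may divide and read off $[\ul z^{\ul n}](1-f)^{-1}=[\ul z^{\ul n}]g=\sum_{k\geq 0}[\ul z^{\ul n}]f^k$, which is exactly the claim.

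First I would check that $g$ is a legitimate element of $\L_d$. For each fixed $\ul n$ the defining sum $\sum_{k\geq 0}[\ul z^{\ul n}]f^k$ is finite by \parencite[Theorem~3.5]{Neu49}, so the coefficient function is well defined, and its support is contained in $W\eqdef\{\ul m : [\ul z^{\ul m}]f^k\neq 0\text{ for some }k\}$, which is well-ordered by \parencite[Theorem~3.4]{Neu49}. By the characterization recalled above (a function $\Z^d\to\K$ is a coefficient function of an element of $\L_d$ if and only if its support is well-ordered), $g$ indeed belongs to $\L_d$.

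The heart of the proof is then the coefficientwise identity $[\ul z^{\ul n}](fg)=\sum_{k\geq 1}[\ul z^{\ul n}]f^k$, from which $[\ul z^{\ul n}]\big((1-f)g\big)=[\ul z^{\ul n}]g-[\ul z^{\ul n}](fg)=[\ul z^{\ul n}]f^0=[\ul z^{\ul n}]1$ follows for every $\ul n$, whence $(1-f)g=1$. To prove it I would expand the product as a finite convolution (products being defined in $\L_d$), $[\ul z^{\ul n}](fg)=\sum_{\ul a+\ul b=\ul n}[\ul z^{\ul a}]f\,[\ul z^{\ul b}]g$, substitute $[\ul z^{\ul b}]g=\sum_k[\ul z^{\ul b}]f^k$, and interchange the convolution with the sum over $k$ to obtain $\sum_k[\ul z^{\ul n}](f\cdot f^k)=\sum_k[\ul z^{\ul n}]f^{k+1}$.

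The only delicate point, and the step I expect to be the real obstacle, is justifying this interchange; it is here, and not in a naive valuation estimate, that Neumann's results are genuinely needed. Indeed, because $\prec$ is only the lexicographic, hence non-archimedean, order, $v(f^k)=k\,v(f)$ need not eventually dominate a fixed $\ul n$, so one cannot argue that $[\ul z^{\ul n}]f^k$ vanishes for large $k$ by valuation alone. Instead I would observe that the index set $\{(\ul a,k):\ul a\in\mop{supp}(f),\ \ul n-\ul a\in\mop{supp}(f^k)\}$ is finite: both $\ul a$ and $\ul n-\ul a$ lie in the well-ordered set $W$, so there are finitely many admissible $\ul a$ (the finiteness of the representations $\ul n=\ul a+\ul b$ with $\ul a,\ul b\in W$ is Neumann's lemma, the same fact that makes multiplication in $\L_d$ well defined), and for each such $\ul a$ only finitely many $k$ satisfy $\ul n-\ul a\in\mop{supp}(f^k)$ by \parencite[Theorem~3.5]{Neu49}. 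With this finite index set the double sum has only finitely many nonzero terms, the interchange becomes a trivial rearrangement, and the telescoping $[\ul z^{\ul n}]g-[\ul z^{\ul n}](fg)=[\ul z^{\ul n}]1$ closes the argument.
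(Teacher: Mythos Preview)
Your argument is correct. The paper does not actually give a proof of this lemma: it recalls the two Neumann results and then simply states that ``the following result is easily deduced,'' so there is nothing to compare your approach against beyond that remark. Your construction of $g$ and verification that $(1-f)g=1$ is exactly the natural way to supply the omitted details, and your care about the interchange of sums---noting that the lexicographic order is non-archimedean so a naive valuation bound fails, and instead invoking the finiteness of decompositions in well-ordered supports together with \parencite[Theorem~3.5]{Neu49}---is both correct and well placed.
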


In what follows, there will be variables~$z_{1},\dotsc,z_d$, denoted~$z_{1\smile d}$ (and sometimes under
different names) and we will denote~$z_1\prec\dotsb\prec z_d$ the fact that we
consider the field~$\L_d$ with this ordering to define coefficients, residues,
etc.  The variables are always ordered by increasing index, and~$t_{1\smile
d} \prec z_{1\smile e}$ denotes~$t_1\prec\dotsb\prec t_d\prec z_1
\prec\dotsb\prec z_e$.  An element of~$\L_d\cap \K(z_{1\smile d})$ is called a
\emph{rational Laurent series}, and an element of~$\K\llbracket z_{1\smile d}
\rrbracket \cap \K(z_{1\smile d})$ is called a \emph{rational power series}.

\begin{example}
Since~$\L_d$ is a field containing all the rational functions in the variables~$z_{1},\dotsc,z_d$,
one may define the \emph{coefficient of a monomial in a rational function}.
However, it strongly depends on the ordering of the variables.
For example, in~$\L_2 = \K\slf{z_2}\slf{z_1}$, the coefficient of~$1$ in~$z_2/(z_1+z_2)$ is~$1$ because
\[ \frac{z_2}{z_1+z_2} = 1 - \frac{1}{z_2} z_1 + \mathcal O(z_1^2), \]
whereas the coefficient of~$1$ in~$z_1/(z_1+z_2)$ is~$0$ because
\[ \frac{z_1}{z_1+z_2} = \frac{1}{z_2}z_1 + \mathcal O(z_1^2). \]
\end{example}

\subsection{Binomial sums and coefficients of rational functions}
\label{sec:gfbs}

Binomial sums can be obtained as certain extractions of coefficients of
rational functions.  We will make use of sequences of the form $u : \ul
n\in\Z^d \mapsto [1]\left( R_0 R_1^{n_1}\dotsm R_d^{n_d} \right)$,
where~$R_{0\smile d}$ are rational functions of ordered variables~$z_{1\smile
r}$.  They generalize several notions.  For example if~$R_0 \in
\K(z_1,\dotsc,z_r)$ and~$R_i = 1/z_i$, then~$u$ is simply the coefficient
function of~$R_0$; and if~$d=1$, $R_0 \in \K(z_1,\dotsc,z_r)$ and~$R_1 =
1/(z_1\dotsm z_r)$, then~$u$ is the sequence of the diagonal coefficients
of~$R_0$.

\begin{theorem}\label{prop:bs-ct}
  Every binomial sum is a linear combination of finitely many sequences of the form
  $\ul n\in\Z^d \mapsto [1]\left( R_0 R_1^{n_1}\dotsm R_d^{n_d} \right)$,
  where~$R_{0\smile d}$ are rational functions of ordered variables~$z_{1\smile r}$, for some~$r\in \N$.
\end{theorem}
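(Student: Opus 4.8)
The plan is to show that the class $\mathcal C$ of all finite $\K$-linear combinations of sequences of the stated form
\[ \ul n\in\Z^d\mapsto [1]\bigl(R_0 R_1^{n_1}\dotsm R_d^{n_d}\bigr),\qquad R_1,\dotsc,R_d\in\K(z_{1\smile r})\setminus\{0\},\ R_0\in\K(z_{1\smile r}), \]
is a $\K$-subalgebra of $\cS$ that contains the three generators of Definition~\ref{def:bs} and is stable under rules~\ref{def:it:lambda} and~\ref{def:it:sum}. Since $\cB$ is the smallest such subalgebra, this yields $\cB\subseteq\mathcal C$, which is exactly the claim. (For fixed $\ul n$ the argument $R_0R_1^{n_1}\dotsm R_d^{n_d}$ is a well-defined element of $\L_r$, so each such sequence indeed lies in $\cS$.)

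For the generators I would use the identity $[\ul z^{\ul k}]f=[1](\ul z^{-\ul k}f)$ valid for $f\in\L_r$. Thus $\delta_n=[1](z^n)$; the geometric sequence is $C^n=[1](C^n)$ with $R_0=1$ and $R_1=C$ a nonzero constant; and, since $\binom nk=[z^k](1+z)^n$ in $\K\slf z$, also $\binom nk=[1]\bigl((1+z)^n z^{-k}\bigr)$, i.e.\ $R_1=1+z$, $R_2=1/z$. Stability under linear combinations is built into the definition of $\mathcal C$. For rule~\ref{def:it:lambda} I would substitute an affine $\lambda(\ul n)=A\ul n+\ul b$ into the exponents: the factor $\prod_j R_j^{\lambda_j(\ul n)}$ rewrites as $\bigl(\prod_j R_j^{b_j}\bigr)\prod_i\bigl(\prod_j R_j^{A_{ji}}\bigr)^{n_i}$, again of the required shape after absorbing $\prod_j R_j^{b_j}$ into $R_0$. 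For products I would place the two factors on disjoint variable blocks $z_{1\smile r}$ and $z_{r+1\smile r+s}$, pad the shorter index tuple with trivial bases $R_i=1$, and use that coefficient extraction factorizes over independent blocks, $[1]_{r+s}(fg)=[1]_r(f)\,[1]_s(g)$ when $f,g$ involve disjoint variables; this turns the pointwise product into a single $[1]$ of $T_0\prod_i T_i^{n_i}$ with $T_0=R_0S_0$ and $T_i=R_iS_i$.

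The heart of the proof is rule~\ref{def:it:sum}. Treating one term at a time, write $u_{\ul n,k}=[1](g\,R^k)$ with $g=R_0\prod_i R_i^{n_i}$ and $R=R_{d+1}$. When $R\neq 1$, the geometric summation formula for directed sums gives, inside the field $\L_r$, the identity $\ssum_{k=0}^m R^k=\tfrac{1-R^{m+1}}{1-R}$, and since $[1]$ is $\K$-linear and the directed sum is finite I may interchange them to obtain
\[ \ssum_{k=0}^m u_{\ul n,k}=[1]\!\left(\frac{R_0}{1-R}\prod_i R_i^{n_i}\right)-[1]\!\left(\frac{R_0R}{1-R}\prod_i R_i^{n_i}\,R^{m}\right). \]
Both summands are of the required form in the $(d{+}1)$ indices $(\ul n,m)$ — the first with trivial base $1^m$, the second with base $R^m$ — because $\tfrac{1}{1-R}$ is a genuine rational function as soon as $R\neq 1$.

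The main obstacle is the degenerate case $R=1$, where the formula above is unavailable: then $u_{\ul n,k}=w_{\ul n}$ is independent of $k$ and $\ssum_{k=0}^m w_{\ul n}=(m+1)\,w_{\ul n}$. Here I would invoke the already-established closure properties: $w\in\mathcal C$, the sequence $m\mapsto m=\binom m1=[1]\bigl(z^{-1}(1+z)^m\bigr)$ lies in $\mathcal C$, hence so does $m\mapsto m+1$, and multiplicativity over disjoint variables makes $(\ul n,m)\mapsto (m+1)w_{\ul n}$ a sequence of the required form. The only points demanding genuine care are the factorization $[1]_{r+s}(fg)=[1]_r(f)[1]_s(g)$ over disjoint variable blocks, which rests on the compatible embeddings $\L_r,\L_s\hookrightarrow\L_{r+s}$ and on the well-ordered support of the product series, and the licitness of interchanging the directed sum with coefficient extraction; both are straightforward once the field-theoretic setup of \S\ref{sec:laurent} is in place.
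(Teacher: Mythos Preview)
Your proposal is correct and follows essentially the same approach as the paper: the same representations for the three generators, the same use of disjoint variable blocks for products, the same exponent rearrangement for affine substitutions, and the same geometric-sum splitting for rule~\ref{def:it:sum} when $R\neq 1$. The only cosmetic difference is in the degenerate case $R=1$: the paper writes down directly the single-term representation $[1]\bigl(\tfrac{(1+v)R_0}{v}(1+v)^m\prod_i R_i^{n_i}\bigr)$ in a fresh variable~$v$, whereas you obtain the same thing by invoking the already-established closure under products together with $m+1=[1]\bigl(\tfrac{1+z}{z}(1+z)^m\bigr)$; unwinding your argument recovers the paper's formula verbatim.
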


\begin{proof}
It is clear that~$\delta_n = [1] z^n$, that~$C^n = [1] C^n$ and
that~$\binom{n}{k}=[1](1+z)^n z^{-k}$, for all~$n,k\in\Z$.  Thus, it is enough
to prove that the vector space generated by the sequences of the
form~$[1]\left( R_0 R_1^{n_1}\dotsm R_d^{n_d} \right)$ is a subalgebra of~$\cS$
which is closed by the rules defining binomial sums, see~\S\ref{sec:def}.

First, it is closed under product: if~$R_{0\smile d}$ and~$R'_{0\smile d}$ are rational functions in the variables~$z_{1\smile r}$ and~$z'_{1\smile r'}$ respectively, then 
\[ [1]\left( R_0 \prod_{i=1}^d R_i^{n_i} \right) [1]\left( R'_0 \prod_{i=1}^d {R'_i}^{n_i} \right) = [1]\left( R_0 R'_0 \prod_{i=1}^d (R_i R'_i)^{n_i} \right), \]
with the ordering~$z_{1\smile r} \prec z'_{1\smile\smash{r'}}$,
for example.
Then, to prove the closure under change of variables, it is enough to reorder the factors:
\[ [1]\left( R_0 \prod_{i=1}^d R_i^{\sum_{j=1}^e a_{ij}n_j + b_i} \right)
= [1]\left( \left( R_0 \prod_{i=1}^d R_i^{b_i} \right) \prod_{j=1}^e \left( \prod_{i=1}^d R_i^{a_{ij}} \right)^{n_j}  \right). \]

Only the closure under partial sum remains.
Let~$u$ be a sequence of the form
\[ (\ul n, k) \in \Z^{d}\times\Z \mapsto [1]\left( T^k R_0 \prod_{i=1}^{d} R_i^{n_i} \right), \]
where~$T$ and~$R_{0\smile d}$ are rational functions.
If~$T=1$, then
\[ \ssum_{k=0}^m u_{\ul n,k} = [1]\left( (m+1) R_0 \prod_{i=1}^{d}  R_i^{n_i} \right) = [1]\left( \frac{(1+v)R_0}{v} (1+v)^{m} \prod_{i=1}^{d}  R_i^{n_i} \right), \]
where~$v$ is a new variable, because~$[v](1+v)^{m+1} = m+1$.
If~$T\neq 1$, then
\[ \ssum_{k=0}^m u_{\ul n,k} = [1]\left( \frac{R_0}{1-T} \prod_{i=1}^{d} R_i^{n_i} \right)
- [1]\left( \frac{R_0 T}{1-T} T^{m} \prod_{i=1}^{d} R_i^{n_i} \right), \]
which concludes the proof.
\end{proof}

%
%
The previous proof is algorithmic and Algorithm~\ref{algo:sumtoct} p.~\pageref{algo:sumtoct}
proposes a variant which is more suitable for actual computations---see~\S\ref{sec:gfalgos}.
It is a straightforward rewriting
procedure and often uses more terms and more variables than necessary.  The
geometric reduction procedure, see~\S\ref{sec:geomred}, handles this issue.

\begin{example}\label{ex:sumtoct}
  Following step by step the proof of Theorem~\ref{prop:bs-ct}, we obtain that
  \begin{multline*}
    \sum_{k=0}^n \binom{n}{k}\binom{n+k}{k} = 
    [1] \frac{1+z_2}{1+z_1-z_1z_2} \left( \frac{(1+z_1)(1+z_2)^2}{z_1 z_2} \right)^n \\- [1] \frac{z_1z_2}{1+z_1-z_1z_2} \left( (1+z_1)(1-z_2)  \right)^n. 
  \end{multline*}
  We notice that the second term is always zero.
\end{example}

\begin{corollary}\label{cor:bs-coeff}
  For any binomial sum~$u : \Z^d\to \K$, there exist $r\in\N$ and a rational
  function~$R(t_{1\smile d}, z_{1\smile r})$, with~$t_{1\smile d}\prec
  z_{1\smile r}$, such that for any~$\ul n\in\Z^d$,
  \[ [\ul t^{\ul n} \ul z^{\ul 0}] R =
  \begin{cases}
    u_{\ul n} & \text{if } n_1,\dotsc,n_d \geq 0,\\
    0 & \text{otherwise.}
  \end{cases} \]
\end{corollary}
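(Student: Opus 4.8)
The plan is to bootstrap directly from Theorem~\ref{prop:bs-ct}, which already writes any binomial sum~$u$ as a finite linear combination $u_{\ul n} = \sum_j c_j\,[1]\bigl(R_0^{(j)}\prod_{i=1}^d (R_i^{(j)})^{n_i}\bigr)$ of coefficient extractions of powers of rational functions. Since both the bracket and the desired conclusion are $\K$-linear in~$u$, it suffices to produce, for a single term $u^{(j)}_{\ul n} = [1]\bigl(R_0\prod_i R_i^{n_i}\bigr)$ (dropping the superscript~$j$), one rational function whose coefficient reproduces the term for $\ul n \geq \ul 0$ and vanishes otherwise; the pieces are then reassembled using disjoint copies of the auxiliary $z$-variables so that a single $R$ handles the whole combination.

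The key construction for one term is to introduce fresh variables $t_1,\dots,t_d$, placed \emph{below} all the~$z$'s in the ordering ($t_{1\smile d}\prec z_{1\smile r}$), and to set
\[ R \eqdef R_0 \prod_{i=1}^d \frac{1}{1 - t_i R_i}. \]
First I would check that $\lm(t_iR_i)\prec 1$: the leading monomial of $t_iR_i$ is $t_i\,\ul z^{v(R_i)}$, and because $t_i$ sits strictly below every~$z_j$ in the lexicographic ordering, the first nonzero exponent of this monomial (read starting from~$t_1$) is the positive exponent of~$t_i$, so it is~$\prec 1$. Lemma~\ref{lem:geomsum} then licenses the geometric expansion $\tfrac{1}{1-t_iR_i} = \sum_{k_i\geq 0} t_i^{k_i} R_i^{k_i}$ in~$\L_{d+r}$, whence
\[ R = \sum_{k_1,\dots,k_d\geq 0} t_1^{k_1}\cdots t_d^{k_d}\,R_0\,R_1^{k_1}\cdots R_d^{k_d}. \]
Since the~$R_i$ involve no~$t$-variable, reading off the coefficient of~$\ul t^{\ul n}$ gives $R_0\prod_i R_i^{n_i}$ when all $n_i\geq 0$ and~$0$ as soon as some $n_i<0$ (no negative power of~$t_i$ can occur); applying $[\ul z^{\ul 0}]$ then returns $[1]\bigl(R_0\prod_i R_i^{n_i}\bigr) = u^{(j)}_{\ul n}$ in the first case and~$0$ in the second, exactly as required.

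To finish, I would recombine the terms. Each~$R^{(j)}$ is a rational function of $t_{1\smile d}$ and its own block~$z^{(j)}$ of variables; taking these blocks pairwise disjoint and concatenating their orderings after the~$t$'s yields one field~$\L_{d+r}$ with $r=\sum_j r_j$ into which every~$R^{(j)}$ embeds without changing its coefficients, because a rational function is insensitive to adjoining variables it does not contain. Then $R\eqdef\sum_j c_j R^{(j)}$ is rational, and linearity of the bracket gives $[\ul t^{\ul n}\ul z^{\ul 0}]R = \sum_j c_j u^{(j)}_{\ul n}$ for $\ul n\geq\ul 0$ and~$0$ otherwise, which is $u_{\ul n}$ on the nonnegative orthant. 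The only genuinely delicate point is the placement of the~$t$-variables at the bottom of the ordering: this is what simultaneously makes the geometric series valid in~$\L_{d+r}$ and forces the negative-index coefficients to vanish, so I would state the condition $t_{1\smile d}\prec z_{1\smile r}$ explicitly and verify the leading-monomial inequality with care.
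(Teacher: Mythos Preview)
Your proof is correct and follows essentially the same route as the paper: reduce by linearity to a single term from Theorem~\ref{prop:bs-ct}, form $R = R_0\prod_i (1-t_iR_i)^{-1}$ with the new variables~$t_i$ placed below the~$z$'s, and expand geometrically via Lemma~\ref{lem:geomsum} to read off the coefficient. Your explicit verification that~$\lm(t_iR_i)\prec 1$ and your recombination via disjoint $z$-blocks are more careful than the paper's brief ``it is enough to prove the claim for a generating set,'' but the substance is identical.
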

(See Proposition~\ref{converse-coro} below for a converse.)
\begin{proof}
  It is enough to prove the claim for a generating set of the vector space of
  all binomial sums.  In accordance with
  Theorem~\ref{prop:bs-ct}, let~$u : \Z^d \to \K$ be a sequence 
  of the form
  $\ul n\in\Z^d \mapsto [1]\left( S_0 S_1^{n_1}\dotsm S_d^{n_d} \right)$,
  where $S_{0\smile d}$ are rational functions of ordered variables~$z_{1\smile r}$.
  Let~$R(t_{1\smile d}, z_{1\smile r})$ be the rational function
  \[ R = S_0 \prod_{i=1}^d \frac{1}{1-t_i S_i}. \]
  Lemma~\ref{lem:geomsum} implies that for any~$\ul n\in \Z^d$
  \[ [\ul t^{\ul n} \ul z^{\ul 0}] R = \sum_{k_1,\dotsc,k_d \geq 0} [\ul t^{\ul n}  \ul z^{\ul 0}]\left( S_0 \cdot (t_1 S_1)^{k_1}\dotsm (t_d S_d)^{k_d}  \right). \]
  Since the variables~$t_i$ do not appear in the~$S_i$'s, the coefficient in the sum is not zero only if~$\ul n = \ul k$.
  In particular, if~$\ul n$ is not in~$\N^d$, then~$[\ul t^{\ul n} \ul z^{\ul 0}] R = 0$.
  And if~$\ul n\in\N^d$, then
  \[ [\ul t^{\ul n} \ul z^{\ul 0}] R =  [\ul t^{\ul n}  \ul z^{\ul 0}]\left( S_0 \cdot (t_1 S_1)^{n_1}\dotsm (t_d S_d)^{n_d}  \right) =  [1]\left( S_0 S_1^{n_1}\dotsm S_d^{n_d} \right) = u_{\ul n}, \]
  which concludes the proof.
\end{proof}


\subsection{Residues}\label{sec:residues}
The notion of residue makes it possible to represent the full generating function of a binomial sum.
It is a key step toward their computation.
For~$f \in \L_d$ and~$1\leq i \leq d$, the \emph{formal residue} of~$f$ with
respect to~$z_i$, denoted~$\res_{z_i} f$, is the unique element of~$\L_d$ such
that
\[ [\ul z^{\ul n}](\res_{z_i} f) = 
\begin{cases}
  [\ul z^{\ul n}]( z_i f) & \text{if $n_i=0$}, \\
  0 & \text{otherwise.}
\end{cases}
\]
This is somehow the coefficient of~$1/z_i$ in~$f$, considered as a power series
in~$z_i$ (though~$f$ is not a Laurent series in~$z_i$ since it may contain
infinitely many negative powers of~$z_i$).  When~$f$ is a rational function, care should
be taken not to confuse the formal residue with the \emph{classical residue}
at~$z_i=0$, which is the coefficient of~$1/z_i$ in the partial fraction
decomposition of~$f$ with respect to~$z_i$.  However, the former can be expressed in terms of
classical residues, see~\S\ref{sec:geomred}, and like the classical residues, it
vanishes on derivatives:
\begin{lemma}
  $\res_{z_i} \partial_i f = 0$ for all~$f\in\L_d$.
  \label{lem:resder}
\end{lemma}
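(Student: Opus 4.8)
The plan is to verify directly that every coefficient of $\res_{z_i}(\partial_i f)$ vanishes, since an element of $\L_d$ is entirely determined by its coefficient function. By the definition of the formal residue, $[\ul z^{\ul n}](\res_{z_i}(\partial_i f))$ is automatically zero whenever $n_i \neq 0$; so it suffices to show that $[\ul z^{\ul n}](z_i\,\partial_i f) = 0$ for every $\ul n$ with $n_i = 0$.

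First I would rewrite this coefficient using the fact that multiplication by $z_i$ shifts the $i$-th exponent: writing $\ul n'$ for the multi-index obtained from $\ul n$ by decreasing its $i$-th entry by one, one has $[\ul z^{\ul n}](z_i\,\partial_i f) = [\ul z^{\ul n'}](\partial_i f)$. Next I would apply the coefficient formula for the derivation $\partial_i$ recorded above, namely $[\ul z^{\ul m}](\partial_i f) = (m_i+1)\,[\ul z^{\ul m}](f/z_i)$, taken with $\ul m = \ul n'$. Since the $i$-th coordinate of $\ul n'$ equals $n_i - 1$, the prefactor is $(n_i - 1) + 1 = n_i$, giving $[\ul z^{\ul n'}](\partial_i f) = n_i\,[\ul z^{\ul n'}](f/z_i)$.

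Finally, specializing to $n_i = 0$ makes this prefactor vanish, so $[\ul z^{\ul n}](z_i\,\partial_i f) = 0$ for all $\ul n$ with $n_i = 0$. Combined with the automatic vanishing when $n_i \neq 0$, every coefficient of $\res_{z_i}(\partial_i f)$ is zero, whence $\res_{z_i}(\partial_i f) = 0$. There is essentially no obstacle here: the argument is a bookkeeping of the two defining coefficient formulas, and the only point demanding care is the single index shift induced by $z_i$ and the resulting alignment of the prefactor $(m_i+1)$ with the residue's selection of the slice $n_i = 0$, which is precisely what forces the prefactor to be $n_i$ and hence to vanish.
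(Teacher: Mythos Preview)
Your proof is correct: the direct coefficient computation using the two defining formulas is exactly the right verification, and the index bookkeeping is accurate. The paper itself states this lemma without proof, treating it as immediate from the definitions of $\res_{z_i}$ and $\partial_i$; your argument spells out precisely this routine check.
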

If~$\alpha$ is a set of variables~$\left\{ z_{i_1},\dotsc,z_{i_r} \right\}$,
let~$\res_\alpha f$ denote the iterated residue $\res_{z_{i_1}}\dotsm\res_{z_{i_r}} f$.
It is easily checked that this definition does not depend on the order in which the variables appear.
This implies, together with Lemma~\ref{lem:resder}, the following lemma:
\begin{lemma}
  $\res_\alpha\left( \sum_{v\in \alpha} \partial_v f_v \right) = 0$, for any family~$(f_v)_{v\in\alpha}$ of elements of~$\L_d$.
  \label{lem:resders}
\end{lemma}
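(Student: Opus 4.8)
The plan is to reduce everything to the single-variable statement already established in Lemma~\ref{lem:resder}, exploiting two structural facts about iterated residues: that each~$\res_{z_i}$ is~$\K$-linear, and that~$\res_\alpha$ does not depend on the order in which the individual residues are applied (as asserted just before the statement). First I would invoke linearity of~$\res_\alpha$, which follows from the coefficient-wise definition of each~$\res_{z_i}$ since~$[\ul z^{\ul n}](\res_{z_i} f)$ is in either case a~$\K$-linear function of~$f$, to write
\[ \res_\alpha\left( \sum_{v\in\alpha} \partial_v f_v \right) = \sum_{v\in\alpha} \res_\alpha\left( \partial_v f_v \right). \]
It therefore suffices to prove that~$\res_\alpha(\partial_v f_v) = 0$ for each fixed~$v\in\alpha$.

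For a fixed~$v\in\alpha$, I would use the order-independence of the iterated residue to bring~$\res_v$ innermost, writing~$\res_\alpha = \res_{\alpha\setminus\{v\}}\circ\res_v$ as operators on~$\L_d$. Applying this to~$\partial_v f_v$ and invoking Lemma~\ref{lem:resder}, which gives~$\res_v(\partial_v f_v) = 0$, the inner residue already annihilates the term; the remaining residues~$\res_{\alpha\setminus\{v\}}$, being linear, send~$0$ to~$0$. Summing over~$v\in\alpha$ then yields the claim. Everything in this reduction is bookkeeping once the two structural facts are in hand.

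The only point requiring genuine care---and the step I would treat as the main obstacle---is the commutation of the residue operators, namely the licence to compute~$\res_\alpha$ by applying~$\res_v$ first. This is exactly the order-independence stated for the iterated residue: it rests on the observation that each~$\res_{z_i}$ acts by prescribing, coefficient by coefficient, which coefficients of~$z_i f$ to retain and which to set to zero, and that the prescriptions attached to distinct variables constrain disjoint index slots and hence commute. Granting that, the remainder of the argument is pure linearity.
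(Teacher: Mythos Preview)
Your proof is correct and follows exactly the route the paper indicates: the paper gives no explicit proof but simply says the lemma follows from the order-independence of the iterated residue together with Lemma~\ref{lem:resder}, and your argument spells out precisely that reduction (linearity to split the sum, reorder to bring~$\res_v$ innermost, apply Lemma~\ref{lem:resder}).
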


The following lemma will also be useful:
\begin{lemma}
  Let~$\alpha,\beta \subset \left\{ z_{1\smile d} \right\}$ be disjoint sets of variables.
  If~$f\in \L_d$ does not depend on the variables in~$\beta$ and if~$g\in\L_d$ does not depend on the variables in~$\alpha$, then
  \[ \big( \res_\alpha f \big)\big( \res_\beta g \big) = \res_{\alpha\cup\beta}(fg). \]
  \label{lem:ressepvar}
\end{lemma}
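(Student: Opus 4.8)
The plan is to prove the factorization
\[ \big(\res_\alpha f\big)\big(\res_\beta g\big) = \res_{\alpha\cup\beta}(fg) \]
by working at the level of coefficient functions, since every element of $\L_d$ is characterized by $\ul n\mapsto [\ul z^{\ul n}]$ of it (as recalled after the definition of $\L_d$). First I would reduce the statement to the single-variable residue by iterating. Write $\alpha=\{z_{i_1},\dotsc,z_{i_r}\}$ and $\beta=\{z_{j_1},\dotsc,z_{j_s}\}$. Since iterated residues do not depend on the order of the variables (as noted just before Lemma~\ref{lem:ressepvar}), and since $\res_{\alpha\cup\beta}=\res_\alpha\res_\beta$, it suffices to show two things: that $\res_{z_j}$ commutes past a factor that does not depend on $z_j$, i.e.\ $\res_{z_j}(f\cdot g)=f\cdot\res_{z_j}g$ whenever $f$ does not depend on $z_j$; and then to apply this repeatedly, peeling off one variable of $\beta$ at a time from inside $\res_\beta$, and symmetrically for $\alpha$.

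The core lemma to establish is therefore: if $f\in\L_d$ does not depend on $z_i$ and $g\in\L_d$ is arbitrary, then $\res_{z_i}(fg)=f\,\res_{z_i}(g)$. I would verify this on coefficients. By definition, $[\ul z^{\ul n}]\res_{z_i}(fg)$ equals $[\ul z^{\ul n}](z_i fg)$ when $n_i=0$ and $0$ otherwise. Expanding the product in $\L_d$, the coefficient $[\ul z^{\ul n}](z_i fg)$ is a convolution sum over the supports of $f$ and $g$; because $f$ does not depend on $z_i$, only the $z_i$-exponent $0$ contributes from $f$, so the $z_i$-part of the convolution sees only the factor $g$. Matching this against the coefficient expansion of $f\,\res_{z_i}(g)$, whose $\ul n$-coefficient is a convolution of $f$ with $\res_{z_i}(g)$ and hence only picks up $z_i$-exponent $0$ from the latter, shows the two coefficient functions agree for every $\ul n$. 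Summability of these convolutions is guaranteed because supports in $\L_d$ are well-ordered for $\prec$, so the products are well-defined in $\L_d$ and the manipulations are legitimate.

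With the core lemma in hand, the conclusion follows by a clean induction. Starting from $\res_{\alpha\cup\beta}(fg)=\res_\alpha\big(\res_\beta(fg)\big)$, I first apply the core lemma inside $\res_\beta$: since $f$ does not depend on any variable in $\beta$, we may pull $f$ outside each of the single-variable residues $\res_{z_{j_1}},\dotsc,\res_{z_{j_s}}$ in turn, obtaining $\res_\beta(fg)=f\,\res_\beta(g)$. Then $\res_{\alpha\cup\beta}(fg)=\res_\alpha\big(f\,\res_\beta g\big)$, and now $\res_\beta g$ does not depend on the variables in $\alpha$ (residues only affect the variables with respect to which they are taken, and $g$ is independent of $\alpha$), so applying the core lemma again—this time with the roles reversed, pulling the $\alpha$-independent factor $\res_\beta g$ out of $\res_\alpha$—yields $\big(\res_\alpha f\big)\big(\res_\beta g\big)$, which is the claim.

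I expect the only genuine obstacle to be the bookkeeping in the core lemma: one must be careful that the product $fg$ and the various convolution sums are well-defined in $\L_d$ and that the factorization of coefficients respects the iterated Laurent structure, i.e.\ that ``$f$ does not depend on $z_i$'' really forces the $z_i$-exponent from $f$ to be $0$ in every term of the convolution. Once that single commutation fact is pinned down at the level of coefficients, the rest is a formal iteration using the order-independence of iterated residues and causes no difficulty.
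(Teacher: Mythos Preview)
Your proposal is correct and follows exactly the route the paper takes: write $\res_{\alpha\cup\beta}=\res_\alpha\res_\beta$, pull $f$ out of $\res_\beta$ since $f$ is independent of $\beta$, then pull $\res_\beta g$ out of $\res_\alpha$ since it is independent of $\alpha$. The paper's proof is the one-line chain $\res_{\alpha\cup\beta}(fg)=\res_\alpha\res_\beta(fg)=\res_\alpha(f\,\res_\beta g)=(\res_\alpha f)(\res_\beta g)$; you have simply unpacked the commutation step $\res_{z_i}(fg)=f\,\res_{z_i}g$ at the level of coefficient functions, which the paper treats as immediate from the definition.
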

\begin{proof}
  $\res_{\alpha\cup\beta} fg = \res_\alpha \res_\beta fg = \res_\alpha\big( f \res_\beta g\big) = \big( \res_\alpha f \big)\big( \res_\beta g \big)$.
\end{proof}

The order on the variables matters.  For example, let~$F(x,y)$ be a
rational function, with~$x\prec y$.  The residue~$\res_x F$ is a rational
function of~$y$. Indeed, if~$-n$ is the exponent of~$x$ in~$\lm(F)$, which we
assume to be negative, then
\[ \res_x F = \frac{1}{(n-1)!} \left. \left(\frac{\partial^{n-1}}{\partial x^{n-1}} x^n F \right) \right|_{x= 0}. \]
In contrast, the residue with respect to~$y$ (or any other variable which is
not the smallest) is not, in general, a rational function.  It is an important point
 because we will represent generating series of binomial
sums---which need be neither rational nor algebraic---as residues of
rational functions.

\begin{example}
  Let~$F = \frac{1}{xy(y^2+y-x)}$.
  Assume that~$y \prec x$, that is to say, we work in the field~$\K\slf{x}\slf{y}$.
  We compute that~$\res_y F = -\frac{1}{x^2}$
  because
  \[ F = -\frac{1}{x^2 y} + \frac{(y+1)}{x^2 (y^2+y-x)} \]
  and the second term does not contain any negative power of~$y$.

  On the other hand, if we assume that~$x \prec y$, then~$\res_y F$ is not a rational function anymore.
  Using Proposition~\ref{prop:geomred}, we can compute that
  \[ \res_y F = -\frac{1}{x^2} + \frac{2}{x + 4x^2 - x(1+4 x)^{1/2}} = -\frac1x + 3 - 10x + 35 x^2 + \mathcal O(x^3). \]
\end{example}

Residues of rational functions can be used to represent any binomial sum; it is
the main point of the method and it is the last corollary of
Theorem~\ref{prop:bs-ct}.  Following Egorychev we call them \emph{integral
representations}, or \emph{formal} integral representations, to emphasize the
use of formal power series and residues rather than of analytic objects.

\begin{corollary}
  [Integral representations]
  \label{coro:bs-res}
  For any binomial sum~$u : \Z^d\to \K$, there exist~$r\in\N$ and a rational
  function~$R(t_{1\smile d}, z_{1\smile r})$, with~$t_{1\smile d}\prec
  z_{1\smile r}$, such that
  \[ \sum_{\ul n \in \N^d} u_{\ul n} \ul t^{\ul n} = \res_{z_{1\smile r}} R. \]
\end{corollary}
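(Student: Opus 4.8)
The plan is to build directly on the coefficient representation of Corollary~\ref{cor:bs-coeff} and to turn the extraction of the constant $\ul z$-coefficient into an iterated formal residue. First I would apply Corollary~\ref{cor:bs-coeff} to obtain $r\in\N$ and a rational function $\tilde R(t_{1\smile d},z_{1\smile r})$, with $t_{1\smile d}\prec z_{1\smile r}$, such that $[\ul t^{\ul n}\ul z^{\ul 0}]\tilde R$ equals $u_{\ul n}$ when $\ul n\in\N^d$ and $0$ otherwise. The candidate rational function for the present statement is then $R\eqdef\tilde R/(z_1\cdots z_r)$, which is again rational and hence lies in~$\L$.

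The key tool is an identity relating residues to coefficient extraction. For any $f\in\L$ and any variable $z_i$, the definition of the formal residue gives, for every monomial $\ul z^{\ul m}$,
\[ [\ul z^{\ul m}]\,\res_{z_i}(f/z_i)=[\ul z^{\ul m}]f \]
whenever the exponent of $z_i$ in $\ul z^{\ul m}$ is $0$, and $0$ otherwise; here one uses simply that $z_i\cdot(f/z_i)=f$. In other words, $\res_{z_i}(f/z_i)$ is the projection of $f$ onto its terms of $z_i$-degree $0$, with all coefficients unchanged. Iterating this over $z_1,\dotsc,z_r$---a short induction in which each residue consumes one factor of the denominator $z_1\cdots z_r$ and freezes one more $z$-exponent to $0$---shows that $\res_{z_{1\smile r}}\!\left(f/(z_1\cdots z_r)\right)$ is precisely the projection of $f$ onto its terms of $\ul z$-degree $\ul 0$, leaving the dependence on $t_{1\smile d}$ untouched.

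Applying this with $f=\tilde R$ gives that $\res_{z_{1\smile r}}R$ is an element of $\L$ supported on $\ul z^{\ul 0}$ whose coefficient of $\ul t^{\ul n}\ul z^{\ul 0}$ equals $[\ul t^{\ul n}\ul z^{\ul 0}]\tilde R$ for every $\ul n\in\Z^d$. By the choice of $\tilde R$ this coefficient is $u_{\ul n}$ for $\ul n\in\N^d$ and $0$ otherwise, whence $\res_{z_{1\smile r}}R=\sum_{\ul n\in\N^d}u_{\ul n}\,\ul t^{\ul n}$, which is the claimed integral representation.

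The main point requiring care is the bookkeeping in the residue-to-coefficient identity: one must unwind the definition of $\res_{z_i}$ correctly, noting that the residue reads off the coefficient of $z_i^{-1}$ and reseats it at $z_i^0$, so that the auxiliary factor $1/z_i$ is exactly what converts it into the $z_i^0$-coefficient, and then check that the successive residues compose as claimed. Everything else---rationality of $R$, its membership in $\L$, and the fact that the surviving support in the $t$-variables lies in $\N^d$---is immediate from Corollary~\ref{cor:bs-coeff} and from $\L$ containing all rational functions.
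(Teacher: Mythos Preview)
Your proof is correct and follows essentially the same route as the paper: apply Corollary~\ref{cor:bs-coeff} to get~$\tilde R$, set~$R=\tilde R/(z_1\cdots z_r)$, and read off the claim from the definition of the formal residue. The paper compresses your residue-to-coefficient bookkeeping into a single displayed equality, but the argument is identical.
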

  In other words, the generating function of the restriction to~$\N^d$ of a binomial sum is a residue of a rational function.

\begin{proof}
  By Corollary~\ref{cor:bs-coeff}, there exist $r\in\N$ and a rational function 
  ${\tilde R}(t_{1\smile d}, z_{1\smile r})$ such that~$u_{\ul n}=[\ul t^{\ul n} \ul z^{\ul 0}] {\tilde R}$ for all~$\ul n\in \N^d$. 
Let $R$ be ${\tilde R}/(z_1\dotsm z_r)$. Then
  \[ \sum_{\ul n \in \N^d} u_{\ul n} t^{\ul n} = 
    \sum_{\ul n \in \N^d} \left( [\ul t^{\ul n} \ul z^{\ul 0}] {\tilde R} \right) \ul t^{\ul n} =
    \res_{z_{1\smile r}}R. \qedhere \]
\end{proof}

In \S\ref{sec:diagonals}, we prove an equivalence between univariate binomial sums and
\emph{diagonals of rational functions} that are a special kind of residues.

\subsection{Algorithms}
\label{sec:gfalgos}

\begin{algo}[t]
  \begin{algorithmic}
    \Function{SumToCT}{$u$}
    \If{$u_{\ul n} = \delta_{n_1}$}  \Return $z_1^{n_1}$
    \ElsIf{$u_{\ul n} = a^{n_1}$ for some~$a\in\K$}
      \Return $a^{n_1}$
    \ElsIf{$u_{\ul n} = \binom{n_1}{n_2}$}
      \Return $(1+z_1)^{n_1} /z_1^{n_2}$
    \ElsIf{$u_{\ul n} = v_{\ul n} + w_{\ul n}$}
      \Return $\textsc{SumToCT}(v)+ \textsc{SumToCT}(w)$
    \ElsIf{$u_{\ul n} = v_{\ul n} w_{\ul n}$}
      \State $P(n_{1\smile d} ; z_{1\smile r}) \gets \textsc{SumToCT}(v)$
      \State $Q(n_{1\smile e} ; z_{1\smile s}) \gets \textsc{SumToCT}(w)$
      \State \Return $P(n_{1\smile d} ; z_{1\smile r}) Q(n_{1\smile e} ; z_{r+1\smile r+s})$
    \ElsIf{$u_{\ul n} = v_{\lambda(\ul n)}$ for some affine map~$\lambda : \Z^d\to\Z^e$}
      \State $P(n_{1\smile d} ; z_{1\smile r}) \gets \textsc{SumToCT}(v)$
      \State \Return $P(\lambda(n_{1\smile d}) ; z_{1\smile r})$
    \ElsIf{$u_{n_1,n_2,\dotsc} = \ssum_{k=0}^{\ n_1} v_{k,n_2,\dotsc}$}
      \State $P(n_{1\smile d} ; z_{1\smile r}) \gets \textsc{SumToCT}(v)$    
      \State Compute~$Q(n_{1\smile d} ; z_{1\smile r})$ s.t.~$Q(n_{1}+1, n_{2\smile d} ; z_{1\smile r})-Q(n_{1\smile d} ; z_{1\smile r}) = P$
      \State
      \Comment See Lemma~\ref{lem:polygeomsum}.
      \State \Return $Q(n_{1}+1, n_{2\smile d} ; z_{1\smile r}) - Q(0, n_{2\smile d} ; z_{1\smile r})$
    \EndIf
    \EndFunction
  \end{algorithmic}
  \caption[Constant term representation of a binomial sum]{Constant term representation of a binomial sum
\begin{description}
  \item[Input] A binomial sum~$u:\Z^d\to \K$ given as an abstract syntax tree 
  \item[Output] $P(n_{1\smile d}; z_{1\smile r})$ a linear combination of expressions of the form~$\ul n^{\ul \alpha} R_0 R_1^{n_1}\dotsm R_d^{n_d}$ where~$\ul\alpha\in\N^d$ and~$R_{0\smile d}$ are rational functions of~$z_{1\smile r}$.
  \item[Specification]  $u_{\ul n} = [1] P(n_{1\smile d} ; z_{1\smile r})$ for any~$\ul n\in \Z^d$.
\end{description}
  }
\label{algo:sumtoct}
\end{algo}

\begin{algo}[t]
  \begin{algorithmic}
    \Function{SumToRes}{$u$}
    \State $\displaystyle \sum_{k=1}^m \ul n^{\ul \alpha_k} R_{k,0} R_{k,1}^{n_1} \dotsm R_{k,d}^{n_d} \gets \textsc{SumToCT}(u)$
    \State \Return $\displaystyle \frac{1}{z_1\dotsc z_r}\sum_{k=1}^m R_{k,0} F_{\alpha_1}(t_1 R_{k,1}) \dotsm F_{\alpha_d}(t_d R_{k,d})$
    \State \hfill where $F_0(z) = \frac{1}{1-z}$ and~$F_{\alpha+1}(z) =  z F'_\alpha(z)$
    \EndFunction
  \end{algorithmic}
  \caption[Integral representation of a binomial sum]{Integral representation of a binomial sum
\begin{description}
  \item[Input] A binomial sum~$u:\Z^d\to \K$ given as an abstract syntax tree 
  \item[Output] A rational function $R(t_{1\smile d}, z_{1\smile r})$
  \item[Specification] $\sum_{\ul n \in \N^d} u_{\ul n} t^{\ul n} = \res_{z_{1\smile r}} R$, where~$t_{1\smile d}\prec z_{1\smile r}$
\end{description}
  }
\label{algo:sumtores}
\end{algo}

The proofs of Theorem~\ref{prop:bs-ct} and its corollaries are constructive, but in order to implement them, it is useful to consider
not only sequences of the form~$[1](R_0 \ul R^{\ul n})$ but also sequences of the form~$[1]( \ul n^{\ul \alpha} R_0 \ul R^{\ul n})$,
for some rational functions~$R_{0\smile d}$ and~$\ul \alpha\in \N^d$.
They bring two benefits. First, polynomial factors often appear in binomial sums and it is convenient to be able to represent them without adding new variables. Second, we can always perform sums
without adding new variables, contrary to what is done in the case~$T=1$ of the proof of Theorem~\ref{prop:bs-ct}.

\begin{lemma}
  \label{lem:polygeomsum}
  For any~$A \neq 0$ in some field~$k$, and any~$\alpha \in \N$, there exists a polynomial $P\in k[n]$ of degree at most~$\alpha+1$ such that for all~$n\in\Z$,
  \[ n^\alpha A^n = P(n+1) A^{n+1} - P(n) A^n. \]
\end{lemma}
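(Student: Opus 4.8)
The plan is to divide the claimed identity through by $A^n$ (legitimate since $A\neq 0$) and thereby reduce it to the purely polynomial equation
\[ A\,P(n+1) - P(n) = n^\alpha. \]
Writing $V_m \eqdef \{\, P\in k[n] : \deg P \leq m \,\}$, I would study the $k$-linear operator $L_A : P \mapsto A\,P(n+1) - P(n)$. Since the shift $n\mapsto n+1$ preserves degree, $L_A$ sends $V_{\alpha+1}$ into itself, so it is an endomorphism of the finite-dimensional space $V_{\alpha+1}$. The entire lemma then reduces to showing that the right-hand side $n^\alpha$, which lies in $V_\alpha \subset V_{\alpha+1}$, belongs to the image of $L_A$: the corresponding solution $P\in V_{\alpha+1}$ automatically has degree at most $\alpha+1$, as claimed. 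I expect the argument to split according to whether $A=1$, and this case distinction is the only real subtlety.

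When $A\neq 1$, I would expand $L_A(n^j) = A(n+1)^j - n^j = (A-1)\,n^j + (\text{terms of degree} < j)$. In the monomial basis $(1,n,\dots,n^{\alpha+1})$ ordered by degree, the matrix of $L_A$ is therefore triangular with every diagonal entry equal to $A-1\neq 0$. Hence $\det L_A = (A-1)^{\alpha+2}\neq 0$, so $L_A$ is invertible on $V_{\alpha+1}$ and $n^\alpha$ lies in its image; in fact, since $L_A$ preserves each $V_j$, one even obtains a solution inside $V_\alpha$, of degree at most $\alpha$.

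When $A=1$, the operator is the forward difference $\Delta : P\mapsto P(n+1)-P(n)$. Here $\Delta(n^j) = j\,n^{j-1} + (\text{lower order})$, so $\Delta$ lowers degree by exactly one (this uses $\operatorname{char}k=0$, which holds in all our applications, since $A$ is a rational function over the characteristic-zero field $\K$). Consequently the polynomials $\Delta(n),\Delta(n^2),\dots,\Delta(n^{\alpha+1})$ have respective degrees $0,1,\dots,\alpha$, hence form a basis of $V_\alpha$; so the image of $\Delta|_{V_{\alpha+1}}$ is exactly $V_\alpha$, which again contains $n^\alpha$. This is precisely the case that forces the degree bound up to $\alpha+1$, since antidifferencing raises the degree by one.

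The main obstacle, such as it is, is recognizing that the two regimes behave oppositely: for $A\neq 1$ the operator is invertible and degree-preserving (needing only degree $\alpha$), whereas for $A=1$ it is degree-lowering and merely surjective onto $V_\alpha$ (needing degree $\alpha+1$), so that the uniform bound $\alpha+1$ covers both. One could alternatively dispatch the $A=1$ case by the explicit Faulhaber–Bernoulli antidifference of $n^\alpha$, but the linear-algebra argument treats both cases with the same bookkeeping and makes the degree bound transparent.
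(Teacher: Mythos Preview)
Your proof is correct and follows essentially the same approach as the paper: both reduce to the linear system $A\,P(n+1)-P(n)=n^\alpha$ on the $(\alpha+2)$-dimensional space of polynomials of degree at most $\alpha+1$, and both split into the cases $A\neq 1$ (operator invertible) and $A=1$ (kernel the constants, image exactly $V_\alpha$). Your version is slightly more explicit---you identify the triangular matrix with diagonal $A-1$ in the first case and the degree-lowering property of $\Delta$ in the second---and you rightly flag that the $A=1$ argument uses characteristic zero, which the paper's proof also needs (the kernel of $P\mapsto P(n+1)-P(n)$ contains $n^p-n$ in characteristic $p$) but leaves implicit.
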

\begin{proof}
  It is a linear system of~$\alpha+2$
  equations, one for each monomial~$1, n, \dotsc, n^{\alpha+1}$, in the~$\alpha+2$ coefficients of~$P$.
  If~$A\neq 1$,
  the homogeneous equation~$P(n+1) A = P(n)$ has no solutions, so the system is
  invertible and has a solution.  If~$A = 1$, then the equation~$P(n+1) A =
  P(n)$ has a one-dimensional space of solutions but the equation corresponding
  to the monomial~$n^{\alpha+1}$ reduces to~$0=0$, so the system again has a
  solution.
\end{proof}

Algorithm~\ref{algo:sumtoct} implements Theorem~\ref{prop:bs-ct}
in this generalized setting.
The input data, a binomial sum, is given as an expression considered as an abstract syntax tree, as depicted in Figure~\ref{fig:binomtree}.
The implementation of Corollaries~\ref{cor:bs-coeff} and~\ref{coro:bs-res}
must also be adjusted to handle the monomials in~$\ul n$.

To this effect, for~$\alpha\in\N$, let
\[ F_\alpha(z) \eqdef  \sum_{n\geq 0}n^\alpha z^n = \left( z\frac{\ud}{\ud z} \right)^\alpha \cdot \frac{1}{1-z}. \]
Now let us consider a sequence $u : \ul n\in\Z^d \mapsto [1]\left( \ul n^{\ul \alpha} S_0 S_1^{n_1}\dotsm S_d^{n_d} \right)$,
where $S_{0\smile d}$ are rational functions of ordered variables~$z_{1\smile r}$ and~$\ul \alpha\in \N^d$
Let
\[ \tilde R = S_0 \prod_{i=1}^d\frac{1}{1-t_i S_i} \quad\text{and}\quad  R = S_0 \prod_{i=1}^d F_{\alpha_i}(t_i S_i), \]
where~$t_{1\smile d} \prec z_{1\smile r}$.
By definition of the~$F_\alpha$'s, we check that
\[ R = \left( t_1\frac{\partial}{\partial t_1} \right)^{\alpha_1} \dotsm \left( t_d\frac{\partial}{\partial t_d} \right)^{\alpha_d} \cdot \tilde R,\]
so that~$[\ul t^{\ul n} \ul z^{\ul 0} ] R = \ul n^{\ul \alpha} [\ul t^{\ul n} \ul z^{\ul 0} ] \tilde R$, for~$\ul n\in\Z^d$,
by definition of the differentiation in~$\L_r$.
In the proof of Corollary~\ref{cor:bs-coeff}, we checked that~$[\ul t^{\ul n} \ul z^{\ul 0} ] \tilde R = [1] S_0 S_1^{n_1}\dotsm S_d^{n_d}$, for~$\ul n\in\N^d$, and~$0$ otherwise.
Therefore, $u_{\ul n} = [\ul t^{\ul n} \ul z^{\ul 0} ] R$, for~$n\in\N^d$.
This gives an implementation of Corollaries~\ref{cor:bs-coeff} and~\ref{coro:bs-res}
in the generalized setting.
Algorithm~\ref{algo:sumtores} sums up the procedure for computing integral representations.

\subsection{Analytic integral representations}
\label{sec:anintrep}

When~$\K$ is a subfield of~$\C$, then formal residues can be written as integrals.
\begin{proposition}
  Let~$R(t_{1\smile d}, z_{1\smile e})$ be a rational function whose
  denominator does not vanish when~$t_{1}=\dotsb=t_d =0$. There exist positive
  real numbers~$s_{1\smile d}$ and~$r_{1\smile e}$ such that
  on the set~$\left\{ (t_{1\smile d})\in \C^d \st \forall i, \abs{t_i}\leq s_i \right\}$,
  the power series~$\res_{z_{1\smile e}} R\in \C\llbracket t_{1\smile d}\rrbracket$
  converges and
  \[ \res_{z_{1\smile e}} R = \frac{1}{(2\pi i)^e} \oint_\gamma R(t_{1\smile d}, z_{1\smile e}) \ud z_{1\smile e}, \]
  where~$\gamma = \left\{ z\in\C^e \st \forall 1\leq i \leq e,\ |z_i| = r_i \right\}$.
\end{proposition}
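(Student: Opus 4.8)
The plan is to expand $R$ into a geometric series governed by the $\prec$-dominant monomial of its denominator, and then to choose the integration radii so steeply decreasing that this series converges uniformly on the torus $\gamma$, which justifies integrating it term by term. Write $R = N/D$ with $N,D$ polynomials in $t_{1\smile d}, z_{1\smile e}$. Since $D$ does not vanish at $t=0$, the polynomial $D(0,z_{1\smile e})$ is nonzero, and I claim its $\prec$-largest monomial is $t$-free: any monomial of $D$ carrying a positive power of some $t_i$ has a lexicographically larger exponent vector than $\lm\!\big(D(0,z_{1\smile e})\big)$ (the first block of coordinates, those of the $t_i$, is zero for the latter), hence is $\prec$-smaller. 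Let $M\in\K[z_{1\smile e}]$ be this leading term and write $D = M(1-E)$ with $E \eqdef 1 - D/M$, a Laurent polynomial in $z_{1\smile e}$ whose every monomial is $\prec 1$. Because $M$ is $t$-free while $N$ and $D$ are genuine polynomials, both $N/M$ and $E$ involve only nonnegative powers of the $t_i$. Lemma~\ref{lem:geomsum} then yields in $\L$ the identity $R = (N/M)\sum_{k\geq 0}E^k$, and since each summand carries only nonnegative powers of the $t_i$, the coefficient of $z_1^{-1}\dotsm z_e^{-1}$, which is exactly $\res_{z_{1\smile e}}R$ viewed as a series in $t$, lies in $\K\llbracket t_{1\smile d}\rrbracket$.

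Next I fix the radii by a dominant-weight scaling. Choose a small real $\epsilon>0$ and set $s_i = \epsilon^{A_i}$ and $r_j = \epsilon^{B_j}$ with weights $A_1 > \dotsb > A_d > B_1 > \dotsb > B_e > 0$ chosen so dominant that every exponent vector $(\ul a,\ul b)$ of a monomial $\ul t^{\ul a}\ul z^{\ul b}$ occurring in $E$ — there are finitely many, and each is lexicographically positive since the monomial is $\prec 1$ — satisfies $L(\ul a,\ul b)\eqdef\sum_i A_i a_i + \sum_j B_j b_j \geq \eta$ for some $\eta>0$. Such weights exist because a lexicographically positive vector has positive inner product with any sufficiently rapidly decreasing weight vector. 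On the region $\{\abs{t_i}\leq s_i,\ \abs{z_j}=r_j\}$, using $a_i\geq 0$ so that $\abs{t_i}^{a_i}\leq s_i^{a_i}$, each such monomial has modulus at most $\epsilon^{L(\ul a,\ul b)}\leq\epsilon^{\eta}$; hence $\abs{E}\leq\theta<1$ uniformly once $\epsilon$ is small enough. Consequently $\sum_k E^k$ converges absolutely and uniformly there, and so does $R = (N/M)\sum_k E^k$ after multiplying by the bounded factor $N/M$ (the monomial $M$ does not vanish on the torus).

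Finally I integrate term by term. Each summand $(N/M)E^k$ is a Laurent polynomial in $z_{1\smile e}$ with coefficients polynomial in $t_{1\smile d}$, and for a Laurent monomial one has $\frac{1}{(2\pi i)^e}\oint_\gamma z_1^{m_1}\dotsm z_e^{m_e}\,\ud z_{1\smile e} = \prod_j \delta_{m_j,-1}$, so the integral extracts precisely the coefficient of $z_1^{-1}\dotsm z_e^{-1}$. Summing over $k$, which is legitimate by the uniform convergence on $\gamma$, gives
\[ \frac{1}{(2\pi i)^e}\oint_\gamma R\,\ud z_{1\smile e} = \sum_{k\geq 0}\big[z_1^{-1}\dotsm z_e^{-1}\big]\,(N/M)E^k = \res_{z_{1\smile e}}R, \]
the right-hand side being a power series in $t_{1\smile d}$ that converges for $\abs{t_i}\leq s_i$. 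This is the asserted identity. The main obstacle is the middle step: producing radii for which the dominant-monomial expansion converges uniformly on the torus. The dominant-weight scaling is what resolves it, since it upgrades the purely combinatorial lexicographic order underlying $\prec$ into a genuine comparison of magnitudes $\abs{\ul t^{\ul a}\ul z^{\ul b}}$ on $\gamma$, ensuring that every monomial declared $\prec 1$ is indeed analytically small.
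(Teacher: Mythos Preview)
Your proof is correct and follows essentially the same route as the paper: factor out the $\prec$-leading monomial of the denominator (which is $t$-free by the nonvanishing hypothesis), expand $R$ as a geometric series in the remainder, and choose steeply decreasing radii so that lexicographic smallness becomes analytic smallness, allowing termwise integration via Cauchy's formula. The only cosmetic difference is your parametrization of the radii by dominant weights $s_i=\epsilon^{A_i}$, $r_j=\epsilon^{B_j}$, whereas the paper uses the doubly exponential choice $s_i=\exp(-\exp(N/i))$, $r_j=\exp(-\exp(N/(d+j)))$; both realize the same idea.
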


\begin{proof}
  When~$R$ is a Laurent monomial, the 
  equality follows from Cauchy's integral formula.
  By linearity, it still holds when~$R$ is a Laurent polynomial.

  In the general case, let~$R$ be written as~$a/f$, where~$a$ and~$f$ are polynomials.  We may
  assume that the leading coefficient of~$f$ is~$1$ and so~$f$ decomposes
  as~$\mop{lm}(f)( 1 - g)$ where~$g$ is a Laurent polynomial with
  monomials~$\prec 1$.  The hypothesis that~$f$ does not vanish
  when~$t_{1}=\dotsb=t_d =0$ implies that~$\mop{lm}(f)$ depends only on the
  variables~$z_{1\smile e}$ and that~$g$ contains no negative power of the
  variables~$t_{1\smile d}$.
  This and the fact that all monomials of~$g$ are~$\prec 1$ imply that
  there exist positive real numbers~$s_{1\smile d}$ and~$r_{1\smile e}$
  such that~$|g(t_{1\smile d}, z_{1\smile e})| \leq \frac12$
  if~$|t_i| \leq s_i$ and~$|z_i| = r_i$.
  For example, we can take~$s_i = \exp(-\exp(N/i))$ and~$r_i = \exp(-\exp(N/(d+i)))$,
  for some large enough~$N$, because
  \[ \exp(-\exp(N/i))^p = o\left( \exp(-\exp(N/j))^q \right), \quad N\to\infty, \]
  for any~$p,q >0$ and~$i < j$.
  On the one hand
  \[ \res_{z_{1\smile e}} R = \sum_{k\geq 0} \res_{z_{1\smile e}}\left( \frac{a g^k}{\mop{lm}(f)} \right), \]
  by Lemma~\ref{lem:geomsum}, and on the other hand, if~$|t_i|\leq s_i$,
  for~$1\leq i\leq d$ then
  \[ \oint_\gamma R(t_{1\smile d}, z_{1\smile e}) \ud z_{1\smile e} = \sum_{k\geq 0} \oint_\gamma\frac{a g^k}{\mop{lm}(f)} \ud z_{1\smile e}. \]
  where~$\gamma = \left\{ z\in\C^e \st \forall 1\leq i \leq e,\ |z_i| = r_i \right\}$, because the sum~$\sum_{k\geq 0} g^k$
  converges uniformly on~$\gamma$, since~$|g| \leq \frac12$.
  And the lemma follows from the case where~$R$ is a Laurent polynomial.
\end{proof}

\section{Diagonals}
\label{sec:diagonals}

Let~$R(z_{1\smile d}) = \sum_{\ul n\in\N^d} a_{\ul n}\ul z^{\ul n}$ be a rational power series in $\K(z_1, \ldots, z_d)$.
The \emph{diagonal} of~$R$ is the univariate power series
\[ \diag R \eqdef \sum_{n\geq 0} a_{n,\dotsc,n} t^n. \]
Diagonals have be introduced to study properties of the Hadamard product of power series \parencite{CamMar38,Furstenberg1967}. They also appear in the theory of G-functions \parencite{Christol1988}.
They can be written as residues: with~$t\prec z_{2\smile d}$, it is easy to check that
\begin{equation}
  \diag R = \res_{z_2,\dotsc,z_d} \frac{1}{z_2\dotsm z_d} R\left( \frac{t}{z_2\dotsm z_d}, z_2,\dotsc,z_d \right).
\label{eqn:resdiag}
\end{equation}
Despite their simplistic appearance, diagonals have very strong properties, the first of which is \emph{differential finiteness}:
\begin{theorem}
  [\Cite{Chr85}, \cite{Lip88}]
  Let~$R(z_{1\smile d})$ be a rational power series in $\K(z_1, \ldots, z_d)$.
  There exist polynomials~$p_{0\smile r} \in \K[t]$, not all zero, such that~$p_r f^{(r)} + \dotsb + p_1 f' + p_0 f = 0$, where~$f=\diag R$.
  \label{thm:diag-dfinite}
\end{theorem}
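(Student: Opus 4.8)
The plan is to exploit the residue representation of the diagonal together with the two defining features of formal residues: they annihilate partial derivatives $\partial_{z_i}$ (Lemma~\ref{lem:resder}) and they commute with the differentiation $\partial_t$, since $t$ and the $z_i$ act on disjoint parts of the multi-index. Concretely, Eq.~\eqref{eqn:resdiag} lets me write
\[
  f = \diag R = \res_{z_2, \ldots, z_d} G, \qquad
  G \eqdef \frac{1}{z_2\cdots z_d}\, R\!\left( \frac{t}{z_2\cdots z_d}, z_2, \ldots, z_d \right),
\]
a \emph{rational} function of $t, z_2, \ldots, z_d$ with $t \prec z_2 \prec \cdots \prec z_d$. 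The goal then reduces to producing a \emph{creative telescoping} relation
\[
  p_r(t)\, \partial_t^r G + \dotsb + p_1(t)\, \partial_t G + p_0(t)\, G = \sum_{i=2}^d \partial_{z_i} g_i,
\]
with $p_{0\smile r}\in\K[t]$ not all zero and $g_{2\smile d}$ rational. Once such a relation is in hand I would apply $\res_{z_2,\ldots,z_d}$ to both sides: the right-hand side vanishes by Lemma~\ref{lem:resders}, while on the left the residue passes through each $\partial_t$, leaving $p_r f^{(r)} + \dotsb + p_0 f = 0$, which is exactly the asserted equation for $f$.

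Everything thus hinges on the existence of the telescoping relation, and this is the step I expect to be the main obstacle. I would phrase it as a finite-dimensionality statement: writing $G = A/B$ with $A, B \in \K[t, z_{2\smile d}]$, every derivative $\partial_t^j G$ lies in the localized ring $\K(t)[z_{2\smile d}, B^{-1}]$, and it suffices to show that the images of $G, \partial_t G, \partial_t^2 G, \dotsc$ in the quotient $\K(t)$-vector space
\[
  \K(t)[z_{2\smile d}, B^{-1}] \Big/ \sum_{i=2}^d \partial_{z_i}\, \K(t)[z_{2\smile d}, B^{-1}]
\]
are linearly dependent over $\K(t)$; any nontrivial dependence \emph{is} a telescoping relation. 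Such a dependence exists as soon as this quotient is finite-dimensional, which is precisely algebraic de Rham finiteness for the smooth affine complement of the hypersurface $B = 0$ over the characteristic-zero field $\K(t)$.

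The real work is therefore establishing this finiteness, and I would do it by a pole-order reduction in the spirit of Griffiths--Dwork together with a dimension count in the style of \parencite{Lip88}: the rule $\partial_{z_i}(C/B^{k+1}) = (\partial_{z_i} C)/B^{k+1} - (k+1)\, C\,(\partial_{z_i} B)/B^{k+2}$ shows that, modulo $\partial_{z_i}$-derivatives, a term of pole order $k+2$ whose numerator lies in the ideal generated by the $\partial_{z_i} B$ can be traded for terms of strictly lower pole order; iterating caps the pole order and the numerator degree simultaneously, leaving a $\K(t)$-space of bounded dimension, so that $G, \partial_t G, \dotsc, \partial_t^r G$ must become dependent for $r$ large. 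A cleaner but heavier alternative, which I would mention, is to bypass the explicit reduction and invoke multivariate holonomy: a rational function is holonomic, holonomy is preserved under the residue operation, and iterating over $z_2, \ldots, z_d$ reduces $G$ to a holonomic—hence D-finite—function of $t$ alone, the finiteness of the holonomic rank again furnishing the telescoping relation.
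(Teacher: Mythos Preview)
Your proposal is correct and matches what the paper does. The paper does not prove this theorem in~\S\ref{sec:diagonals}---it is stated there with citations to Christol and Lipshitz---but in~\S\ref{sec:pf} the paper gives exactly your argument for the more general Theorem~\ref{thm:res-dfinite}: the quotient $H_F = A_F / \sum_i \partial_{z_i} A_F$ is finite-dimensional (there simply cited to Grothendieck rather than re-proved via Griffiths--Dwork), so the iterates $\partial_t^k R$ become dependent modulo derivatives, and applying $\res_{z_{1\smile r}}$ with Lemma~\ref{lem:resders} yields the differential equation; combined with the residue formula~\eqref{eqn:resdiag} for $\diag R$ this specializes to the present statement.
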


We recall that a power series~$f\in \K\llbracket t\rrbracket$ is called \emph{algebraic}
if there exists a nonzero polynomial~$P\in \K[x,y]$ such that~$P(t,f) = 0$.

\begin{theorem}
  [\cite{Furstenberg1967}]
  A power series~$f\in\K\llbracket t\rrbracket$
  is algebraic if and only if~$f$ is the diagonal of
  a \emph{bivariate} rational power series.
  \label{thm:furst1}
\end{theorem}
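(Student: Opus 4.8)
The statement is an equivalence, so I would prove the two implications separately, in each case exploiting the residue description of diagonals in \eqref{eqn:resdiag}.

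For the implication that a diagonal of a bivariate rational power series is algebraic, write $f=\diag R$ with $R\in\K(z_1,z_2)$ and specialize \eqref{eqn:resdiag} to $d=2$:
\[ f = \res_{z_2}\frac{1}{z_2}R\!\left(\tfrac{t}{z_2},z_2\right), \]
so that $f$ is a single formal residue, with $t\prec z_2$, of a rational function $G(t,z_2)$. The plan is to show that any such residue is algebraic over $\K(t)$. I would identify $\res_{z_2}G$ with the sum of the classical residues of $G$, viewed as a rational function of $z_2$, taken at the poles $z_2=\rho(t)$ of positive $t$-valuation (the \enquote{small} poles): for $\K\subseteq\C$ this is exactly the residue theorem applied to the contour integral provided by the analytic representation of §\ref{sec:anintrep}, and the general characteristic-zero case follows by specialization. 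Each small pole $\rho(t)$ is algebraic over $\K(t)$, and the associated classical residue is a fixed rational expression in $t$ and $\rho(t)$, hence also algebraic over $\K(t)$. Since a finite sum of algebraic functions is algebraic, $f$ is algebraic.

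For the converse, suppose $f=\sum_n a_n t^n$ is algebraic. After subtracting the constant $a_0=\diag(a_0)$ and using linearity of $\diag$, I would reduce to $f(0)=0$ and choose a polynomial $P(t,y)$ with $P(t,f)=0$ normalized so that $P(0,0)=0$ and $\partial_y P(0,0)\neq 0$; then $f$ is the unique root of $P$ of positive $t$-valuation, all other branches staying away from the origin. The construction is then
\[ R(x,y) \eqdef y^2\,\frac{\partial_y P(xy,y)}{P(xy,y)}, \]
where $\partial_y P$ denotes the derivative of $P$ with respect to its second argument; this is a genuine bivariate rational power series, since the normalization forces its lowest term to be $y$. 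Substituting $x=t/z_2$ and $y=z_2$ in \eqref{eqn:resdiag} collapses the two substitutions and yields
\[ \diag R = \res_{z_2}\!\left(z_2\,\frac{\partial_y P(t,z_2)}{P(t,z_2)}\right). \]
Expanding $\partial_y P/P=\sum_\rho 1/(z_2-\rho)$ over the roots $\rho$ of $P$ and computing the formal residue termwise, a small root contributes $\rho$ while a large root contributes $0$; by the normalization the only small root is $f$, so $\diag R=f$.

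The main obstacle is precisely this normalization step in the converse: given an algebraic $f$ with $f(0)=0$, one must produce a \emph{polynomial} annihilator $P$ for which $f$ is a simple root at the origin ($\partial_y P(0,0)\neq 0$), equivalently for which no other branch of $P=0$ tends to $0$. The minimal polynomial does the job whenever $t=0$ is not a branch point, but when it is, several (possibly ramified) branches coalesce at the origin and the residue of $z_2\,\partial_y P/P$ picks up their sum rather than $f$ alone. Resolving this—by a preparatory isolation of the branch $f$, as in Furstenberg's original argument—is the only delicate point; the residue computation and the verification $\diag R=f$ are then routine.
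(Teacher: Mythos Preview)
The paper does not provide its own proof of Theorem~\ref{thm:furst1}; the result is quoted with attribution to Furstenberg and used as input to the theory, so there is no in-paper argument to compare against. Your proposal is correct and is, in fact, a reconstruction of Furstenberg's original proof.

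Two remarks. For the forward direction, the identification of $\res_{z_2}G$ with the sum of classical residues at the small poles is established in the paper itself, purely formally and over an arbitrary field of characteristic zero, in~\S\ref{sec:arbitrary-poles} (Equation~\eqref{equ:algpoles}). You can cite that directly and drop both the detour through contour integrals and the vague appeal to \enquote{specialization}, which as stated is not a proof.

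For the converse, the construction $R(x,y)=y^{2}\,\partial_yP(xy,y)/P(xy,y)$ is exactly Furstenberg's, your residue computation is correct, and you have isolated the only genuine difficulty. The standard resolution in characteristic zero is to replace $f$ by $g=(f-p)/t^{N}$, where $p$ is the Taylor polynomial of $f$ of degree~$N-1$ and $N$ exceeds the $t$-valuation of $f-\rho$ for every other branch $\rho$ of $P$ through the origin; these valuations are finite because the minimal polynomial of~$f$ is separable. Then $g$ is the unique small root of its annihilating polynomial, your construction yields $g=\diag S$ for some bivariate rational power series~$S$, and one recovers $f=p+t^{N}g$ as a diagonal using linearity of~$\diag$ together with the identity $t^{N}\diag S=\diag\bigl((xy)^{N}S\bigr)$.
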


Diagonals of rational power series in more than two variables need not be algebraic, as shown by
\[ \sum_{n\geq 0} \frac{(3n)!}{n!^3} t^n = \diag\left( \frac{1}{1-z_1-z_2-z_3} \right). \]
However, the situation is simpler modulo any prime number~$p$.
The following result is stated by \textcite[Theorem 1]{Furstenberg1967} over a field of finite characteristic; since reduction modulo~$p$ and diagonal extraction commute, we obtain the following statement:

\begin{theorem}
  [\cite{Furstenberg1967}]
  Let~$f \in \Q\llbracket t\rrbracket$ be the diagonal of a rational power series with rational coefficients.
  Finitely many primes may divide the denominator of a coefficient of~$f$. For all prime~$p$ except those,
  the power series~$f \pmod p \in \F_p\llbracket t\rrbracket$ is algebraic.
  \label{thm:furst2}
\end{theorem}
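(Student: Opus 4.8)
The plan is to reduce the statement to Furstenberg's theorem in positive characteristic, which is the only substantial ingredient; on our side the real work is to make reduction modulo~$p$ legitimate and to check that it commutes with taking the diagonal. Throughout, write $R = P/Q$ with $P,Q\in\Z[z_{1\smile d}]$ and $Q(\ul 0)=c\neq 0$ (possible because a rational power series has a denominator not vanishing at the origin, and we may clear denominators to integers), and set $f=\diag R=\sum_{n\geq 0}a_{n,\dots,n}t^n$.

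First I would control the denominators. Writing $Q = c(1-g)$ with $g\in(1/c)\Z[z_{1\smile d}]$ and $g(\ul 0)=0$, every monomial of $g$ is $\prec 1$, so Lemma~\ref{lem:geomsum} gives $1/Q = c^{-1}\sum_{k\geq 0} g^k$. Since $g$ has no constant term, $g^k$ has total degree $\geq k$, so for a fixed monomial $\ul z^{\ul m}$ only the terms with $k\leq|\ul m|$ contribute, and $[\ul z^{\ul m}](1/Q)\in(1/c)^{|\ul m|+1}\Z\subseteq\Z[1/c]$. Hence every coefficient of $1/Q$, and so of $R=P/Q$, lies in $\Z[1/c]$; in particular all coefficients $a_{n,\dots,n}$ of $f$ do. Taking $S$ to be the finite set of prime divisors of $c$, no prime outside $S$ can divide a denominator of a coefficient of $f$, which proves the first assertion.

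Next, fix a prime $p\notin S$ and let bars denote reduction modulo~$p$. Since $a_{\ul n}\in\Z[1/c]\subseteq\Z_{(p)}$, coefficientwise reduction is well defined and produces a power series $\bar R = \sum_{\ul n}\bar a_{\ul n}\ul z^{\ul n}\in\F_p\llbracket z_{1\smile d}\rrbracket$. I would then check that $\bar R$ is again rational: reducing the identity $Q R = P$ modulo~$p$ gives $\bar Q\,\bar R = \bar P$ with $\bar Q(\ul 0)=\bar c\neq 0$ in $\F_p$, so $\bar R = \bar P/\bar Q$ is a rational power series over $\F_p$. Because the diagonal merely selects the coefficients indexed by $(n,\dots,n)$, reduction and diagonal extraction commute:
\[ \diag \bar R = \sum_{n\geq 0}\bar a_{n,\dots,n}\,t^n = \overline{\diag R} = \overline{f}. \]
Applying Furstenberg's theorem over the field $\F_p$ \parencite{Furstenberg1967}, namely that in positive characteristic the diagonal of a rational power series is algebraic, to $\bar R$ shows that $\diag\bar R$ is algebraic over $\F_p(t)$; hence $f\bmod p$ is algebraic, which is the second assertion.

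The genuine difficulty---algebraicity of diagonals of rational functions in characteristic~$p$---is Furstenberg's theorem, which is invoked as a black box. On our side the only point requiring care is the denominator estimate of the second paragraph: it is exactly what guarantees $p$-integrality of the coefficients, and hence simultaneously the legitimacy of reduction modulo~$p$ and the rationality of $\bar R$; everything else is formal.
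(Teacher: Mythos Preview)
Your argument is correct and follows exactly the route the paper indicates: the paper merely remarks that Furstenberg's result is stated over a field of finite characteristic and that reduction modulo~$p$ commutes with diagonal extraction, and you have supplied the missing justification (the denominator bound via the geometric expansion of~$1/Q$) that makes this reduction legitimate. Nothing more is needed.
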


\begin{example}
It is easy to check that the series~$f=\sum_{n\geq 0} \frac{(3n)!}{n!^3} t^n$
satisfies
\begin{align*}
f &\equiv \left( 1+t \right)^{-\frac{1}{4}} \mod 5, \\
f &\equiv \left( 1+6t+6t^2 \right)^{-\frac 16} \mod 7, \\
f &\equiv \left( 1+6t+2t^2+8t^3 \right)^{-\frac{1}{10}} \mod{11}, \text{ etc.}
\end{align*}
\end{example}

The characterization of diagonals of rational power series remains largely an
open problem, despite very recent attempts \parencite{Christol15}.  A natural
measure of the complexity of a power series~$f\in \Q\llbracket t\rrbracket$ is
the least integer~$n$, if any, such that~$f$ is the diagonal of a rational
power series in~$n$ variables.  Let~$N(f)$ be this number, or~$N(f) = \infty$
if there is no such~$n$.\footnote{This notion is closely related to the
\emph{grade} of a power series \parencite{AMF11}.} It is clear that~$N(f) =
1$ if and only if~$f$ is rational.  According to Theorem~\ref{thm:furst1},
$N(f)=2$ if and only if~$f$ is algebraic and not rational.  It is easy to
find power series~$f$ with~$N(f)=3$, for example~$N\big(\sum_n
\frac{(3n)!}{n!^3} t^n\big)=3$ because it is the diagonal of a trivariate
rational power series but it is not algebraic.  However, we do not know of
any power series~$f$ such that~$N(f) > 3$.

\begin{conjecture}
  [\cite{Christol1990}]
  Let~$f\in\Z\llbracket t\rrbracket$ be a power series with integer
  coefficients, positive radius of convergence and such that~$p_r f^{(r)} +
  \dotsb + p_1 f' + p_0 f = 0$ for some polynomials~$p_{0\smile
  r}\in\Q[t]$, not all zero.  Then~$f$ is the diagonal of a rational power series.
\end{conjecture}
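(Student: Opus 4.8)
The plan is to transport the statement across the paper's central equivalence. By Theorem~\ref{thm:main}, a univariate power series is the diagonal of a rational power series if and only if it is the generating function of a binomial sum, so proving the conjecture is the same as showing that every D-finite series with integer coefficients and positive radius of convergence is a binomial sum. Under this dictionary the conjecture asserts that the elementary \emph{necessary} conditions gathered in~\S\ref{sec:caracbs}—at most exponential growth, P-recursiveness, and finitely many primes dividing the denominators—are in fact \emph{sufficient}. The first step is therefore purely formal: reduce to the binomial-sum side, where the closure properties of~$\cB$ (products, affine changes of variable, partial summation) give more room to maneuver than the bare notion of a diagonal.

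The natural line of attack is arithmetic and runs through reduction modulo primes. By Furstenberg's theorem (Theorem~\ref{thm:furst2}) every diagonal reduces, for all but finitely many primes~$p$, to a series algebraic over~$\F_p(t)$, and conversely algebraic series over a finite field are diagonals of bivariate rational functions. The hypotheses of the conjecture (integer coefficients, positive radius of convergence, D-finiteness) say precisely that~$f$ is a globally bounded G-function, and such a series does satisfy these same mod-$p$ algebraicity constraints together with matching $p$-adic and Archimedean growth bounds. One would then try to reconstruct a single characteristic-zero rational function whose diagonal is~$f$: for each good prime produce a bivariate rational function realizing $f \bmod p$ as a diagonal, and then interpolate these representations coherently across all primes, controlling denominators and the number of auxiliary variables so that the result is a genuine rational function over~$\Q$.

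The main obstacle is exactly this gluing step, and it is the reason the statement remains a conjecture rather than a theorem. The mod-$p$ representations are not canonical—their degrees and variable counts are not known to be uniformly bounded—and there is no available mechanism to assemble infinitely many positive-characteristic algebraic certificates into one characteristic-zero diagonal. Put differently, while being a globally bounded G-function is easily seen to be necessary for~$f$ to be a diagonal, its sufficiency is the entire open content of the conjecture: all the invariants one can presently extract (the local exponents, the $p$-curvatures, the $p$-adic and Archimedean radii) are \emph{compatible} with being a diagonal but do not force it. A complete proof would require a new effective construction turning the D-module attached to~$f$ directly into a rational function, and such a construction is missing even for the outstanding test cases—no series with $N(f)>3$ is known, yet none has been proved to genuinely require more than three variables—so I would expect the argument to stall at this reconstruction, not at any of the preliminary reductions.
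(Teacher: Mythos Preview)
The statement you were asked about is Christol's \emph{conjecture}: the paper records it as open and gives no proof, so there is nothing on the paper's side to compare your attempt to. Your write-up is not a proof either, and you say so yourself---you outline a mod-$p$ strategy, explain where it would stall (the non-uniform, non-canonical lifting of the bivariate rational representations from each $\F_p$ back to $\Q$), and conclude that a genuine argument would need a construction that is presently unavailable.

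That diagnosis is accurate and aligns with the paper's own stance: the paper uses Theorem~\ref{thm:main} only to \emph{reformulate} the conjecture in terms of binomial sums (the version stated just after), not to make progress on it. So your ``proposal'' is best read as an informed discussion of why the conjecture is hard, not as a proof attempt with a fixable gap. If the intent was to supply a proof, the missing idea is exactly the one you name: there is no known way to glue the prime-by-prime algebraicity into a single characteristic-zero rational function whose diagonal is~$f$, and nothing in the paper's toolkit (integral representations, geometric reduction, Picard--Fuchs equations) changes that.
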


In this section, we prove the following equivalence:
\begin{theorem}\label{thm:main}
  A sequence~$u:\N\to\K$ is a binomial sum if and only if the generating
  function~$\sum_{n\geq 0} u_n t^n$ is the diagonal of a rational power series. 
  \label{thm:equiv-bs-diag}
\end{theorem}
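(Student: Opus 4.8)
The plan is to prove the equivalence in Theorem~\ref{thm:main} by establishing the two implications separately, since each direction draws on different parts of the preceding development.

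\medskip

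\emph{Binomial sum $\Rightarrow$ diagonal.} For the forward direction I would start from the integral representation already furnished by Corollary~\ref{coro:bs-res}: a univariate binomial sum $u:\N\to\K$ has generating function $y(t)=\sum_{n\geq 0}u_n t^n = \res_{z_{1\smile r}} R$ for some rational function $R(t,z_{1\smile r})$ with $t\prec z_{1\smile r}$. The goal is to convert this residue into a genuine diagonal. The key is that the formal residue $\res_{z_i} f$ extracts the coefficient of $z_i^0$ after multiplying by $z_i$, so iterated residues are, up to the factor $z_1\dotsm z_r$ already built into $R$ in the proof of Corollary~\ref{coro:bs-res}, just constant-term extractions in the $z$ variables. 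I would therefore first rewrite $\res_{z_{1\smile r}}R$ as a constant-term extraction $[z_1^0\dotsm z_r^0]$ of a rational function, then exploit the representation $\diag$ as a residue from Equation~\eqref{eqn:resdiag}. Concretely, the substitution $z_i \mapsto z_i/(z_{i+1}\dotsm z_r)$ type of change of variables (or, symmetrically, introducing fresh variables so that one genuine diagonal variable carries the power of $t$) transforms a coefficient extraction $[t^n z_1^0\dotsm z_r^0]$ of a rational power series into a diagonal coefficient $a_{n,\dotsc,n}$ of a related rational power series. The main bookkeeping obstacle here is ensuring that the ordering $t\prec z_{1\smile r}$ under which the formal residue is computed matches the expansion ordering implicit in the definition of $\diag$, and that the substitution used stays within the class of admissible monomial substitutions (the map $f\mapsto f^\varphi$ of \S\ref{sec:laurent}, which requires $\varphi$ strictly increasing for $\prec$); I expect this compatibility of orderings to be the most delicate point of the forward direction.

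\medskip

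\emph{Diagonal $\Rightarrow$ binomial sum.} For the converse I would show directly that the diagonal of any rational power series $R(z_{1\smile d})$ is a binomial sum, which by Definition~\ref{def:bs} amounts to building it from $\delta$, geometric sequences, binomials, affine reindexing, products, and partial summation. The natural route is to expand $R$ as a geometric-type series: writing $R = a/(1-g)$ with $a$ a Laurent polynomial and $\lm(g)\prec 1$, Lemma~\ref{lem:geomsum} gives $[\ul z^{\ul n}]R = \sum_{k\geq 0}[\ul z^{\ul n}](a g^k)$, and each coefficient $[\ul z^{\ul n}](a g^k)$ of a power of a fixed Laurent polynomial is a multinomial-type expression that is manifestly a binomial sum in $(\ul n,k)$ (products of binomial coefficients via the multinomial theorem, composed with affine maps). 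Summing over $k$ by rule~\ref{def:it:sum} and specializing $\ul n$ to the diagonal $(n,\dotsc,n)$ by rule~\ref{def:it:lambda} then exhibits $\diag R$ as a binomial sum. The technical care needed is that the inner sum over $k$ is a \emph{directed} indefinite sum of a binomial sum and that, because only finitely many $k$ contribute to each coefficient (again Lemma~\ref{lem:geomsum}), this matches the closure rule~\ref{def:it:sum} rather than requiring an infinite summation outside the class.

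\medskip

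I expect the forward direction to be the genuinely hard part: turning an arbitrary formal residue into the very rigid shape of a diagonal requires massaging the rational function and, crucially, controlling the variable ordering so that the lexicographic expansion computing the residue agrees with the expansion computing the diagonal. The converse is more routine once the geometric expansion is set up, since each piece visibly obeys the generating rules of $\cB$. I would finish by remarking that combining the forward direction with Theorem~\ref{thm:diag-dfinite} immediately yields that every univariate binomial sum is D-finite, the fact already invoked as Corollary~\ref{cor:bs-dfinite} in \S\ref{sec:caracbs}.
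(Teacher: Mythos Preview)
Your plan has the right overall shape, but each direction skips the step that is actually the heart of the argument.

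\textbf{Forward direction.} You propose to invert Equation~\eqref{eqn:resdiag} by a monomial substitution, turning $\res_{z_{1\smile r}}R$ into a diagonal. This is exactly what the paper warns against at the start of \S\ref{sec:binom2diag}: combining Corollary~\ref{coro:bs-res} with~\eqref{eqn:resdiag} only yields the diagonal of a rational \emph{Laurent} series, whereas the theorem requires a rational \emph{power} series. The rational function $R$ produced by Corollary~\ref{coro:bs-res} has no reason to be regular at the origin, and no increasing monomial substitution $f\mapsto f^\varphi$ will cure that by itself. The paper's proof supplies three additional ingredients you are missing: (i)~Lemma~\ref{lem:monord}, a monomial change of variables that forces the denominator into the shape $C\ul z^{\ul m}(1+g)$ with $g(0)=0$; (ii)~an iterative specialization argument to strip off the factor $\ul z^{\ul m}$ so that one is left with $u_n=[w_0^n]T$ for a genuine rational power series $T$ and a monomial $w_0=z_0^{a_0}\dotsm z_r^{a_r}$; and (iii)~a roots-of-unity averaging (Galois descent from $\K(z_i)$ to $\K(z_i^{a_i})$) to reduce all exponents $a_i$ to~$1$, at which point the expression is literally a diagonal. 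Your ``bookkeeping obstacle'' about orderings is real but secondary; the substantive obstacle is producing a power series at all.

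\textbf{Converse direction.} Your multinomial expansion is the same idea the paper uses, but the sentence ``only finitely many $k$ contribute \dots\ this matches the closure rule~\ref{def:it:sum}'' is where the real work is hiding. Rule~\ref{def:it:sum} gives closure under $\ssum_{k=0}^{m}$ with $m$ an \emph{affine} function of the free indices; the mere finiteness guaranteed by Lemma~\ref{lem:geomsum} is not enough. One must show that the set of exponent tuples $\ul k$ with $\ul m_0+\sum k_i\ul m_i=(n,\dotsc,n)$ lies in a box of side $O(n)$. The paper isolates this as Proposition~\ref{prop:somme-poly} (summation over a rational polyhedron with bounded fibers), proved via Lemma~\ref{lem:linbound} and an explicit partition of unity by the Heaviside-type sequences $w^{i,\pm}$. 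Without this, your inductive use of rule~\ref{def:it:sum} does not go through.
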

In~\S\ref{sec:binom2diag}, we prove that the generating function of a binomial
sum is a diagonal and in~\S\ref{sec:diag2binom}, we prove the converse.  When
writing that a sequence~$u:\N\to\K$ is a binomial sum, we mean that there
exists a binomial sum~$v:\Z\to\K$ whose support in contained in~$\N$ and
which coincides with~$u$ on~$\N$.  Theorem~\ref{thm:main} can be stated
equivalently without restriction on the support: a sequence~$u:\Z\to\K$ is a
binomial sum if and only if the generating functions~$\sum_{n\geq 0} u_n t^n$
and~$\sum_{n\geq 0} u_{-n} t^n$ are diagonals of rational power series.

Note that \textcite[Theorem 1.3]{GaPa14} proved a similar, although essentially different, result: the subclass of binomial multisums they consider corresponds to the subclass of diagonals of $\mathbb{N}$-rational
power series.

Theorem~\ref{thm:main} and the theory of diagonals of rational functions imply right away interesting corollaries.

\begin{corollary}\label{cor:bs-dfinite}
  If~$u:\Z\to\K$ is a binomial sum, then there exist polynomials~$p_{0\smile
  r}$ in $\K[t]$, not all zero, such that~$p_r(n) u_{n+r} + \dotsb + p_1(n) u_{n+1} +
  p_0(n) u_n = 0$ for all~$n\in\Z$.
\end{corollary}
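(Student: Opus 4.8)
The plan is to deduce this from the differential finiteness of diagonals (Theorem~\ref{thm:diag-dfinite}) through the equivalence of Theorem~\ref{thm:main}, and then to convert differential finiteness of generating functions into a linear recurrence. The one genuinely delicate point is to produce a \emph{single} recurrence valid for \emph{all} $n\in\Z$, by gluing together the information coming from the positive and the negative indices; this is why Theorem~\ref{thm:main} is used in its two-sided form.

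First I would treat the two halves of $u$ separately. Since $\cB$ is closed under affine changes of variables (rule~\ref{def:it:lambda}), the sequence $n\mapsto u_{-n}$ is again a binomial sum. Applying Theorem~\ref{thm:main} to $u$ and to $n\mapsto u_{-n}$ shows that both $f^{+}\eqdef\sum_{n\geq 0}u_n t^n$ and $f^{-}\eqdef\sum_{n\geq 0}u_{-n}t^n$ are diagonals of rational power series, hence, by Theorem~\ref{thm:diag-dfinite}, that each of $f^{+}$ and $f^{-}$ satisfies a nonzero linear differential equation with polynomial coefficients.

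Next I would pass from differential equations to recurrences by the classical coefficient-extraction argument. If $f^{+}$ satisfies $\sum_j p_j(t)(f^{+})^{(j)}=0$, then, setting $c_n\eqdef u_n$ for $n\geq 0$ and $c_n\eqdef 0$ for $n<0$, the vanishing of the coefficient of $t^n$ in this differential equation yields a recurrence $\sum_i a_i(n)\,c_{n+i}=0$ that holds for \emph{all} $n\in\Z$: for $n\geq 0$ it is the recurrence read off the equation, while for $n<0$ it reduces to $0=0$ because $f^{+}$ is a genuine power series whose coefficients of negative index vanish identically. The same argument applied to $f^{-}$ produces polynomials $b_i$ with $\sum_i b_i(n)\,d_{n+i}=0$ for all $n\in\Z$, where $d_n\eqdef u_n$ for $n\leq 0$ and $d_n\eqdef 0$ for $n>0$. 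Thus both zero-extended half-sequences $c$ and $d$ are annihilated by a nonzero recurrence operator on the whole of $\Z$.

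Finally I would glue. One checks directly that $u_n=c_n+d_n-u_0\,\delta_n$ for every $n\in\Z$, the correction term compensating the overlap at $n=0$; and $\delta$ is annihilated by the order-zero operator ``multiplication by $n$'', since $n\,\delta_n=0$ for all $n$. Because the sequences annihilated by some nonzero element of the Ore algebra $\K(n)\langle S\rangle$ of linear recurrence operators form a $\K$-vector space—given two annihilating operators one takes their least common left multiple, which is again nonzero, and clears denominators to return to coefficients in $\K[n]$—the combination $c+d-u_0\,\delta=u$ is annihilated by a nonzero operator with polynomial coefficients. This is exactly the asserted recurrence $p_r(n)u_{n+r}+\dotsb+p_0(n)u_n=0$, valid at every integer $n$, including the finitely many indices where leading coefficients may vanish. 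The main obstacle, as flagged above, is precisely this gluing step: the two D-finiteness statements only control $u$ on $\N$ and on $-\N$, and it is the zero-extension together with the least-common-left-multiple argument that upgrades them into one identity holding for all $n\in\Z$.
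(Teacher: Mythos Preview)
Your proof is correct and follows exactly the route the paper indicates (the paper does not spell out a proof for this corollary, merely stating that it follows from Theorem~\ref{thm:main} together with the D-finiteness of diagonals, Theorem~\ref{thm:diag-dfinite}). The splitting into the two half-sequences via the two-sided form of Theorem~\ref{thm:main}, the passage from a differential equation to a recurrence valid on all of~$\Z$ via zero-extension, and the gluing $u_n=c_n+d_n-u_0\delta_n$ are precisely the details that need filling in.

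One small technical point in the gluing step: taking the least common left multiple in~$\K(n)\langle S\rangle$ and then clearing denominators does not automatically yield an operator in~$\K[n]\langle S\rangle$ that annihilates the sequence at \emph{every} integer~$n$; a priori there could be finitely many exceptional~$n$ at the poles of the rational coefficients you cleared. (For instance, $A=nS$ annihilates $c_n=\delta_{n-1}$, and $S=\tfrac{1}{n}A$ is a left $\K(n)$-multiple of~$A$ lying in $\K[n]\langle S\rangle$, yet $Sc\neq 0$.) The cleanest fix is to take the common left multiple directly in~$\K[n]\langle S\rangle$, which exists because this ring is a left Ore domain (it is left Noetherian, being a skew polynomial extension of the Noetherian domain~$\K[n]$). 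Alternatively, one can multiply the cleared operator on the left by~$\prod_i(n-n_i)$ over the finitely many exceptional~$n_i$. Either way the conclusion stands.
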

As mentioned in the introduction, this is a special case of a more general result for proper hypergeometric sums that can alternately be obtained as a consequence of the results of \textcite{Lipshitz1989,AbramovPetkovsek2002}.
\begin{corollary}
  [\cite{FlaSor98}]
  If the generating function~$\sum_n u_nt^n$ of a sequence~$u:\N\to\K$
  is algebraic, then~$u$ is a binomial sum.
\end{corollary}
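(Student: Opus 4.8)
The plan is to deduce this immediately by chaining the main equivalence with Furstenberg's theorem, so that essentially all the genuine work has already been done upstream.

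First I would write $f = \sum_{n\geq 0} u_n t^n$ and use the hypothesis that $f$ is algebraic. By the easy direction of Furstenberg's characterization (Theorem~\ref{thm:furst1}), an algebraic power series is the diagonal of a \emph{bivariate} rational power series; thus there is a rational power series $R$ in two variables with $f = \diag R$.

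Next I would note that a bivariate rational power series is, a fortiori, a rational power series, so $f$ is the diagonal of a rational power series. This is exactly the hypothesis under which the nontrivial implication of Theorem~\ref{thm:main}---that diagonals of rational power series are binomial sums---applies. Invoking that implication yields that $u$ is a binomial sum, which is the desired conclusion.

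I do not anticipate any real obstacle: the corollary is a pure composition of two results already in hand, and the substantive content (that diagonals give binomial sums) is precisely what is established in \S\ref{sec:diag2binom}. The only point worth a word is the meaning of ``binomial sum'' for a sequence $u : \N \to \K$, fixed just after Theorem~\ref{thm:main}: one needs a binomial sum $v : \Z \to \K$ supported in $\N$ that agrees with $u$ there. But this is precisely the object produced by the cited implication of Theorem~\ref{thm:main}, so no additional argument is needed.
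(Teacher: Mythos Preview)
Your proposal is correct and matches the paper's approach exactly: the corollary is presented in the paper without its own proof, as an immediate consequence of Theorem~\ref{thm:main} combined with Furstenberg's Theorem~\ref{thm:furst1}. One minor quibble: calling the implication ``algebraic $\Rightarrow$ diagonal of a bivariate rational power series'' the \emph{easy} direction of Furstenberg's theorem is debatable, but this does not affect the argument.
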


\begin{corollary}
  Let~$u:\N\to\Q$ be a binomial sum.  Finitely many primes may divide the
  denominators of values of~$u$. For all primes~$p$ except those, the generating
  function of a binomial sum is algebraic modulo~$p$.
\end{corollary}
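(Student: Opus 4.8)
The plan is to read off this corollary as an immediate combination of the two ingredients already at hand: the equivalence of Theorem~\ref{thm:main} between univariate binomial sums and diagonals of rational power series, and Furstenberg's modular algebraicity result recorded in Theorem~\ref{thm:furst2}.

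First I would apply Theorem~\ref{thm:main} with~$\K = \Q$: since~$u$ is a binomial sum, its generating function~$f = \sum_{n\geq 0} u_n t^n$ is the diagonal of a rational power series. The one point demanding a moment's care is that Theorem~\ref{thm:furst2} requires a rational power series \emph{with rational coefficients}, so I must confirm that the diagonal representation can be taken over~$\Q$ rather than over some extension. This is where I would track the explicit constructions: Corollary~\ref{coro:bs-res} produces the integral representation as the formal residue of a rational function whose coefficients lie in the base field~$\K = \Q$, and the passage from such a residue to an honest diagonal carried out in~\S\ref{sec:binom2diag} likewise remains within~$\Q$. Hence~$f = \diag R$ for some rational power series~$R \in \Q(z_{1\smile d})$.

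With~$f$ exhibited as the diagonal of a rational power series with rational coefficients, Theorem~\ref{thm:furst2} applies verbatim. Its first conclusion---that only finitely many primes divide the denominators of the coefficients of~$f$---is exactly the first assertion of the corollary, because those coefficients are precisely the values~$u_n$. Its second conclusion---that for every prime~$p$ outside this finite set the reduction~$f \bmod p \in \F_p\llbracket t\rrbracket$ is algebraic over~$\F_p(t)$---is exactly the second assertion.

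I do not anticipate any genuine obstacle: the only substantive step is the rationality bookkeeping of the preceding paragraph, and this is routine once the~$\Q$-rational constructions of~\S\ref{sec:gfbs} and~\S\ref{sec:binom2diag} are in place. Everything else is a direct appeal to the two cited results, so the corollary is essentially a one-line deduction from Theorems~\ref{thm:main} and~\ref{thm:furst2}.
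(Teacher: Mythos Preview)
Your proposal is correct and follows exactly the approach the paper intends: the corollary is stated without proof as one of several consequences that ``Theorem~\ref{thm:main} and the theory of diagonals of rational functions imply right away,'' and the relevant piece of that theory here is precisely Theorem~\ref{thm:furst2}. Your extra care in checking that the constructions of~\S\ref{sec:gfbs} and~\S\ref{sec:binom2diag} stay over~$\Q$ is a worthwhile sanity check that the paper leaves implicit.
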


Moreover, Christol's conjecture is equivalent to the following:
\begin{conjecture}
  If an integer sequence~$u:\N\to\Z$ grows at most exponentially and satisfies a
  recurrence~$p_r(n) u_{n+r} + \dotsb + p_0(n) u_n = 0$, for some
  polynomials~$p_{0\smile r}$ with integer coefficients, not all zero, then~$u$ is a binomial sum.
\end{conjecture}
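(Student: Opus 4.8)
The plan is to prove this equivalence by transporting both statements through the correspondence between a sequence and its generating function, under which each hypothesis and the conclusion match clause by clause. To an integer sequence $u:\N\to\Z$ I associate $f = \sum_{n\geq 0} u_n t^n \in \Z\llbracket t\rrbracket$; the assignment $u\mapsto f$ is a bijection between sequences $\N\to\Z$ and power series in $\Z\llbracket t\rrbracket$. I would then verify that, through this bijection, the three hypotheses and the single conclusion of the final conjecture are \emph{equivalent}, not merely implied, to those of Christol's conjecture. Since both are universally quantified over the same objects, such a term-by-term correspondence turns one universal implication verbatim into the other, which is exactly the asserted equivalence.

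The dictionary rests on three facts. First, $u$ grows at most exponentially if and only if $\limsup_n |u_n|^{1/n} < \infty$, which is precisely the condition that $f$ has positive radius of convergence; so the growth hypothesis matches Christol's convergence hypothesis. Second, $u$ is P-recursive --- it satisfies a nonzero recurrence with polynomial coefficients --- if and only if $f$ is D-finite, i.e.\ satisfies a nonzero linear differential equation with polynomial coefficients; this is the classical equivalence between P-recursive sequences and D-finite series. Third, and this is the one substantial input, Theorem~\ref{thm:main} (applied with $\K=\Q$) gives that $u$ is a binomial sum if and only if $f$ is the diagonal of a rational power series, which is exactly the conclusion appearing in Christol's conjecture.

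Assembling these, Christol's conjecture asserts that every $f\in\Z\llbracket t\rrbracket$ with positive radius of convergence that is D-finite is a diagonal of a rational power series. Pushing this statement through $u\mapsto f$ and replacing each clause by its equivalent from the dictionary yields literally the final conjecture: every integer sequence $u$ that grows at most exponentially and is P-recursive is a binomial sum. The two implications are thus one and the same assertion read on the two sides of a bijection, hence logically equivalent.

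The only point demanding genuine care --- and therefore the main obstacle in an otherwise routine argument --- lies in the P-recursive/D-finite step and its coefficient bookkeeping. One must confirm that allowing the annihilating operator to have coefficients in $\Q[t]$ (as in Christol's formulation) versus requiring integer-polynomial coefficients (as in the final conjecture) defines the very same class of sequences, and that the nonvanishing clause transfers under the recurrence-to-equation correspondence; here $\Z[t]\subset\Q[t]$ settles one inclusion and clearing a common denominator settles the other. This is exactly what upgrades the correspondence from a one-way implication to a genuine bijection of hypothesis classes. All the real mathematical content is already carried by Theorem~\ref{thm:main}, which is assumed.
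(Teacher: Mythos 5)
Your proposal is correct and takes exactly the route the paper intends: the statement is presented there as a reformulation of Christol's conjecture obtained from Theorem~\ref{thm:main}, with the remaining clauses matched through the classical sequence/generating-function dictionary (at most exponential growth $\Leftrightarrow$ positive radius of convergence via Cauchy--Hadamard, P-recursive $\Leftrightarrow$ D-finite, with denominators cleared to reconcile $\Z[n]$ against $\Q[t]$ coefficients). You merely make explicit the routine bookkeeping the paper leaves implicit, so there is nothing to add.
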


The proof of Theorem~\ref{thm:main} also gives information on binomial sums depending on
several indices, in the form of a converse of Corollary~\ref{cor:bs-coeff}.
\begin{proposition}\label{converse-coro}
A sequence~$\N^d\to\K$ is a binomial sum if and only if there exists
a rational function~$R(t_{1\smile d}, z_{1\smile r})$, with~$t_{1\smile d}\prec
z_{1\smile r}$, such that~$u_{\ul n}=[\ul t^{\ul n} \ul z^{\ul 0}] R$ for
all~$\ul n\in \Z^d$.
\end{proposition}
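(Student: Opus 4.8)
The forward implication is essentially Corollary~\ref{cor:bs-coeff}. Indeed, under the convention fixed after Theorem~\ref{thm:main}, a binomial sum~$u:\N^d\to\K$ is the restriction to~$\N^d$ of a binomial sum~$v:\Z^d\to\K$ supported in~$\N^d$; applying Corollary~\ref{cor:bs-coeff} to~$v$ yields~$R$ with $[\ul t^{\ul n}\ul z^{\ul 0}]R=v_{\ul n}$ for $\ul n\in\N^d$ and $0$ otherwise, and since~$v$ already vanishes outside~$\N^d$ the two branches agree, giving $[\ul t^{\ul n}\ul z^{\ul 0}]R=v_{\ul n}$ for every~$\ul n\in\Z^d$. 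All the content is in the converse, which I would prove by establishing the following self-contained statement (the engine of~\S\ref{sec:diag2binom}): \emph{for every rational function~$F$ of variables~$w_1,\dotsc,w_N$, the coefficient function $\ul m\in\Z^N\mapsto[\ul w^{\ul m}]F$ is a binomial sum.} Granting this, I write~$\ul w=(t_{1\smile d},z_{1\smile r})$, observe that $[\ul t^{\ul n}\ul z^{\ul 0}]R=[\ul w^{(\ul n,\ul 0)}]R$, and precompose the coefficient function of~$R$ with the affine map $\ul n\mapsto(\ul n,\ul 0)$ (rule~\ref{def:it:lambda}); this is exactly~$u$, hence a binomial sum.

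To prove this statement, I first reduce to a single denominator. Writing~$F=a/f$ with Laurent polynomials~$a,f$ and letting~$\lm(f)$ be the leading monomial of~$f$, I have $F=b\cdot\frac{1}{1-P}$ where $b=a\,\lm(f)^{-1}$ is again a Laurent polynomial and $P=1-f\,\lm(f)^{-1}$ has all monomials~$\prec1$. Multiplication by the fixed Laurent polynomial~$b$ convolves the coefficient function with finitely many index shifts, preserving~$\cB$ (rule~\ref{def:it:lambda} and the algebra structure), so it suffices to treat~$\frac{1}{1-P}$. Expanding the geometric series (Lemma~\ref{lem:geomsum}) and the multinomial theorem, and writing $P=\sum_{j=1}^s c_j\ul w^{\ul b_j}$ with $c_j\in\K\setminus\{0\}$ and $\ul b_j\in\Z^N$, I get
\[ [\ul w^{\ul m}]\frac{1}{1-P}=\sum_{\substack{\ul k\in\N^s\\ k_1\ul b_1+\dotsb+k_s\ul b_s=\ul m}}\binom{k_1+\dotsb+k_s}{k_1,\dotsc,k_s}\,c_1^{k_1}\dotsm c_s^{k_s}, \]
a finite sum for each~$\ul m$.

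The summand is visibly a binomial sum in~$(\ul m,\ul k)$: the multinomial coefficient factors as $\prod_{j<s}\binom{k_j+\dotsb+k_s}{k_j}$, each factor being the binomial sequence (rule~\ref{def:it:binom}) precomposed with an affine map (rule~\ref{def:it:lambda}); each~$c_j^{k_j}$ is geometric (rule~\ref{def:it:geom}); and the constraint $k_1\ul b_1+\dotsb+k_s\ul b_s=\ul m$ is encoded as a product $\prod_{i=1}^N\delta_{m_i-\sum_j k_j b_{ji}}$ of Kronecker deltas of affine forms (rules~\ref{def:it:delta},~\ref{def:it:lambda}). It then remains to realise the summation over~$\ul k$ as iterated directed sums with affine bounds, and here the only geometric input is used: since every monomial of~$P$ is~$\prec 1$, each exponent~$\ul b_j$ is lexicographically positive, so the cone they generate is salient and there is an integer linear form~$\psi$ with $\psi(\ul b_j)\geq 1$ for all~$j$, whence $k_1+\dotsb+k_s\leq\psi(\ul m)$ on the whole fibre. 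Summing each~$k_j$ by a directed sum from~$0$ to~$\psi(\ul m)$---an affine bound, allowed by rule~\ref{def:it:sum} and~\eqref{eqn:generalizedsummation}---captures every nonzero term, while terms outside the fibre vanish thanks to the deltas, and terms with some~$k_j<0$ vanish because of the binomial factors. Thus $[\ul w^{\ul m}]\frac1{1-P}$ is a binomial sum, closing the chain.

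The step I expect to be the crux is the passage from the unordered lattice sum to the directed sums permitted by Definition~\ref{def:bs}: one must produce a \emph{uniform affine} upper bound for the summation indices, valid simultaneously for all~$\ul m$, and this is exactly what the salience of the cone of exponents---i.e.\ the hypothesis $\lm(P)\prec1$---provides. The accompanying integrality and congruence conditions (the fibre $k_1\ul b_1+\dotsb+k_s\ul b_s=\ul m$ need not be unimodular) are the reason one cannot simply invert the linear system; encoding them through the Kronecker deltas is what keeps the construction over the base field~$\K$ without ever adjoining algebraic numbers, as already illustrated by realising $[\ul w^{\ul m}](1-c\,w^2)^{-1}$ as $\ssum_{k=0}^{m}c^{k}\delta_{m-2k}$ rather than through~$\sqrt c$. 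The remaining verifications---the ordering $t\prec z$ and matching the two branches of Corollary~\ref{cor:bs-coeff} on supports---are routine once this summation is in place.
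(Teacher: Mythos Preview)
Your argument is correct and close in spirit to the paper's, but you package the converse more directly. The paper's sketch proceeds in two stages: first it applies the monomial change of variables of~\S\ref{sec:binom2diag} (Lemma~\ref{lem:monord}) to replace~$R$ by a rational \emph{power} series~$S$ and monomials~$w_{1\smile d}$ with~$u_{\ul n}=[\ul w^{\ul n}]S$, and then it feeds this into the multinomial expansion of~\S\ref{sec:diag2binom} together with Proposition~\ref{prop:somme-poly} to realise the fibre sum. You collapse these two stages into one by expanding~$R$ directly as a Laurent series~$b/(1-P)$ and replacing the change of variables by the existence of a separating integer linear form~$\psi$ on the lex-positive exponents of~$P$. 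This is neater: it bypasses the ``peeling off~$\ul z^{\ul m}$'' step of~\S\ref{sec:binom2diag} and proves the needed instance of Proposition~\ref{prop:somme-poly} on the spot. What the paper's detour buys is uniformity with the proof of Theorem~\ref{thm:main}, where one genuinely needs a rational \emph{power} series to speak of diagonals.

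One small imprecision: the claim ``terms with some~$k_j<0$ vanish because of the binomial factors'' is not literally true for~$k_s$ alone, since the factorisation~$\prod_{j<s}\binom{k_j+\dotsb+k_s}{k_j}$ carries no lower index~$k_s$ (e.g.\ $k_{s-1}=1$, $k_s=-2$ gives~$\binom{-1}{1}=-1\neq 0$). Your conclusion is nonetheless sound because your directed sums all share the \emph{same} bound~$\psi(\ul m)$: when that bound is~$\leq -2$ every~$k_j$ lies in~$[\psi(\ul m)+1,-1]$, so in particular~$k_1<0$ and the first factor~$\binom{\sum_j k_j}{k_1}$ vanishes; and when the bound is~$-1$ each directed sum is empty. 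If you prefer a justification that matches the sentence as written, multiply the summand by~$\prod_j H_{k_j}$, or equivalently use usual sums~$\sum_{k_j=0}^{\psi(\ul m)}$ via the identity~$\sum_{k=0}^m=H_m\ssum_{k=0}^m$ recalled in~\S\ref{sec:bsexamples}.
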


\begin{proof}
  [Sketch of the proof]
  The “only if” part is Corollary~\ref{cor:bs-coeff}.
  Conversely, let~$R(t_{1\smile d}, z_{1\smile r})$ be a rational function
  and let~$u_{\ul n}=[\ul t^{\ul n} \ul z^{\ul 0}] R$, for~$\ul n\in\Z^d$.
  We assume that~$u_{\ul n} = 0$ if~$\ul n\notin \N^d$.
  Using the same technique as in~\S\ref{sec:binom2diag}, we show that
  there exist a rational power series~$S(z_{1\smile d+r})$ and monomials~$w_{1\smile d}$
  in the variables~$z_{1\smile d+r}$
  such that~$u_{\ul n} = [\ul w^{\ul n}] S$.
  Then, with the same technique as in~\S\ref{sec:diag2binom}, we write~$u_{\ul n}$ as a binomial sum.
\end{proof}

In other words, binomial sums are exactly the constant terms of rational power series,
where the largest variables, for the order~$\prec$, are eliminated.

\subsection{Binomial sums as diagonals}
\label{sec:binom2diag}

Corollary~\ref{coro:bs-res} and Equation~\eqref{eqn:resdiag} provide an
expression of the generating function of a binomial sum (of one free variable)
as the diagonal of a rational \emph{Laurent} series, but
more is needed to obtain it as the diagonal of a rational \emph{power} series
and obtain the “only if” part of Theorem~\ref{thm:equiv-bs-diag}.

Let~$u:\N\to\K$ be a binomial sum.  In this section, we aim at constructing a
rational power series~$S$ such that~$\sum_{n\geq 0}u_nt^n = \diag S$.  By
Corollary~\ref{cor:bs-coeff}, there exists a rational function~$\tilde R(z_{0\smile
r}) = \frac af$ such that~$u_n = [z_0^n z_1^0 \dotsm z_r^0] \tilde R$ for all~$n\in \Z$.

Recall the notation~$f^\varphi$, for~$f\in\L_{r+1}$ and~$\varphi$ an increasing
endomorphism of~$\Z^{r+1}$, introduced in~\S\ref{sec:laurent}.  For example,
if~$f=z_1+z_2\in\L_2$ and~$\varphi(n_1,n_2) = (n_1,n_1+n_2)$ then~$f^\varphi =
z_1 z_2 + z_2 = z_2(1+z_1)$.

\begin{lemma} For any polynomial $f\in \K[z_{0\smile r}]$ 
  there exists an increasing endomorphism~$\varphi$ of~$\Z^{r+1}$ such
  that~$f^\varphi = C\ul z^{\ul m}(1+g)$, for some~$\ul m\in \N^{r+1}$, $C\in
  \K \setminus \left\{ 0 \right\}$ and $g\in\K[z_{0\smile r}]$ with~$g(0,\dotsc,0)=0$.
  \label{lem:monord}
\end{lemma}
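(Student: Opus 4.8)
The plan is to realize $\varphi$ as a single well-chosen linear map that tilts the Newton polytope of $f$ so that its $\prec$-minimal vertex becomes the componentwise smallest exponent. Write $f = \sum_{\ul n} c_{\ul n}\ul z^{\ul n}$, let $\ul v = v(f)$ be its valuation and $C = c_{\ul v} \neq 0$ the leading coefficient. Since $f$ is a polynomial, $\mop{supp}(f)\subseteq\N^{r+1}$, and by definition of $\ul v$ every $\ul n\in\mop{supp}(f)$ with $\ul n\neq\ul v$ satisfies $\ul v\prec\ul n$; equivalently the shifted exponent $\ul d\eqdef\ul n-\ul v$ is $\prec$-positive, i.e. its first nonzero coordinate is strictly positive. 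Because $f\mapsto f^\varphi$ is a field morphism with $(\ul z^{\ul n})^\varphi=\ul z^{\varphi(\ul n)}$ and $\varphi$ is strictly increasing, the support of $f^\varphi$ is $\varphi(\mop{supp}(f))$ and its $\prec$-minimum is $\varphi(\ul v)$, so $\lm(f^\varphi)=\ul z^{\varphi(\ul v)}$; factoring this monomial out gives $f^\varphi = C\ul z^{\varphi(\ul v)}\big(1+g\big)$ with $g=\sum_{\ul n\neq\ul v}(c_{\ul n}/C)\,\ul z^{\varphi(\ul n-\ul v)}$. Thus it suffices to build an increasing endomorphism $\varphi$ of $(\Z^{r+1},\prec)$ that (i) maps $\N^{r+1}$ into $\N^{r+1}$, so that $\ul m\eqdef\varphi(\ul v)\in\N^{r+1}$, and (ii) maps every shifted exponent $\ul d=\ul n-\ul v$ into $\N^{r+1}$, so that $g\in\K[z_{0\smile r}]$ with $g(\ul 0)=0$.

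For the construction I would take, for a parameter $N\in\N$, the lower-triangular map $\varphi_N$ given by the matrix with entries $N^{\,i-j}$ for $i\geq j$ and $0$ for $i<j$; concretely $\varphi_N(\ul n)_i=\sum_{j\leq i}N^{\,i-j}n_j$. This matrix is unipotent lower-triangular, hence an automorphism of $\Z^{r+1}$. Since the first $i$ output coordinates depend only on $n_0,\dotsc,n_i$ and the diagonal is $1$, a first-difference comparison shows $\varphi_N$ is strictly increasing for $\prec$: if $\ul a\prec\ul b$ differ first at index $k$, then $\varphi_N(\ul a)_i=\varphi_N(\ul b)_i$ for $i<k$ and $\varphi_N(\ul a)_k=\big(\sum_{j<k}N^{\,k-j}b_j\big)+a_k<\big(\sum_{j<k}N^{\,k-j}b_j\big)+b_k=\varphi_N(\ul b)_k$. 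As all entries are nonnegative, $\varphi_N$ also maps $\N^{r+1}$ into $\N^{r+1}$, which settles requirement (i).

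The heart of the argument, and the step I expect to be the main obstacle, is requirement (ii): showing that for $N$ large enough, $\varphi_N(\ul d)\in\N^{r+1}$ simultaneously for all the finitely many shifted exponents $\ul d$. Fix such a $\ul d$ and let $k$ be its first nonzero coordinate, so $d_k\geq 1$ and $d_l=0$ for $l<k$. Then $\varphi_N(\ul d)_i=0$ for $i<k$, $\varphi_N(\ul d)_k=d_k>0$, and for $i>k$,
\[ \varphi_N(\ul d)_i = N^{\,i-k} d_k + \sum_{k<l\leq i} N^{\,i-l} d_l \;\geq\; N^{\,i-k} - B\sum_{j=0}^{\,i-k-1} N^{\,j}, \]
where $B$ bounds the absolute values of all coordinates of all the (finitely many) $\ul d$. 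The geometric sum is $O(N^{\,i-k-1})$ while the leading term is $N^{\,i-k}$, so the right-hand side is positive once $N$ exceeds a bound depending only on $B$ (for instance $N\geq B+2$ works uniformly in $i,k$): the point is that the contribution of the strictly positive leading coordinate $d_k$ dominates all lower-order terms. Choosing such an $N$ gives $\varphi_N(\ul d)\in\N^{r+1}$ for every $\ul d$.

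Taking $\varphi=\varphi_N$ for this $N$ then yields $f^\varphi=C\ul z^{\ul m}(1+g)$ with $\ul m=\varphi(\ul v)\in\N^{r+1}$, $C\in\K\setminus\{0\}$, and $g\in\K[z_{0\smile r}]$ satisfying $g(\ul 0)=0$ (each monomial $\ul z^{\varphi(\ul d)}$ has exponent in $\N^{r+1}$ and is nonconstant since $\varphi$ is injective and $\ul d\neq\ul 0$), which is exactly the claimed form. The degenerate case where $f$ is a single monomial ($g=0$) is covered by $\varphi=\Id$.
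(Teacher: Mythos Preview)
Your proof is correct and takes a genuinely different route from the paper. The paper argues iteratively: it picks the leading monomial~$\ul z^{\ul b}$ and one other monomial~$\ul z^{\ul a}$, finds the first index~$i$ where their exponents differ, and applies the elementary shear $(n_0,\dotsc,n_r)\mapsto(n_0,\dotsc,n_i,n_{i+1}+kn_i,\dotsc,n_r+kn_i)$ for~$k$ large enough to force componentwise divisibility for that single pair; since each such shear has nonnegative matrix entries, previously established divisibilities are preserved, and repeating over all non-leading monomials yields the result. You instead produce the endomorphism in one shot via the explicit lower-triangular matrix $(N^{\,i-j})_{i\geq j}$ and verify directly that a single large~$N$ handles all monomials simultaneously. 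Your approach has the advantage of giving a closed-form~$\varphi$ with an explicit bound on~$N$ in terms of the coefficient spread of the shifted exponents, while the paper's approach is conceptually lighter (each elementary step is transparent) but leaves the composite~$\varphi$ implicit. Both are essentially the same idea---shearing to make the lex-minimal exponent componentwise minimal---executed with different bookkeeping.
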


\begin{proof}
  Let~$\ul z^{\ul a}$ and~$\ul z^{\ul b}$ be monomials of~$f$ such that~$\ul z^{\ul a}
  \prec \ul z^{\ul b}$. Let~$i$ be the smallest integer such that~$a_i \neq
  b_i$. By definition~$a_i > b_i$.
  For~$k\in\N$, let~$\varphi_k : \Z^{r+1} \to \Z^{r+1}$ be defined by
  \[ \varphi_k : (n_0,\dotsc,n_r) \in \Z^{r+1} \mapsto (n_0,\dotsc,n_i,n_{i+1}+k n_i,\dotsc,n_r + k n_i) \in \Z^{r+1}. \]
  It is strictly increasing and if~$k$ is large enough then $\varphi_k( \ul a )
  \geq \varphi_k(\ul b)$ componentwise, that is~$(\ul z ^{\ul b})^{\varphi_k}$
  divides~$(\ul z^{\ul a})^{\varphi_k}$.
  We may apply repeatedly this argument to construct an increasing
  endomorphism~$\varphi$ of~$\Z^{r+1}$ such that the leading monomial
  of~$f^\varphi$ divides all the monomials of~$f^\varphi$, which proves the
  Lemma.
\end{proof}

Let~$\varphi$ be the endomorphism given by the lemma above applied to the denominator of~$\tilde R$.  For~$0\leq 0 \leq
r$, let~$w_i=z_i^\varphi$.  Then
\[ \tilde R(w_{0\smile r}) = \tilde R^\varphi = \frac{a^\varphi}{C \ul z^{\ul m} (1+g)}, \]
because~$\cdot^\varphi$ is a field morphism, as explained in~\S\ref{sec:laurent}.  And by definition of~$\tilde R$, we
have~$u_n = [w_0^n] \tilde R(w_{0\smile r})$ for all~$n\in\Z$.  Let~$R$ be the
rational power series~$\frac{a^\varphi}{C (1+g)}$, so that~$u_n =
[w_0^n](R/{\ul z}^{\ul m})$.  We now prove that we can reduce to the case
where~$\ul m = 0$.  If~$\ul m\neq 0$, let~$i$ be the smallest integer such that~$m_i\neq
0$.  The specialization~$R|_{z_i=0}$ is a rational power series and~$R -
R_{|z_i=0} = z_i T$ for some rational power series~$T$.  For all~$n\geq 0$, the
coefficient of~$w_0^n$ in~$R|_{z_i=0}/{\ul z^{\ul m}}$ is zero 
because the
exponent of~$z_i$ in~$w_0^n$ is nonnegative
while the exponent of~$z_i$ in every monomial in~$R|_{z_i=0}/{\ul z^{\ul m}}$ is~$-m_i<0$.  Thus
$u_n = [w_0^n] \frac{T}{\ul z^{\ul m}/z_i}$,
and we can replace~$R$ by~$T$ and subtract~$1$ to the first nonzero
coordinate of~$\ul m$, which makes~$\ul m$ decrease for the lexicographic
ordering.  Iterating this procedure leads to~$\ul m=0$ and thus~$u_n =
[w_0^n]T$ for some rational power series~$T$.

Let us write~$w_0$ as~$z_0^{a_0}\dotsm z_r^{a_r}$, with~$a_{0\smile r}\in\N^{r+1}$.
If all the~$a_i$'s were equal to~$1$, then~$\sum_n u_n t^n$ would be the
diagonal of~$T$. 
First, we reduce to the case where all the~$a_i$'s are positive.
If some~$a_i$ is zero, then
\[ [w_0^n] T = [z_0^{a_0}\dotsm z_{i-1}^{a_{i-1}}z_{i+1}^{a_{i+1}}  \dotsm z_r^{a_r}] T|_{z_i=0} \]
and we can simply remove the variable~$z_i$.

Second, we reduce further to the case where all the~$a_i$'s are~$1$. Let us consider the following rational power series:
\[ U = \frac{1}{a_0\dotsm a_r}\sum_{\varepsilon_0^{a_0} = 1} \dotsb \sum_{\varepsilon_r^{a_r} = 1} T(\varepsilon_0z_0,\dotsc,\varepsilon_r z_r), \]
where the~$\varepsilon_i$ ranges over the~$a_i$-{th} roots of unity.
By construction, if~$m$ is a monomial in the~$z_i^{a_i}$, then~$[m] U = [m] T$.
In particular~$u_n = [w_0^n]U$.

We may consider~$T$ and~$U$ as elements of the extension of the field
$\K(z_0^{a_0},\dotsc,z_r^{a_r})$ by the roots of the polynomials~$X^{a_i} -
z_i$, for~$0\leq i\leq r$.  By construction, the rational function~$U$ is left
invariant by the automorphisms of this extension.
Thus~$U\in\K(z_0^{a_0},\dotsc,z_r^{a_r})$.  Let~$S(z_{0\smile r})$ be the
unique rational function such that~$U=S(z_0^{a_0},\dotsc,z_r^{a_r})$.  It is a
rational power series and~$u_n = [z_0^n\dotsm z_r^n] S$.  Thus~$\sum_{n\geq
0}u_nt^n = \diag S$, which concludes the proof that binomial sums are diagonals of rational functions, the “only if” part of Theorem~\ref{thm:main}.

\subsection{Summation over a polyhedron}
\label{sec:sumpoly}

To prove that, conversely, diagonals of rational power series are generating functions of
binomial sums, we prove a property of closure of binomial sums under certain
summations that generalize the indefinite summation of rule~\ref{def:it:sum} of Definition~\ref{def:bs}.
Let~$u:\Z^{d+e}\to \K$ be a binomial sum and let~$\Gamma\subset \R^{d+e}$ be a rational polyhedron, that is to say
\[ \Gamma = \bigcap_{\lambda\in \Lambda} \left\{ x\in\R^{d+e} \st \lambda(x) \geq 0 \right\}, \]
for a finite set~$\Lambda$ of linear maps~$\R^{d+e}\to \R$ with integer
coefficients in the canonical bases.
This section is dedicated to the proof of the following:
\begin{proposition}\label{prop:somme-poly}
  If for all~$\ul n\in \Z^d$ the set~$\left\{ \ul m\in \R^e \st (\ul n, \ul m)\in \Gamma \right\}$ is bounded, then the sequence
  \[ v : \ul n\in \Z^d \mapsto \sum_{\ul m\in \Z^e} u_{\ul n,\ul m} \mathbf1_{\Gamma}(\ul n,\ul m) \]
  is a binomial sum, where~$\mathbf1_{\Gamma}(\ul n,\ul m) = 1$ if~$(\ul n,\ul m)\in\Gamma$ and~$0$ otherwise.
\end{proposition}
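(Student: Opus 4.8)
The plan is to first encode the polyhedron as a product of Heaviside factors, turning the statement into a closure property under complete summation of the $\ul m$-variables, and then to sum those variables out one at a time by induction on~$e$, decomposing the parameter space into finitely many chambers on which the summation range over a single variable is an honest interval.

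\textbf{Reduction to a Heaviside product.} Since each $\lambda\in\Lambda$ has integer coefficients, $\lambda(\ul n,\ul m)\in\Z$ for $(\ul n,\ul m)\in\Z^{d+e}$, and the indicator factors as $\mathbf1_{\Gamma}(\ul n,\ul m)=\prod_{\lambda\in\Lambda}H_{\lambda(\ul n,\ul m)}$, because $H_k=1$ exactly when $k\geq 0$. The sequence $H$ is a binomial sum, so each factor $H_{\lambda(\ul n,\ul m)}$ is a binomial sum by rule~\ref{def:it:lambda}; hence $w\eqdef u\cdot\mathbf1_{\Gamma}$ is a binomial sum on $\Z^{d+e}$, being a product of binomial sums, and it is supported inside $\Gamma$. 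With this, $v_{\ul n}=\sum_{\ul m\in\Z^e}w_{\ul n,\ul m}$, a finite sum for each $\ul n$ by hypothesis. It therefore suffices to show: if $w:\Z^d\times\Z^e\to\K$ is a binomial sum supported on a polyhedron with bounded $\ul n$-slices, then $\ul n\mapsto\sum_{\ul m\in\Z^e}w_{\ul n,\ul m}$ is a binomial sum. I prove this by induction on $e$, the case $e=0$ being trivial.

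\textbf{Summing one variable by chambers.} For the inductive step I sum over $m_e$. Split $\Lambda$ according to whether the coefficient of $m_e$ is positive, negative, or zero; the first group lower-bounds $m_e$ and the second upper-bounds it. For any pair $(\lambda_\ast,\mu_\ast)$ of a lower- and an upper-bounding constraint, the condition that they are the binding ones is itself a system of integer linear inequalities in $(\ul n,m_{<e})$ (obtained by cross-multiplying the positive leading coefficients when comparing two bounds), hence cuts out a subpolyhedron $\Gamma_{\lambda_\ast,\mu_\ast}$. Fixing a tie-breaking rule so these chambers partition the parameter lattice, I multiply $w$ by the corresponding product of Heaviside factors (which do not involve $m_e$, so the result is still a binomial sum) and sum over the finitely many chambers. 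On $\Gamma_{\lambda_\ast,\mu_\ast}$ the range of $m_e$ is a single interval with endpoints coming from $\lambda_\ast$ and $\mu_\ast$; since $w$ retains all its Heaviside factors it vanishes outside the true range, so I may sum $m_e$ over any interval with integer affine endpoints that contains this range and invoke closure under summation with linearly varying bounds (rule~\ref{def:it:sum} composed with rule~\ref{def:it:lambda}, as in~\eqref{eqn:generalizedsummation}). Projecting out $m_e$ keeps the slices bounded and the support polyhedral, so the induction continues.

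\textbf{The main obstacle: integrality.} The delicate point is exactly the phrase \emph{``an interval with integer affine endpoints containing the true range''}. When the binding constraint has leading coefficient $c>1$ on $m_e$, the true endpoint is $\lceil -\ell/c\rceil$ with $\ell$ an integer affine form, and there is in general no integer affine function lying uniformly below such a ceiling (or above a floor) over the lattice. I would resolve this by splitting the parameter lattice into residue classes modulo $c$: on the coset where $\ell\equiv s\pmod c$ one has $\lceil -\ell/c\rceil=(s-\ell)/c$, which becomes a genuine integer affine map after reparametrizing the coset through the integer affine substitution $(\ul n,m_{<e})=c\,(\ul n',m'_{<e})+\ul r$ permitted by rule~\ref{def:it:lambda}. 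Reassembling the pieces across residue classes then requires the coset indicator sequences to be binomial sums; this is the genuinely non-formal ingredient, which I would derive from the identity $\mathbf1_{N\mid n}=\tfrac1N\sum_{\zeta^N=1}\zeta^n$, a $\K$-valued combination of geometric sequences (rule~\ref{def:it:geom}) that can be shown to lie in $\cB$ over $\K$ by a descent argument from $\K(\zeta)$, the combination being Galois-stable. Equivalently, one may sidestep integrality entirely by a unimodular triangulation of $\Gamma$, on each simplex of which the elimination bounds are automatically integer affine; I expect the residue-class reparametrization to be the main thing to get right.
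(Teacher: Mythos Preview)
Your approach is genuinely different from the paper's, and the point you flag as ``the main thing to get right'' is in fact a real gap.

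The paper does not do induction on~$e$ with a Fourier--Motzkin--style chamber decomposition. Instead it proves a single convexity lemma: there exists~$C>0$ with~$\abs{\ul m}_\infty\leq C(1+\abs{\ul n}_\infty)$ whenever~$(\ul n,\ul m)\in\Gamma$. Then, using a partition of unity~$1=\sum_i w^{i,+}_{\ul n}+\sum_i w^{i,-}_{\ul n}$ by Heaviside products detecting which signed coordinate~$\pm n_i$ realises~$\abs{\ul n}_\infty$, it replaces the sum over~$\ul m$ by the cube~$\abs{m_j}\leq C(1\pm n_i)$ on each piece. These bounds are honestly integer affine in~$\ul n$, so no integrality issue ever arises. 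This is both shorter and avoids the obstacle you raised.

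Your residue-class resolution, as written, does not close. The coset~$\{\ell\equiv s\pmod c\}$ is a sublattice of index (dividing)~$c$ in~$\Z^{d+e-1}$, but it is \emph{not} of the form~$c\,\Z^{d+e-1}+\ul r$ unless~$\ell$ depends on a single coordinate; so the substitution~$(\ul n,m_{<e})=c\,(\ul n',m'_{<e})+\ul r$ does not parametrise it. Even with a correct sublattice basis~$\psi:\Z^{d+e-1}\hookrightarrow\Z^{d+e-1}$, rule~\ref{def:it:lambda} only gives you that~$w\circ\psi$ is a binomial sum in the \emph{new} variables; to reassemble into a binomial sum in~$(\ul n,m_{<e})$ you would need the inverse direction, which is a rational (not integer) affine map, and rule~\ref{def:it:lambda} does not apply. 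The Galois-descent sketch for~$\mathbf 1_{N\mid n}$ has the same flavour of gap: the expression~$\tfrac1N\sum_{\zeta^N=1}\zeta^n$ is a binomial sum over~$\K(\zeta)$, and ``the combination is Galois-stable'' says the \emph{values} lie in~$\K$, not that the sequence is a binomial sum over~$\K$.

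Your chamber strategy is salvageable, but not via residue classes. Since~$w$ already carries~$\mathbf 1_\Gamma$, you only need integer affine bounds \emph{containing} the true range on each chamber. A further sign-split does this: on~$\{\ell_2\geq 0\}$ take~$B=\ell_2\geq\ell_2/c_2\geq\lfloor\ell_2/c_2\rfloor$, and on~$\{\ell_2<0\}$ take~$B=0>\ell_2/c_2$; symmetrically~$A=-\ell_1$ on~$\{\ell_1\geq 0\}$ and~$A=0$ on~$\{\ell_1<0\}$. These extra splits are again by Heaviside factors, so the result stays a binomial sum, and the induction goes through. Still, the paper's global bound plus orthant partition is the cleaner route.
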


Recall that~$H:\Z\to\K$ is the sequence defined by~$H_n = 1$ if~$n\geq 0$ and~$0$ otherwise; it is a binomial sum, see~\S\ref{sec:bsexamples}.
Thus~$\mathbf1_\Gamma$ is a binomial sum, because~$\mathbf1_\Gamma(\ul n) = \prod_{\lambda\in\Lambda} H_{\lambda(\ul n)}$ and each factor of the product is a binomial sum, thanks to rule~\ref{def:it:lambda} of the definition of binomial sums.
However, the summation defining~$v$ ranges over an infinite set, rule~\ref{def:it:sum} is not enough to conclude directly, neither is the generalized summation of Equation~\eqref{eqn:generalizedsummation}.

For~$\ul m\in\R^e$, let~$\abs{\ul m}_\infty$ denote~$\max_{1\leq i\leq e} |m_i|$.
For~$\ul n\in \Z^d$, let~$B(\ul n)$ be
\[ B(\ul n) \eqdef \sup\left\{ \abs{\ul m}_\infty \st \ul m \in \R^e \text{ and } (\ul n, \ul m) \in \Gamma \right\} \cup \left\{ -\infty \right\}. \]
By hypothesis on~$\Gamma$, $B(\ul n) < \infty$ for all~$\ul n\in\Z^d$.

\begin{lemma}
  There exists~$C>0$ such that~$B(\ul n)\leq C(1+\abs{n}_\infty)$, for all~$\ul n\in \Z^d$.
  \label{lem:linbound}
\end{lemma}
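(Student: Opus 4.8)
The plan is to reduce everything to the structure theory of polyhedra and turn the mere boundedness of each slice into a uniform linear estimate. If $\Gamma$ is empty the statement is vacuous, so I assume $\Gamma\neq\emptyset$ and invoke the Minkowski–Weyl decomposition $\Gamma = P + \Gamma_\infty$, where $P\subset\R^{d+e}$ is a bounded polytope and $\Gamma_\infty = \left\{ x\in\R^{d+e} \st \lambda(x)\geq 0 \text{ for all }\lambda\in\Lambda \right\}$ is the recession cone obtained by reading the defining inequalities as homogeneous constraints. Every point of $\Gamma$ then splits as $(\ul n,\ul m) = (\ul n_0,\ul m_0) + (\ul r,\ul s)$ with $(\ul n_0,\ul m_0)\in P$ and $(\ul r,\ul s)\in\Gamma_\infty$, and since $P$ is bounded there is a constant $M$ with $\abs{\ul n_0}_\infty,\abs{\ul m_0}_\infty\leq M$. (Because the constraints $\lambda$ are homogeneous in $(\ul n,\ul m)$, one in fact has $\Gamma = \Gamma_\infty$ and $P$ reduced to a point, but the argument below does not rely on this.)

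The key step is a linear bound on the cone $\Gamma_\infty$: I claim there exists $C'>0$ such that $\abs{\ul s}_\infty\leq C'\abs{\ul r}_\infty$ for every $(\ul r,\ul s)\in\Gamma_\infty$. First I would observe that the boundedness hypothesis is equivalent to $\Gamma_\infty\cap\left( \left\{ \ul 0 \right\}\times\R^e \right) = \left\{ 0 \right\}$: indeed, if $(\ul 0,\ul s)\in\Gamma_\infty$ with $\ul s\neq 0$ and $(\ul n,\ul m)\in\Gamma$, then $(\ul n,\ul m + t\ul s)\in\Gamma$ for all $t\geq 0$, contradicting boundedness of the slice at $\ul n$. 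Then I would run a compactness argument. The set $K = \left\{ x\in\Gamma_\infty \st \abs{x}_\infty = 1 \right\}$ is compact, being closed and bounded, and on $K$ the continuous function $x = (\ul r,\ul s)\mapsto\abs{\ul r}_\infty$ never vanishes, since a zero would force $\ul r = \ul 0$, hence $x\in\left\{ \ul 0 \right\}\times\R^e$ and thus $x = 0$, contradicting $\abs{x}_\infty = 1$. Therefore $\abs{\ul r}_\infty$ attains a positive minimum $\mu>0$ on $K$, and by homogeneity (normalizing any nonzero $x\in\Gamma_\infty$ to $\abs{x}_\infty = 1$) one gets $\abs{\ul r}_\infty\geq\mu\abs{x}_\infty\geq\mu\abs{\ul s}_\infty$, so that $\abs{\ul s}_\infty\leq\mu^{-1}\abs{\ul r}_\infty=:C'\abs{\ul r}_\infty$.

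Finally I would assemble the pieces. For $(\ul n,\ul m)\in\Gamma$ decomposed as above, $\abs{\ul r}_\infty = \abs{\ul n - \ul n_0}_\infty\leq\abs{\ul n}_\infty + M$, whence
\[ \abs{\ul m}_\infty \leq \abs{\ul m_0}_\infty + \abs{\ul s}_\infty \leq M + C'\abs{\ul r}_\infty \leq M + C'\left( \abs{\ul n}_\infty + M \right). \]
This bounds $\abs{\ul m}_\infty$ by $C'\abs{\ul n}_\infty + (M + C'M)$ uniformly over $(\ul n,\ul m)\in\Gamma$; taking the supremum over the slice at a fixed $\ul n$ gives the same bound for $B(\ul n)$, and choosing $C$ at least $C' + M + C'M$ yields $B(\ul n)\leq C(1+\abs{\ul n}_\infty)$, the inequality being trivial whenever $B(\ul n) = -\infty$.

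The main obstacle I anticipate is precisely the middle step: upgrading the qualitative fact that each slice is bounded into a genuine \emph{uniform, linear} control on the recession cone. Everything hinges on recognizing that boundedness of all slices is equivalent to $\Gamma_\infty$ meeting $\left\{ \ul 0 \right\}\times\R^e$ only at the origin, and then on the compactness and homogeneity argument that converts this into a Lipschitz-type estimate; by contrast, the recombination with the bounded polytope part and the final choice of constant are routine.
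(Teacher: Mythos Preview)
Your proof is correct. Both your argument and the paper's reach the conclusion via compactness, but the packaging differs. The paper argues by contradiction: if no $C$ works, take a sequence $(\ul n_k,\ul m_k)\in\Gamma$ with $|\ul m_k|_\infty/(1+|\ul n_k|_\infty)\to\infty$, pass to a subsequence so that $\ul m_k/|\ul m_k|_\infty\to\ell\neq 0$, and use convexity of $\Gamma$ (the segment from a fixed $\ul p_0\in\Gamma$ to $\ul p_k$, rescaled by $u/|\ul m_k|_\infty$) together with closedness to conclude that $(\ul n_0,\ul m_0 + u\ell)\in\Gamma$ for all $u>0$, contradicting boundedness of the slice at $\ul n_0$. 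You instead invoke Minkowski--Weyl and run a direct min-on-a-compact-set argument on the recession cone. Since here the $\lambda$'s are \emph{linear}, $\Gamma$ is already its own recession cone, so Minkowski--Weyl is overkill (as you note parenthetically) and both proofs are really the same compactness-plus-homogeneity fact seen from two angles: the paper's sequential extraction of a limiting bad direction $\ell$ versus your explicit positive lower bound $\mu$ for $|\ul r|_\infty$ over the unit sphere of $\Gamma$. Your version is slightly more modular and would apply verbatim if the constraints were affine; the paper's is marginally more self-contained, needing only convexity and closedness rather than a structure theorem.
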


\begin{proof}
  By contradiction, let us assume that such a~$C$ does not exist, that is there
  exists a sequence~$\ul p_k = (\ul n_k,\ul m_k)$ of elements of~$\Gamma$ such
  that~$\abs{\ul m_k}_\infty/\abs{\ul n_k}_\infty$ and~$\abs{\ul n_k}_\infty$
  tend to~$\infty$.
  Up to considering a subsequence, we may assume that~$\ul m_k/\abs{\ul m_k}_\infty$
  has a limit~$\ell\in\R^e$, which is nonzero.
  For all~$\alpha\in[0,1]$ and~$k\geq 0$, the point~$\ul p_0 + \alpha (\ul p_k - \ul p_0)$ is in~$\Gamma$, because~$\Gamma$ is convex.
  Let~$u>0$ and let~$\alpha_k = u / \abs{\ul m_k}_\infty$. If~$k$ is large enough, then~$0\leq \alpha_k \leq 1$. Moreover
  \[ \ul p_0 + \alpha_k (\ul p_k - \ul p_0) \mathop{\to}\limits_{k\to\infty} (\ul n_0, \ul m_0 + u \ell). \]
  Yet~$\Gamma$ is closed so~$(\ul n_0, \ul m_0 + u \ell) \in \Gamma$.
  By definition~$ \abs{\ul m_0 + u \ell}_\infty \leq B(\ul n_0) < \infty$.
  This is a contradiction because~$\ell \neq 0$ and~$u$ is arbitrarily large.
\end{proof}
 
Thus, for all~$\ul n\in\Z^d$,
\begin{equation}
  v_{\ul n} = \sum_{\substack{\ul m\in \Z^e \\ |\ul m| \leq C (1+|\ul n|_\infty)}} u_{\ul n,\ul m} \mathbf1_{\Gamma}(\ul n,\ul m).
  \label{eqn:sumwithbounds}
\end{equation}
For~$1\leq i \leq d$, let~$w^{i,+}$ be the sequence
\[ w^{i,+}_{\ul n} \eqdef H_{n_i} \prod_{j=1}^{i-1}( H_{n_i - n_j -1}  H_{n_i + n_j -1})  \prod_{j=i+1}^d (H_{n_i - n_j}H_{n_i + n_j}). \]
After checking that for any~$a,b\in \Z$, $H_{a+b}H_{a-b} = 1$ if~$a \geq \abs{b}$ and~$0$ otherwise, we see
that~$w^{i,+}_{\ul n} = 1$ if~$n_i =
|n|_\infty$ and~$|n_j| < |n_i|$ for all~$j<i$ ; and~$0$ otherwise.  Likewise,
let~$w^{i,-}$ be the sequence
\[ w^{i,-}_{\ul n} \eqdef H_{-n_i-1} \prod_{j=1}^{i-1}( H_{-n_i - n_j -1}  H_{-n_i + n_j -1})  \prod_{j=i+1}^d (H_{-n_i - n_j}H_{-n_i + n_j}), \]
so that~$w^{i,-}_{\ul n}=1$ if~$n_i = -|n|_\infty < 0$ and~$|n_j| < \abs{n_i}$ for all~$j<i$ ; and~$0$ otherwise.
The~$2d$ sequences $w^{i,+}$ and~$w^{i,-}$ are binomial sums
that partition~$1$:
for all~$\ul n\in\Z^d$
\[ 1 = \sum_{i=1}^d w^{i,+}_{\ul n} + \sum_{i=1}^d w^{i,-}_{\ul n}. \]
By design, the sum in Equation~\eqref{eqn:sumwithbounds} rewrites as
\begin{multline*}
   v_{\ul n} = \sum_{i=1}^d w^{i,+}_{\ul n} \sum_{m_1=-C(1+n_i)}^{C(1+n_i)}\dotsm\sum_{m_e=-C(1+n_i)}^{C(1+n_i)} u_{\ul n,\ul m} \mathbf1_{\Gamma}(\ul n,\ul m) \\
   + \sum_{i=1}^d w^{i,-}_{\ul n} \sum_{m_1=-C(1-n_i)}^{C(1-n_i)}\dotsm\sum_{m_e=-C(1-n_i)}^{C(1-n_i)} u_{\ul n,\ul m} \mathbf1_{\Gamma}(\ul n,\ul m),
\end{multline*}
which concludes the proof of Proposition~\ref{prop:somme-poly}, because now the
summation bounds are affine functions with integer coefficients of~$\ul n$.

From a practical point of view, summations over polyhedra can be handled in a very different
way, see~\S\ref{sec:brion-barvinok}.

\subsection{Diagonals as binomial sums}
\label{sec:diag2binom}

We now prove the second part of Theorem~\ref{thm:equiv-bs-diag}: the diagonal
of a rational function is the generating function of a binomial sum.
Let~$R(z_{1\smile d})$ be a rational power series and let us prove that the
sequence defined by~$u_n = [z_1^n\dotsm z_d^n] R$ is a binomial sum.
Since the binomial sums are closed under linear combinations, it is enough to consider the case where the numerator of~$R$ is a monomial and where the constant term of its denominator is~$1$. (It cannot be zero since~$R$ is a power series.)
Thus~$R$ has the form
\[ R = \frac{\ul z^{\ul m_0}}{1 + \sum_{k=1}^e a_k \ul z^{\ul m_k}}, \]
where the~$\ul m_k$'s have nonnegative coordinates and~$\ul m_k \neq 0$ if~$k\neq 0$.
Let~$y_{0\smile e}$ be new variables and let~$S$ be the rational power series
\[ S(y_{0\smile e}) \eqdef \frac{y_0}{1 + \sum_{k=1}^e a_k y_k} = y_0 \sum_{\ul k \in \N^e} \underbrace{\binom{k_1 + \dotsb + k_e}{k_1,\dotsc,k_e} a_1^{k_1}\dotsm a_e^{k_e}}_{C_{\ul k}} y_1^{k_1}\dotsm y_e^{k_e}. \] 
The coefficient sequence~$C_{\ul k}$ of this power series is a binomial sum because the multinomial coefficient is a product of binomial coefficients:
\[ \binom{k_1 + \dotsb + k_e}{k_1,\dotsc,k_e} = \binom{k_1 + \dotsb + k_e}{k_1}\binom{k_2 + \dotsb + k_e}{k_2}\dotsm \binom{k_{e-1} + k_e}{k_{e-1}}. \]
Let~$\Gamma\subset\R\times\R^{e}$ be the rational polyhedron defined by
\[ \Gamma \eqdef \left\{ (n,\ul k)\in \R\times \R^{e} \st
  k_1 \geq 0,\dotsc,k_e \geq 0  \text{ and } \ul m_0 + \smash{\sum_{i=1}^e}\mathrlap{\phantom{\sum}} k_i \ul m_i = (n,\dotsc,n) \right\}.
\]
By construction~$R(z_{1\smile d}) = S(\ul z^{\ul m_0},\dotsc,\ul z^{\ul m_e})$,
and the image of a monomial~$y_0 y_1^{k_1}\dotsm y_e^{k_e}$ is a diagonal
monomial~$z_1^n\dotsm z_d^n$ if and only if~$(n,\ul k)\in \Gamma$.  Thus
\[ [z_1^n \dotsm z_d^n] R = \sum_{k\in\Z^e} C_{\ul k} \mathbf1_\Gamma(n,\ul k). \]
Thanks to the positivity conditions on the~$\ul m_k$'s, it is obvious that~$\Gamma$
satisfies the finiteness hypothesis of Proposition~\ref{prop:somme-poly}.
Thus the sequence~$u_n$ is a binomial sum.

\section{Computing binomial sums}
\label{sec:computing}

\emph{Computing} may have different meanings.  When manipulating sequences like
binomial sums, one may want, for example, to compute recurrence relations that they satisfy, to
decide their equality, to compute their asymptotic behavior or to find \emph{simple}
closed-form formulas for them.
The representation of the generating series of binomial sums as residues or
diagonals of rational functions gives an interesting tool to tackle these goals
for binomial sums, though some questions remain open.

While the customary tool to compute recurrence relations satisfied by binomial
sums is \emph{creative telescoping}, integral representations of binomial sums
give another approach: given a binomial sum we first compute an
integral representation (this is fast and easy, see
Algorithm~\ref{algo:sumtoct}), and next we compute a differential equation
satisfied by the integral, which translates immediately into a recurrence relation
for the binomial sum. The \emph{decision problem}~$A=B$ can be solved by computing recurrence relations
for~$A - B$ and checking sufficiently many initial conditions, see~\S\ref{sec:eqtest} for
more details.

A recurrence relation is also a good starting point for deriving the \emph{asymptotic
behavior} of a univariate binomial sum \parencite[e.g.][]{MezzarobbaSalvy2010}.
However, some constants that determine the asymptotic behavior can be
computed numerically but it is not known how to decide their nullity, which
makes it difficult to catch the subdominant behavior.  A more direct method is
possible in many cases once the generating function has been written as the diagonal of a
rational power series \parencite{PemantleWilson2013,SalvyMelczer2016}.

The problem of \emph{simplifying} binomial sums is still largely open: for example,
given the left-hand side of Strehl's identity (see~\S\ref{sec:strehl}), it is
not known how to \emph{discover} automatically the right-hand side. The only
known automatic simplification procedures consist in computing a recurrence
relation and applying the algorithm of \textcite{Pet92} to find
hypergeometric solutions or the algorithm of \textcite{AbramovPetkovsek1994} to find D'Alembertian solutions.
See~\S\ref{sec:known-identities} and~\S\ref{sec:proof-conjectures} for numerous
examples.

The computation of a recurrence relation for a binomial sum is done in two steps. Firstly, an integral representation is found
and then a Picard-Fuchs equation is computed.

\subsection{Picard-Fuchs equations}
\label{sec:pf}

Let~$\L$ be a field of characteristic zero and let~$R(z_{1\smile r})$
be a rational function with coefficients in~$\L$, written as~$R=\frac{P}{F}$ where~$P$
and~$F$ are polynomials.  Let~$A_F$ be the localized ring~$\L[z_{1\smile r},
F^{-1}]$. It is known that the~$\L$-linear quotient space
\[ H_F \eqdef A_F / \left( \tfrac{\partial}{\partial z_1} A_F + \dotsb + \tfrac{\partial}{\partial z_r} A_F \right) \]
is finite-dimensional \parencite{Gro66}.
Let us assume that there is a derivation~$\partial$ defined on~$\L$.  It
extends naturally, with~$\partial z_i = 0$, to a derivation on~$A_F$ that commutes with the
derivations~$\tfrac{\partial}{\partial z_i}$, so that~$\partial$ defines a
derivation on the~$\L$-linear space~$H_F$.
Since~$H_F$ is finite-dimensional, there exist~$c_{0\smile m}\in\L$, not all zero,
such that
\[ c_m \partial^m R + \dotsb + c_1 \partial R + c_0 R \in \tfrac{\partial}{\partial z_1} A_F + \dotsb + \tfrac{\partial}{\partial z_r} A_F. \]
Now let us assume that~$\L$ is the field of rational functions~$\K(t_{1\smile d})$ and that~$\partial$ is the derivation~$\tfrac{\partial}{\partial t_i}$
for some~$i$.
Then the operator~$\partial$ commutes with the operator~$\res_{z_{1\smile r}}$, as do the multiplications by elements of~$\L$,
and Lemma~\ref{lem:resders} implies that
\[ c_m \tfrac{\partial^m}{\partial t_i^m} f + \dotsb + c_1 \tfrac{\partial}{\partial t_i} f + c_0 f = 0, \]
where~$f = \res_{z_{1\smile r}} R$, with~$t_{1\smile d} \prec z_{1\smile r}$.
Differential equations that arise in this way are called \emph{Picard-Fuchs equations}.

Recall that~$\L_d$ is the field of iterated Laurent series introduced in~\S\ref{sec:laurent}.
A series~$f(t_{1\smile d})\in\L_d$ is called \emph{differentially
finite} if the~$\K(t_{1\smile d})$-linear subspace of~$\L_d$ generated by the
derivatives ${\partial^{n_1+\dotsb+n_d} f}/{\partial t_1^{n_1}\dotsm
\partial t_d^{n_d}}$, for~$n_{1\smile d} \geq 0$, is finite-dimensional. In
particular, a univariate Laurent series~$f\in\L_1$ is differentially finite if
and only if there exist~$m\geq 0$ and polynomials~$p_{0\smile m}\in
\K[t]$, not all zero, such that~$p_m f^{(m)} + \dotsb + p_1 f' + p_0 f = 0$.
In that case, we say that~$f$ is solution of a differential equation of
\emph{order}~$m$ and \emph{degree}~$\max_k \deg p_k$.
The above argument implies the following classical theorem of which \textcite{Lip88} gave an elementary proof. 

\begin{theorem}
  For any rational function~$R(t_{1\smile d}, z_{1\smile r})$ with~$t_{1\smile d}\prec
  z_{1\smile r}$, the residue~$\res_{z_{1\smile r}} R$ is differentially finite.
  \label{thm:res-dfinite}
\end{theorem}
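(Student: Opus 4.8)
The plan is to package the computation sketched just above the statement into a \emph{single} multivariate conclusion, using the finite-dimensional de Rham quotient $H_F$ as a holonomy bound valid for all the variables $t_{1\smile d}$ simultaneously. Write $R=P/F$ with $P,F$ polynomials, set $\L=\K(t_{1\smile d})$, and recall that $A_F=\L[z_{1\smile r},F^{-1}]$ and that the quotient $H_F=A_F/(\tfrac{\partial}{\partial z_1}A_F+\dotsb+\tfrac{\partial}{\partial z_r}A_F)$ is a finite-dimensional $\L$-vector space by Grothendieck's theorem~\parencite{Gro66}. The key observation is that this one fixed space absorbs every mixed $t$-derivative of the class of $R$ at once, so there is no need to produce one Picard--Fuchs equation per variable and then recombine them.

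First I would check that each derivation $\tfrac{\partial}{\partial t_i}$ descends to $H_F$. Differentiating an element of $A_F$ with respect to $t_i$ only raises the power of $F$ in the denominator, so $\tfrac{\partial}{\partial t_i}$ maps $A_F$ into itself; and since it commutes with every $\tfrac{\partial}{\partial z_j}$, it preserves the subspace $\sum_j \tfrac{\partial}{\partial z_j}A_F$ and therefore induces a well-defined derivation of $H_F$. Consequently the $\L$-linear span
\[ V \eqdef \sum_{\ul n\in\N^d} \L\cdot\left(\tfrac{\partial^{n_1+\dotsb+n_d}}{\partial t_1^{n_1}\dotsm\partial t_d^{n_d}}\,[R]\right) \subseteq H_F \]
is finite-dimensional, being a subspace of the finite-dimensional space $H_F$.

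Next I would transfer this bound through the residue. Because the $t_{1\smile d}$ are disjoint from the integrated variables $z_{1\smile r}$, Lemma~\ref{lem:ressepvar} (with empty first variable set) shows that $\res_{z_{1\smile r}}$ is $\L$-linear, and Lemma~\ref{lem:resders} shows it annihilates $\sum_j \tfrac{\partial}{\partial z_j}A_F$; hence it factors through an $\L$-linear map $\overline{\res}\colon H_F\to\L_d$. Since $\res_{z_{1\smile r}}$ also commutes with each $\tfrac{\partial}{\partial t_i}$, for $f=\res_{z_{1\smile r}}R$ we obtain
\[ \tfrac{\partial^{n_1+\dotsb+n_d}}{\partial t_1^{n_1}\dotsm\partial t_d^{n_d}}\,f = \overline{\res}\!\left(\tfrac{\partial^{n_1+\dotsb+n_d}}{\partial t_1^{n_1}\dotsm\partial t_d^{n_d}}\,[R]\right) \in \overline{\res}(V). \]
As $\overline{\res}(V)$ is the image of a finite-dimensional $\L$-space under an $\L$-linear map, it is finite-dimensional, and it contains all the mixed partials of $f$; this is exactly the assertion that $f$ is differentially finite.

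The routine verifications are the bookkeeping points above (stability of $A_F$ under $\tfrac{\partial}{\partial t_i}$, $\L$-linearity of the residue, annihilation of $z$-derivatives), all of which follow from §\ref{sec:residues}. The one genuinely nontrivial ingredient is the finite-dimensionality of $H_F$, which I would take as a black box from~\parencite{Gro66}; everything else is linear algebra. The main conceptual point---and the only place where one must resist arguing variable by variable---is the realization that the same $H_F$ bounds all mixed $t$-derivatives at once, which is precisely what upgrades the one-variable Picard--Fuchs statement to full multivariate differential finiteness.
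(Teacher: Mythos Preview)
Your argument is correct and uses the same core ingredients as the paper: the finite-dimensionality of $H_F$ over $\L=\K(t_{1\smile d})$, the fact that each $\partial/\partial t_i$ descends to $H_F$, and the factoring of $\res_{z_{1\smile r}}$ through $H_F$ via Lemma~\ref{lem:resders}.

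The one genuine difference is the packaging. The paper argues variable by variable: for each fixed~$i$ it produces a linear ODE in~$t_i$ with coefficients in~$\K(t_{1\smile d})$, and then the statement that the span of \emph{all} mixed partials is finite-dimensional follows from the standard (but unstated) equivalence between ``one ODE per variable'' and the definition of differential finiteness given in the paper. Your version sidesteps this equivalence entirely by observing that the mixed $t$-derivatives of $[R]$ already live in the single finite-dimensional space $H_F$, so their images under the $\L$-linear map $\overline{\res}$ span a finite-dimensional subspace of $\L_d$. This is a cleaner match to the paper's stated definition and avoids the implicit appeal to the equivalence; the cost is nil, since the same Grothendieck black box does all the work either way.
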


In previous work \parencites[Theorem~12]{BostanLairezSalvy2013}[\S
I.35.3]{LaiDoct}, we proved the following quantitative result about the size
of Picard-Fuchs equations and the complexity of their computation.  We also described an
efficient algorithm to compute Picard-Fuchs equations \parencite{Lai15}.
\begin{theorem}
  Let~$R \in \K(t, z_{1\smile r})$ be a rational function,
  written as~$R = \frac PF$, with~$P$ and~$F$
  polynomials.
  Let
  \[ N = \max( \deg_{z_{1\smile r}} P + r + 1, \deg_{z_{1\smile r}}F) \ \text{and}\ d_t = \max(\deg_t P, \deg_t F). \]
  Then~$\res_{z_{1\smile r}} R$, with~$t\prec z_{1\smile r}$, is solution of a
  linear differential equation of order at most~$N^r$ and degree at
  most~$(\frac 58 N^{3r} + N^r) \exp(r) d_t$. Moreover, this differential equation
  can be computed with~$\mathcal O(\exp(5r) N^{8 r} d_t)$ arithmetic operations
  in~$\K$, uniformly in all the parameters.
  \label{thm:pf-bounds}
\end{theorem}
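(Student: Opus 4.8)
The plan is to make the abstract argument of the previous subsection quantitative by means of the Griffiths--Dwork reduction, which turns computations in the cohomology space~$H_F$ into linear algebra over~$\K$, and then to read off the Picard--Fuchs equation from the resulting Gauss--Manin connection. Write~$\L = \K(t)$ and~$\mu = \dim_\L H_F$. Since the formal residue~$\res_{z_{1\smile r}}$ vanishes on the image of the operators~$\partial_{z_i}$ (Lemma~\ref{lem:resders}) and commutes with~$\partial=\partial/\partial t$, it factors through~$H_F$; hence any~$\L$-linear relation~$c_m\partial^m R + \dotsb + c_0 R \equiv 0$ in~$H_F$ yields~$c_m f^{(m)} + \dotsb + c_0 f = 0$ for~$f = \res_{z_{1\smile r}} R$, exactly as in the proof of Theorem~\ref{thm:res-dfinite}. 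The whole problem thus reduces to three quantitative tasks: bounding~$\mu$, which controls the \emph{order}; exhibiting a relation whose coefficients have controlled~$t$-degree, which controls the \emph{degree}; and performing the construction by linear algebra of controlled size, which controls the \emph{complexity}.

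For the order, I would homogenize~$F$ in an auxiliary variable and work projectively in~$\mathbb{P}^r$ over~$\L$, identifying~$H_F$ with a graded piece of the de Rham cohomology of the complement of~$\{F = 0\}$. When this hypersurface is smooth, the Griffiths--Dwork method expresses every class~$\frac{A}{F^k}\,\ud z_{1\smile r}$, after lowering the pole order~$k$ by repeatedly subtracting~$\partial_{z_i}$-derivatives, as a combination of finitely many monomial representatives indexed by a basis of the Jacobian (Milnor) algebra~$\K[z_{1\smile r}]/(\partial_{z_1}F,\dotsc,\partial_{z_r}F)$. Since the partials have degree at most~$N-1$, Bézout bounds the dimension of that algebra by~$(N-1)^r \leq N^r$, giving~$\mu \leq N^r$ and an equation of order at most~$N^r$. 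A possibly singular~$F$ is handled by a generic linear change of coordinates (or an infinitesimal deformation): smoothness of the generic fiber holds for all~$t$ outside the zero set of a discriminant~$\Delta\in\K[t]$, and one works over~$\K[t,\Delta^{-1}]$.

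For the degree and the complexity I would track~$t$-degrees through the reduction. Each reduction step is a Jacobian-ideal membership solved by one~$\K$-linear system in bounded degree, so the entries of the Gauss--Manin matrix~$M(t)$ encoding the action of~$\partial$ on the chosen basis lie in~$\K[t,\Delta^{-1}]$, with denominator a power of~$\Delta$ and with~$\deg_t\Delta = O(N^r d_t)$ from the standard degree estimate for the multivariate resultant of the~$\partial_{z_i}F$. Building the scalar equation from the first-order system~$Y' = MY$ by the cyclic-vector construction produces vectors~$R,\partial R,\dotsc,\partial^{\mu}R$ of~$t$-degree~$O(\mu\cdot N^r d_t)$; a Cramer estimate for the~$(\mu+1)$-term dependency among them bounds the degree of the~$c_i$ by~$O(\mu^2\cdot N^r d_t) = O(N^{3r} d_t)$, matching the~$\frac58 N^{3r}+N^r$ shape once the constants are counted exactly. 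The complexity bound follows by charging each reduction and each system solve to dense linear algebra on matrices of size~$O(\exp(r)N^r)$ (the count of monomials up to degree~$O(rN)$ in~$r$ variables) with entries of~$t$-degree~$O(N^r d_t)$, uniformly in all parameters.

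The hard part is not this architecture but the explicit bookkeeping that produces the stated constants: bounding~$\deg_t\Delta$ and the pole-order truncation uniformly, keeping only a factor~$\exp(r)$ (from the number of pole-order strata and the multinomial monomial counts) rather than a worse function of~$r$, and tightening the Cramer estimate to the explicit~$\frac58 N^{3r}$ and the complexity to~$\exp(5r)N^{8r}d_t$. Equally delicate is certifying, in the singular case handled by deformation, that the only denominators introduced are powers of~$\Delta$, so that the final equation has genuinely \emph{polynomial} coefficients in~$t$ of the claimed degree rather than merely rational ones. The complete analysis, including the optimized constants, is carried out in \parencite{BostanLairezSalvy2013} and \parencite{Lai15}.
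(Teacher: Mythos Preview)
The paper does not prove this theorem: it states the result and attributes it to \parencite{BostanLairezSalvy2013} and \parencite{LaiDoct}, with no argument given in the text. Your sketch correctly outlines the Griffiths--Dwork approach used in those references and appropriately defers the explicit constants to them, so your proposal is consistent with---and in fact more detailed than---what the paper itself provides.
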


\subsection{Power series solutions of differential equations}
\label{sec:diffeqseries}

When a power series satisfies a given linear differential equation with polynomial
coefficients, one only needs a few {initial conditions} to determine
entirely the power series.

Let~$\cL \in \K[t]\langle\partial_t\rangle$ be a linear differential operator
in~$\partial_t$ with polynomial coefficients in~$t$. There exists a
unique~$n\in\Z$ and a unique~$b_{\cL}\in \K[a]$ such that~$\cL(t^a) =
b_{\cL}(a) t^{a+n} + o(t^{a+n})$ for all~$a\in\Z$ and~$t\to 0$.  The polynomial~$b_{\cL}$ is
the \emph{indicial polynomial} of~$\cL$ at~$t=0$.  For more
details about the indicial polynomial, see \citetitle{Inc44} \parencite{Inc44}.  It is easy to check that
if~$f\in\K\llbracket t\rrbracket$ is annihilated by~$\cL$, then its leading
monomial~$t^n$ satisfies~$b_\cL(n)=0$.
\begin{proposition}\label{prop:indeq}
  Let~$f\in\K\llbracket t\rrbracket$. If~$\cL(f) = 0$
  and if~$[t^n]f = 0$ for all~$n\in \N$ such that~$b_{\cL}(n) = 0$, then~$f=0$.
\end{proposition}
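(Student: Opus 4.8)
The plan is to argue by contradiction, exploiting the valuation of~$f$ together with the defining property of the indicial polynomial. To keep the notation clear I would rename the integer shift from the definition of~$b_{\cL}$ (there called~$n$) to~$s$, so that every monomial occurring in~$\cL(t^k)$ has exponent~$\geq k+s$ while the coefficient of~$t^{k+s}$ in~$\cL(t^k)$ is precisely~$b_{\cL}(k)$. Note also that for~$k\in\N$ the falling factorials arising from the repeated differentiations kill any would-be negative power, so~$\cL(t^k)$ is itself a genuine power series.

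Assume~$f\neq 0$ and let~$v\in\N$ be its valuation, so that~$c_v\eqdef[t^v]f\neq 0$ and~$[t^k]f=0$ for~$k<v$. The key step is to expand~$\cL(f)=\sum_{k\geq v}[t^k]f\,\cL(t^k)$ and read off the coefficient of the lowest attainable power~$t^{v+s}$. Because every power occurring in~$\cL(t^k)$ is at least~$k+s\geq v+s$, only the index~$k=v$ can contribute to~$t^{v+s}$, and every term with~$k>v$ produces monomials of strictly higher order. This gives
\[ [t^{v+s}]\cL(f)=c_v\,b_{\cL}(v). \]
Since~$\cL(f)=0$ this coefficient vanishes, and as~$c_v\neq 0$ I would conclude that~$b_{\cL}(v)=0$, i.e.\ that the valuation~$v$ is a nonnegative integer root of the indicial polynomial.

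At that point the hypothesis closes the argument: with~$v\in\N$ and~$b_{\cL}(v)=0$, it forces~$[t^v]f=c_v=0$, contradicting~$c_v\neq 0$; hence~$f=0$. I expect no conceptual difficulty here—the entire content is the coefficient extraction, and the only thing to verify carefully is that no monomial~$t^k$ of~$f$ with~$k>v$ can leak into the coefficient of~$t^{v+s}$. This is exactly what the uniform lower bound~$k+s$ on the exponents of~$\cL(t^k)$ guarantees, so this bookkeeping, rather than any deeper idea, is the sole place where care is needed.
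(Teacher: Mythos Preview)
Your proof is correct and follows exactly the approach the paper intends: the sentence immediately preceding the proposition already records that if~$\cL(f)=0$ then the leading exponent of~$f$ is a root of~$b_\cL$, and the proposition is stated without further proof as an immediate consequence. Your write-up supplies the coefficient-extraction details behind that one-line observation, so there is nothing to add.
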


It is worth noting that the indicial
equation of a Picard-Fuchs equation is very special:
\begin{theorem}[\cite{Kat70}]
  If~$\cL$ is a Picard-Fuchs equation, then the degree of~$b_\cL$
  equals the order of~$\cL$ and all the roots of~$b_\cL$ are rational.
\end{theorem}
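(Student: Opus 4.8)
The statement is, as the citation signals, a repackaging of two classical facts about the Gauss--Manin connection, so the plan is to recognize the Picard--Fuchs operator $\cL$ as the operator attached to a cyclic vector of that connection and then to feed in regularity and quasi-unipotence. First I would make the geometric meaning of $\cL$ explicit. Writing $R = P/F$, the quotient $H_F$ of \S\ref{sec:pf} is (a piece of) the de Rham cohomology of the complement of $\{F=0\}$, the derivation $\partial = \partial_t$ acts on $H_F$ as the Gauss--Manin connection, and the relation $c_m\partial^m R + \dotsb + c_0 R \in \tfrac{\partial}{\partial z_1}A_F + \dotsb + \tfrac{\partial}{\partial z_r}A_F$ that defines $\cL$ is exactly the vanishing $\sum_j c_j \partial^j [R] = 0$ in $H_F$. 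Thus $\cL$ is the operator of the cyclic vector $[R]$ for a connection of geometric origin; equivalently, via the analytic reading of residues in \S\ref{sec:anintrep}, $f = \res_{z_{1\smile r}} R$ is a period of the rational $r$-form $R\,\ud z_{1\smile r}$ integrated over a fixed torus cycle and depending on the parameter $t$.

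For the equality $\deg b_\cL = \operatorname{ord}\cL$ I would argue that it is equivalent to $t=0$ being at worst a \emph{regular singular} point of $\cL$: for a linear ODE the indicial polynomial has degree equal to the order precisely at ordinary and regular singular points, the degree dropping exactly when the Newton polygon acquires a positive slope, i.e.\ at an irregular point. Regularity is then supplied by Deligne's theorem that the Gauss--Manin connection of an algebraic family is regular holonomic; since a cyclic-vector operator extracted from a regular connection is again regular singular at every point, $\deg b_\cL = \operatorname{ord}\cL$ follows, and the same reasoning applies at every finite singularity and at infinity.

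For the rationality of the roots of $b_\cL$ I would use that, at a regular singular point, the exponents are exactly those $\rho$ for which $e^{2\pi i \rho}$ is an eigenvalue of the local monodromy. The Local Monodromy Theorem---established in the form needed by \textcite{Kat70} through reduction modulo $p$ and Turrittin's classification---asserts that this monodromy is quasi-unipotent, so its eigenvalues are roots of unity $e^{2\pi i k/a}$. Hence each exponent is congruent modulo $\Z$ to a rational number and is therefore itself rational, which gives rationality of every root of $b_\cL$.

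The routine ODE bookkeeping is not where the difficulty lies. The hard part is the two deep geometric inputs, regularity and quasi-unipotence, together with the care needed to certify that the particular residue $f$ is a period in the cohomological sense these theorems require, and that passing from the connection on $H_F$ to the scalar cyclic-vector operator $\cL$ preserves both the regularity of the singular points and the rationality of the exponents.
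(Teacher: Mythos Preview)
The paper does not prove this theorem: it is stated as a classical result and attributed to \textcite{Kat70} with no argument given, so there is no ``paper's own proof'' to compare against. Your outline is the standard route and is correct in its essentials. You correctly identify that $\deg b_\cL = \operatorname{ord}\cL$ is equivalent, for a scalar operator, to $t=0$ being at worst a regular singular point (this is Fuchs' criterion rephrased for the indicial polynomial as defined just before Proposition~\ref{prop:indeq}), and that regularity is supplied by the Griffiths--Deligne theorem on the Gauss--Manin connection. You also correctly reduce rationality of the exponents to quasi-unipotence of the local monodromy, which is exactly the content of Katz's local monodromy theorem. Two small caveats worth flagging if you want the sketch to be watertight: first, the paper works over an arbitrary field~$\K$ of characteristic zero, whereas the monodromy argument is transcendental, so one should invoke a Lefschetz-principle reduction to a subfield of~$\C$ (or point to Katz's algebraic proof via $p$-curvature, which avoids this); second, the passage from the connection on~$H_F$ to the scalar operator~$\cL$ of a cyclic vector preserves regularity and the set of exponents modulo~$\Z$, and this is straightforward but should be said explicitly, as you do in your last paragraph.
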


The data of a differential operator~$\cL$ and elements of~$\K$ for each
nonnegative integer root of~$b_\cL$ (that we will call here \emph{sufficient
initial conditions}) determines entirely an element of~$\K\llbracket
t\rrbracket$.  It is an excellent data structure for manipulating power
series \parencite{SalvyZimmermann1994}.  For example, it lets one compute
efficiently the coefficients of the power series~$\sum_n u_n t^n$:
the differential equation translates into a recurrence relation
\[ p_r(n) u_{n+r} + \dotsc + p_1(n) u_{n+1} + p_0(n) u_n = 0 \]
for some polynomials~$p_{0\smile r}\in\K[n]$ and the
sufficient initial conditions translate into initial conditions for the
recurrence, exactly where we need them.

\subsection{Equality test for univariate differentially finite power series}
\label{sec:eqDfin}
Let~$f\in\K\llbracket t\rrbracket$ be a power series given by a differential
operator~$\cL\in\K[t]\langle \partial_t \rangle$ such that $\cL(f) = 0$, and by sufficient initial
conditions.  Let~$\cM$ be another differential operator. We may decide
if~$\cM(f)=0$ in the following way.  Firstly, we compute the right g.c.d.
of~$\cM$ and~$\cL$: this is the operator~$\cD$ of the largest order such that~$\cM
= \cM' \cD$ and~$\cL = \cL' \cD$ for some operators~$\cM'$ and~$\cL'$
in~$\K(t)\langle \partial_t \rangle$.  Then, it is enough to compute the
indicial equation~$b_{\cL'}$ and  to compute~$[t^n] \cD(f)$ for each
nonnegative integer root~$n$ of~$b_{\cL'}$.  We will find only zeros if and
only if~$\cM(f) = 0$.  Indeed, $\cM(f) = 0$ if and only if~$\cD(f)= 0$ and
since~$\cL'(\cD(f)) = 0$, we may apply Proposition~\ref{prop:indeq} to check
whether~$\cD(f)= 0$ or not.

Now, let~$g\in\K\llbracket t\rrbracket$ be another power series given by a
differential operator~$\cM\in\K[t]\langle \partial_t \rangle$ and sufficient
initial conditions. To decide if~$f=g$, it is enough to check that~$\cM(f)=0$,
with the above method, and to check that the coefficients of~$f$ and~$g$
corresponding to the nonnegative integer roots of~$b_\cM$ are the same.

\subsection{Equality test for binomial sums}
\label{sec:eqtest}

Let~$u : \Z\to\K$ be a binomial
sum.  Up to considering separately the binomial sums~$H_n u_n$ and~$H_n
u_{-n}$, it is enough to look at the case where~$u_n = 0$ for~$n<0$.  In this
case~$u$ is entirely determined by its generating function~$f=\sum_{n\geq 0}u_n
t^n$.  Using Algorithm~\ref{algo:sumtoct} we obtain an integral representation
of~$f$, and then, as explained in~\S\ref{sec:pf}, we obtain a differential
operator~$\cL$ that annihilates~$f$.  Since~$u$ is a binomial sum given
explicitly, we can compute sufficient initial conditions.
Given another binomial sum~$v:\N\to\K$, we can check that~$u=v$ by computing a
differential operator annihilating the generating function~$g$ of~$v$ together
with sufficient initial conditions and by checking that~$f=g$ with the method
explained in~\S\ref{sec:eqDfin}.

Let us now consider the multivariate case.  As above, we can reduce the
equality test for binomial sums~$\Z^d\to\K$ to the equality test for binomial
sums~$\N^d\to\K$. And this equality test can be reduced to the equality test
for binomial sums~$\N^{d-1}\to\K$ as follows.
Let~$u:\N^d\to\K$ be a binomial sum. It is determined by its generating
function~$f\left( t_{1\smile d} \right) = \sum_{\ul n\in \N^d} u_{\ul n} \ul
t^{\ul n} \in \L_d$.  With Algorithm~\ref{algo:sumtoct}, we compute a rational
function~$R(t_{1\smile d},z_{1\smile r})$ such that~$f = \res_{z_{1\smile r}}
f$. Let~$\K'$ be the field~$\K(t_{1\smile d-1})$.  Following~\S\ref{sec:pf}, we
can compute an operator~$\cL\in\K'[t_d]\langle \partial_{t_d}\rangle$ such
that~$\cL(f)=0$.  This gives a differential equation for~$f$ considered as a
power series in~$t_d$ with coefficients in~$\L_{d-1}$.  The sufficient initial
conditions to determine~$f$ are given by the power series~$\sum_{\ul
n\in\N^{d-1}} u_{\ul n, k} \ul t^{\ul n}$, where~$k$ ranges over the
nonnegative integer roots of~$b_\cL$.  These power series are the generating
functions of binomial sums in~$d-1$ variables.  This reduces the equality test
for binomial sums in~$d$ variables to the equality test for binomial sums
in~$d-1$ variables.

\section{Geometric reduction}
\label{sec:geomred}

Putting into practice the computation of binomial sums through integral
representations shows immediately that the number of integration variables is
high and makes the computation of the Picard-Fuchs equations slow. However, the
rational functions obtained this way are very peculiar. For example, their
denominator often factors into small pieces.  This section presents a
sufficient condition under which a residue~$\res_v F$ of a rational
function~$F$ is rational. This leads to rewriting an iterated residue of a
rational function, like the ones given by Corollary~\ref{coro:bs-res}, into
another one with one or several variables less.  This simplification procedure
is very efficient on the residues we are interested in and reduces the number
of variables significantly. 

Conceptually the simplification procedure is simple: in terms of integrals it
boils down to computing partial integrals in specific cases where we know that
they are rational. The rational nature of a partial integral depends on the
integration cycle and integration algorithms usually forget about this cycle.
Instead, they compute the Picard-Fuchs equations---see~\S\ref{sec:pf}---that 
annihilate a given integral for \emph{any} integration cycle.  In our
setting, the integration cycle underlies the notion of residue---see~\S\ref{sec:anintrep}.
This provides a symbolic treatment that we call
\emph{geometric reduction} since it takes into account the geometry of the
cycle and decreases the number of variables for which the general integration
algorithm above is actually needed.  The time required by the computation is
dramatically reduced by this symbolic treatment.

\subsection{Rational poles}
\label{sec:ratpoles}

Let us consider variables~$v$ and~$z_{1\smile d}$, where~$v$ can appear anywhere in the variable ordering.
Let~$F(v, z_{1\smile d}) = a/f$ be a rational function.  In general, $\res_v F$
is not a rational function---except if~$v$ is the smallest variable, see~\S\ref{sec:residues}.

Let~$\rho\in \L_d$ be a power series in the variables~$z_{1\smile d}$.  The
\emph{classical} residue of~$F$ at~$v=\rho$, denoted~$\Res_{v=\rho} F$, is the
coefficient of~$1/(v-\rho)$ in the partial fraction decomposition of~$F$ as an
element of~$\L_d(v)$.  Contrary to the \emph{formal} residue~$\res_v F$, the
classical residue is always in the field generated by~$\rho$ and the
coefficients of~$F$.  Similarly to the formal residues, the classical residues
of a derivative~$\partial G/\partial v$ are all zero.

Classical residues can be computed in a simple way: if~$\rho$ is not a
pole of~$F$, then~$\Res_{v=\rho} F = 0$; if~$\rho$ is a pole of order~$1$,
then~$\Res_{v=\rho}F = \left( (v-\rho)F \right)|_{v=\rho}$; and if~$\rho$ is a
pole of order~$r>1$, then
\[ \Res_{v=\rho}F = \frac{1}{(r-1)!}\left.\left( \frac{\partial^{r-1}(v-\rho)^r F}{\partial^{r-1} v} \right)\right|_{v=\rho}. \]

In its simplest form, the geometric reduction applies when~$f$ factors over~$\K(z_{1\smile d})$ as a
product of factors of degree~$1$:
\[ f = C(z_{1\smile d}) \prod_{\rho \in U}(v-\rho)^{n_\rho}, \]
where~$U$ is a finite subset of~$\K(z_{1\smile d})$.
Then the partial fraction decomposition of~$F$ writes as
\[ F = \sum_{\rho\in U}\left( \frac{a_\rho}{v-\rho} + \sum_{k>1}\frac{b_{\rho,k}}{(v-\rho)^k} \right) + P(v), \]
where~$a_\rho\in \K(z_{1\smile d})$ is~$\Res_{v=\rho} F$,
where~$b_{\rho,k}\in\K(z_{1\smile d})$ and where~$P\in \K(z_{1\smile
d})[v]$.
The terms with multiple poles and the polynomial term are derivatives and thus
their formal residue with respect to~$v$ are zero.  Hence
\[ \res_v F = \sum_{\rho\in U} \res_v\left( \frac{a_\rho}{v-\rho} \right). \]
Let~$\rho\in U$. Depending on the leading monomial~$\lm(\rho)$ of~$\rho$, as an element of~$\L_d$, two situations may occur.
Either~$\lm(\rho) \prec v$, in which case
\[ \frac{a_\rho}{v - \rho} = \frac{a_\rho}{v}\sum_{n=0}^{\infty} \left( \frac{\rho}{v} \right)^n, \]
and~$\res_v \frac{a_\rho}{v - \rho} = a_\rho$; or~$\lm(\rho) \succ v$, in which case
\[ \frac{a_\rho}{v - \rho} = -\frac{a_\rho}{\rho}\sum_{n=0}^{\infty} \left( \frac{v}{\rho} \right)^n, \]
hence~$\res_v \frac{a_\rho}{v - \rho} = 0$. Since the variable~$v$ does not appear in~$\rho$, the equality~$\lm(\rho) = v$ cannot happen.
In the end, we obtain that
\begin{equation}
  \res_v F = \sum_{\rho\in U} [\lm(\rho)\prec v] \Res_{v=\rho}F,
  \label{equ:res-simple}
\end{equation}
where the bracket is~$1$ if the inequality inside is true and~$0$ otherwise.
In particular, the right-hand side is a rational function that we can compute.

\begin{example}\label{exem:simpl1}
  Let~$d > 0$ be an integer and let us consider the binomial sum
  \[ u_n =
    \sum_{k=0}^n (-1)^k \binom{n}{k}\binom{dk}{n}. \]
  We will show that~$u_n = (-d)^n$.
  This example is interesting because the geometric reduction procedure alone is
  able to compute entirely the double integral representing the generating function of~$u$,
  whereas Zeilberger's algorithm finds a recurrence relation of order~$d-1$ \parencite{PauleSchorn1995}, far from the minimal recurrence relation~$u_{n+1}+du_n=0$.

  Algorithm~\ref{algo:sumtoct} computes that
  \[ \sum_{n \geq 0}u_n t^n = \res_{  z_1,z_2 } \frac{z_2}{(z_2-t(1+z_1))(z_1z_2 + t(1+z_1)(1+z_2)^d)}, \]
  with~$t\prec z_1 \prec z_2$.
  Let~$F$ denote the rational function on the right-hand side.
  Each factor of the denominator of~$F$ has degree~$1$ with respect to~$z_1$.
  Thus Equation~\eqref{equ:res-simple} applies. The roots of the denominator are
  \[ \rho_1 = \frac{z_2}{t}-1 \quad\text{and}\quad \rho_2 = \frac{-1}{1+\frac{z_2}{t(1+z_2)^d}}, \]
  moreover~$\lm(\rho_1) = z_2/t \succ z_1$ and~$\lm(\rho_2) = t/z_2 \prec z_1$.
  Thus~$\res_{z_1} F = \Res_{z_1=\rho_2} F$ and we obtain
  \[ \sum_{n \geq 0}u_n t^n = \res_{z_2}\res_{z_1} F = \res_{z_2} \frac{1}{t(1+z_2)^d+z_2-t}. \]
  If~$d>2$, the denominator of the latter rational function does not split into
  factors of degree~$1$ and Equation~\eqref{equ:res-simple} does not apply.
  However, the study of nonrational poles can lead to a further reduction. 
\end{example}

\subsection{Arbitrary poles}\label{sec:arbitrary-poles}
Equation~\eqref{equ:res-simple} extends to the general case.
To describe this generalization, we need an algebraic closure of~$\L_d$.
Let~$\L_{d,N}$ be the field
\[ \L_{d,N} \eqdef \K\slf{z_d^{1/N}}\slf{z_{d-1}^{1/N}}\dotsc\slf{z_2^{1/N}}\slf{z_1^{1/N}}. \]
It is the algebraic extension of~$\L_d$ generated by the~$z_i^{1/N}$.
Let~$\L_{d,\infty}$ be the union of all~$\L_{d,N}$, $N>0$, and let~$\overline{\K}\otimes_\K \L_{d,\infty}$ be the
compositum of~$\L_{d,\infty}$ and~$\overline{\K}$, where~$\overline{\K}$ is an algebraic closure of~$\K$.
The following is classical \parencite[e.g.][]{Ray74}.
\begin{lemma}
  [Iterated Puiseux theorem]
  The field~$\overline{\K}\otimes_\K \L_{d,\infty}$ is an algebraic closure of~$\L_d$.
\end{lemma}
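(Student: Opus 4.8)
The plan is to argue by induction on the number $d$ of variables, using the classical one-variable Puiseux theorem as a black box: for any field $C$ that is algebraically closed of characteristic zero, $\bigcup_{N\geq 1}C\slf{z^{1/N}}$ is an algebraic closure of $C\slf{z}$. Write $\Omega_d\eqdef\overline{\K}\otimes_\K\L_{d,\infty}$. For $d=0$ one has $\L_0=\K$ and $\Omega_0=\overline{\K}$, so the statement holds. For the inductive step set $M\eqdef\K\slf{z_d}\dotsm\slf{z_2}$, the iterated Laurent field in $z_2,\dots,z_d$, so that $\L_d=M\slf{z_1}$; applying the induction hypothesis to these $d-1$ variables, the corresponding field $\Omega'$ is an algebraic closure of $M$, in particular an algebraically closed field of characteristic zero. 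Two things then have to be checked: that $\Omega_d$ is algebraic over $\L_d$, and that $\Omega_d$ is algebraically closed. The first is immediate, since each $z_i^{1/N}$ is a root of $X^N-z_i\in\L_d[X]$ and $\overline{\K}/\K$ is algebraic, so $\Omega_d/\L_d$ is algebraic by transitivity. Granting the second, $\Omega_d$ is an algebraic closure of $\L_d$. Moreover, by the standard fact that a field algebraic over a subfield $k$ in which every nonconstant polynomial of $k[X]$ has a root is algebraically closed, it suffices to produce, for every nonconstant $P\in\L_d[X]$, a single root lying in $\Omega_d$.

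So fix such a $P$ and a root $\alpha$ in some algebraic closure of $\L_d$. By the classical Puiseux theorem over the algebraically closed field $\Omega'$, I may realize this algebraic closure inside $\bigcup_N\Omega'\slf{z_1^{1/N}}$, so that $\alpha=\sum_{k\geq k_0}c_k\,z_1^{k/N_1}$ for some $N_1$ and coefficients $c_k\in\Omega'$. The crux is to show that all the $c_k$ lie in a single \emph{finite} extension of $M$. For this I would use a conjugation argument. Any $M$-automorphism $\tau$ of $\Omega'$ extends to an $\L_d$-automorphism $\sigma$ of $\bigcup_N\Omega'\slf{z_1^{1/N}}$ acting coefficientwise and fixing each $z_1^{1/N}$; it sends $\alpha$ to $\sum_k\tau(c_k)z_1^{k/N_1}$, which must be one of the finitely many $\L_d$-conjugates of $\alpha$. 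Hence $\tau\mapsto(\tau(c_k))_k$ has finite image. Since $(\tau(c_k))_k$ determines the restriction of $\tau$ to $L\eqdef M(\{c_k\}_k)$, and since every $M$-embedding $L\hookrightarrow\Omega'$ extends to such a $\tau$ (because $\Omega'$ is an algebraic closure of $M$), the separable extension $L/M$ admits only finitely many embeddings into $\Omega'$ and is therefore finite.

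It remains to collapse the denominators. As $L/M$ is finite, the induction hypothesis puts $L$ inside a single level $\K'\slf{z_d^{1/N_0}}\dotsm\slf{z_2^{1/N_0}}$ for some finite extension $\K'/\K$ and some $N_0$. Taking $N=\mathrm{lcm}(N_0,N_1)$, every $c_k$ lies in $\K'\slf{z_d^{1/N}}\dotsm\slf{z_2^{1/N}}$ and every $z_1^{k/N_1}$ lies in $\K'\slf{z_1^{1/N}}$, so $\alpha\in\K'\cdot\L_{d,N}\subseteq\overline{\K}\otimes_\K\L_{d,\infty}=\Omega_d$. This exhibits a root of $P$ in $\Omega_d$ and completes the induction.

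I expect the main obstacle to be exactly this uniform-denominator step. A priori the Puiseux coefficients $c_k$ range over the whole algebraic closure $\Omega'$ of the inner field, with no common bound on the roots of $z_2,\dots,z_d$ that they involve, alongside a separate ramification index $N_1$ in $z_1$; the substance of the argument is that algebraicity of $\alpha$ over $\L_d$ forces all of this into one finite extension, so that a single integer $N$ governs every variable simultaneously. This is precisely what distinguishes the compositum $\Omega_d$—whose elements use a uniform $N$—from the larger, non-algebraic field $\bigcup_N\Omega'\slf{z_1^{1/N}}$, and it is where characteristic zero (tameness, and separability of $L/M$) is used.
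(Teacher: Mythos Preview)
The paper does not actually prove this lemma: it simply labels it as classical and cites Raynaud. Your proposal therefore supplies an argument where the paper gives none, and the argument you sketch is correct.

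The two nontrivial steps are handled properly. First, the reduction to finding a root of each $P\in\L_d[X]$ inside $\Omega_d$ is valid in characteristic zero: if $E/k$ is algebraic and every nonconstant polynomial over $k$ has a root in $E$, then $E=\overline{k}$ (take the Galois closure of $k(\alpha)$, write it as $k(\beta)$ by the primitive element theorem, and note that a root of the minimal polynomial of $\beta$ in $E$ generates the whole Galois closure). Second, your conjugation argument for the uniform-denominator step is sound: the coefficientwise extension of $\tau\in\mathrm{Aut}_M(\Omega')$ to $\bigcup_N\Omega'\slf{z_1^{1/N}}$ is indeed an $\L_d$-automorphism, and the resulting map $\tau\mapsto(\tau(c_k))_k$ into the finite set of conjugates of $\alpha$ does force $L=M(\{c_k\})$ to have only finitely many $M$-embeddings into $\Omega'$, hence to be finite over $M$ (an infinite separable algebraic extension has infinitely many embeddings into the algebraic closure, by extending embeddings from an exhausting chain of finite subextensions). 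The final collapse of denominators then follows because a finite extension of $M$ is generated by finitely many elements of $\Omega'$, each lying at some finite level, and the lcm of those levels works.
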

The field~$\overline{\K}\otimes_\K \L_{d,\infty}$ is thus simply denoted~$\overline{\L_d}$.
The valuation defined on~$\L_d$ is extended to a valuation defined on~$\overline{\L_d}$
with values in the group~$\Q^d$, ordered lexicographically.
The leading monomial~$\lm(\rho)$ of an~$f\in\overline{\L_d}$ is also defined as~$\ul z^{v(f)}$.
The argument of~\S\ref{sec:ratpoles} applies and shows that
\begin{equation}
  \res_v F = \sum_{\text{$\rho$ pole of~$F$}} [\lm(\rho) \prec v] \Res_{v=\rho}F,  
  \label{equ:algpoles}
\end{equation}
where this time, the poles are in~$\overline{\L_d}$.
A root~$\rho$ is called
\emph{small} if~$\rho=0$ or~$\lm(\rho) \prec v$ and \emph{large} otherwise.

\begin{example}
  Let~$F = \frac{1}{xy(y^2+y-x)}$, with~$x \prec y$.
  With respect to~$y$, the poles of~$F$ are~$0$, and
  \[ \rho_1 = -\frac12 + \frac{\sqrt{1+4x}}{2} \quad\text{and}\quad \rho_2 =  -\frac12 - \frac{\sqrt{1+4x}}{2}. \]
  Only~$0$ and~$\rho_1$ are small.
  Thus
  \begin{align*}
    \res_y F &= \Res_{y=0} F + \Res_{y=\rho_1} F = -\frac{1}{x^2} + \frac{2}{x+4x^2  - x\sqrt{1+4x}}.
  \end{align*}
\end{example}

Equation~\eqref{equ:algpoles} does not look as interesting as
Equation~\eqref{equ:res-simple} because the right-hand side is algebraic and
need not be a rational function.  However, if all roots are large, then the
residue is zero, which is interesting.  On the contrary, if they are all small,
then~$\res_v F$ is the sum of all the classical residues of~$F$, which equals
the residue at infinity:
\[ \res_v F =
  \Res_{v=\infty}F \eqdef \Res_{v=0}\left(
  -\frac{1}{v^2}F |_{v\gets 1/v} \right), \]
which is a rational function.  Thus, in the case where the poles are either all small
or all large, Equation~\eqref{equ:algpoles} shows that the residue~$\res_v F$ is
rational and shows how to compute it.

Actually it is enough to check that any two conjugate poles (two poles are conjugate if they annihilate the same irreducible factor of the denominator of~$F$) are simultaneously large or small.
Indeed, we can write~$f = \prod_{k=1}^r f_k^{n_k}$
where~$f_{1\smile r}$ are irreducible polynomials in~$\K(z_{1\smile d})[v]$,
and the partial fraction decomposition leads to
\[ F = \sum_{k=1}^r \frac{a_k}{f_k^{n_k}} \]
for some polynomials~$a_{1\smile r}$.
Equation~\eqref{equ:algpoles} applies to each term of this sum.
If~$U_k$ denotes the set of all the roots of~$f_k$, then
\[ \res_v F = \sum_{k=1}^r \sum_{\rho \in U_k} [\lm(\rho) \prec v] \Res_{v=\rho}\frac{a_k}{f_k^{n_k}}, \] 
and we can apply the \emph{all large or all small} criterion to each subsum separately.

\begin{proposition}\label{prop:geomred}
  With the notations above, if for all~$k$, there exists~$\varepsilon_{k} \in
  \left\{ 0,1 \right\}$ such that~$[\lm(\rho) \prec v] = \varepsilon_k$ for
  all~$\rho\in U_k$, then
  \[ \res_v F = \sum_{k=1}^r \varepsilon_k \Res_{v=\infty} \frac{a_k}{f_k^{n_k}}.\]
\end{proposition}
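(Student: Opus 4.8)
The plan is to reduce everything to the identity displayed immediately before the statement, namely Equation~\eqref{equ:algpoles} applied to each summand of the partial fraction decomposition~$F = \sum_{k=1}^r a_k/f_k^{n_k}$:
\[ \res_v F = \sum_{k=1}^r \sum_{\rho \in U_k} [\lm(\rho) \prec v]\, \Res_{v=\rho}\frac{a_k}{f_k^{n_k}}. \]
The hypothesis is exactly that, for each fixed~$k$, the bracket~$[\lm(\rho)\prec v]$ takes the same value~$\varepsilon_k$ for every~$\rho\in U_k$. So the first step is to pull~$\varepsilon_k$ out of the inner sum, which turns the right-hand side into~$\sum_{k=1}^r \varepsilon_k \sum_{\rho\in U_k}\Res_{v=\rho}(a_k/f_k^{n_k})$. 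At this point the all-large factors, those with~$\varepsilon_k=0$, drop out automatically, and all that remains is to evaluate the inner sum.

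The key step is to recognize that, for each~$k$, one has~$\sum_{\rho\in U_k}\Res_{v=\rho}(a_k/f_k^{n_k}) = \Res_{v=\infty}(a_k/f_k^{n_k})$. This is a purely algebraic fact about a single-variable rational function over~$\overline{\L_d}$: the sum of its classical residues over all finite poles equals its residue at infinity, which is the residue theorem on the projective line and is exactly what was invoked to treat the all-small case in the discussion preceding the statement. What makes it applicable factor by factor is the irreducibility of~$f_k$ over~$\K(z_{1\smile d})$: it guarantees that the elements of~$U_k$ are \emph{all} the finite poles of~$a_k/f_k^{n_k}$ in the variable~$v$, so no pole is omitted from the sum and the identity holds verbatim, independently of the value of~$\varepsilon_k$. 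Substituting this into the expression above gives~$\res_v F = \sum_{k=1}^r \varepsilon_k \Res_{v=\infty}(a_k/f_k^{n_k})$, as claimed.

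I expect the only genuinely delicate point to be this identification of the sum of finite classical residues with the residue at infinity, and in particular pinning down the sign convention so that it reads~$+\Res_{v=\infty}$ rather than~$-\Res_{v=\infty}$; this is forced by the convention under which the all-small case was stated as~$\res_v G = \Res_{v=\infty}G$. A secondary check is that the decomposition~$F=\sum_k a_k/f_k^{n_k}$ leaves no polynomial part that could contribute to~$\res_v$; since a polynomial in~$v$ has no~$1/v$ term, its formal residue vanishes, so it affects neither side and the argument goes through unchanged. Everything else is bookkeeping: the linearity of~$\res_v$ and the elementary observation that the hypothesis lets the bracket be replaced by the single constant~$\varepsilon_k$ on each~$U_k$.
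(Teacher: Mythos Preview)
Your proposal is correct and follows exactly the paper's own argument: apply Equation~\eqref{equ:algpoles} term by term to the partial fraction decomposition, use the hypothesis to factor out~$\varepsilon_k$, and invoke the all-large/all-small observation (sum of all finite classical residues equals~$\Res_{v=\infty}$) on each summand. Your remarks on the sign convention and on the harmless polynomial part are sound and only make explicit what the paper leaves implicit.
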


It remains to explain how to compute the leading monomials of the roots of a
polynomial~$f\in \K(z_{1\smile d})[v]$. If~$d=1$, then it is the classical
method of Newton's polygon for the resolution of bivariate polynomial equations
using Puiseux series \parencite[e.g.][]{Wal50,Cut04}.  When~$d\geq 2$,
we may apply this method recursively by considering~$\overline{\L_d}$ as a
subfield of the field of Puiseux series with coefficients
in~$\overline{\L_{d-1}}$.  Based on this idea, Algorithm~\ref{algo:tor}
proposes an implementation which avoids all technicalities.
Algorithm~\ref{algo:ratres} sums up the geometric reduction procedure with
respect to the variable~$v$. Of course one should apply the procedure
iteratively with every variable until no further reduction is possible.
The success of the reduction may depend on the order in which we eliminate all the variables.
It is possible to try every possible order efficiently thanks to the following compatibility property: if both $\textsc{GeomRed}(\textsc{GeomRed}(R, u), v)$ and $\textsc{GeomRed}(\textsc{GeomRed}(R, v), u)$ do not fail, then they are both equal to~$\res_{u,v}(R)$.

\begin{algo}[t]
\centering
\begin{algorithmic}
  \Function{\allbigallsmall}{$S, k$}
    \If{$\max_{m\in S} m_k = \min_{m\in S} m_k$}
        \Return $\varnothing$
    \EndIf
    \If{$k = 1$}
      \Return $\{\textsc{out}\}$
    \EndIf
    \State $\mu\gets\min_{m\in S} m_1$
    \State $M\gets \left\{ (m_{2\smile n}) \in \N^{d-1} \st m \in S\text{ and }m_1 = \mu \right\}$
    \State $r \gets \allbigallsmall(M, k-1)$
    \If{ $\max_{m\in S} m_k > \max_{m\in M} m_k$}
       $r \gets r \cup \{\textsc{out}\}$
    \EndIf
    \If{ $\min_{m\in S} m_k < \min_{m\in M} m_k$ }
       $r \gets r \cup \{\textsc{in}\}$
    \EndIf
    \State \Return $r$
  \EndFunction
\end{algorithmic}
\caption[Computation of the \emph{all large or all small} criterion]{
  Computation of the \emph{all large or all small} criterion
  \begin{description}
    \item[Input] $S\subset \N^{d}$ finite; an integer~$1\leq k \leq d$.
    \item[Output] A subset of~$\left\{ \textsc{in},\textsc{out} \right\}$.
    \item[Specification] Let~$P=\sum_{\ul n\in\N^d} a_{\ul n} z^{\ul n}\in\K[z_{1\smile d}]$ be a polynomial.
      If~$S = \{\ul n \in\N^d\ |\  a_{\ul n} \neq 0 \}$
  then $\allbigallsmall(S,k)$ contains \textsc{out} (resp.
  \textsc{in}) if and only if there exists a nonzero~$\rho\in\overline{\L_d}$ in which~$z_k$ does not appear
  such that~$P(z_1,\dotsc,z_{k-1},\rho,z_{k+1},\dotsc,z_d) = 0$ and~$\mop{lm}(\rho) \succ z_k$
  (resp.~$\mop{lm}(\rho) \prec z_k$).
  \end{description}}
\label{algo:tor}
\end{algo}

\begin{algo}[t]
\centering
\begin{algorithmic}
  \Function{GeomRed}{$F$, $k$}
  \State Decompose~$F$ as~$\sum_{i=1}^r a_i/f_i^{n_i} + P(v)$
  where the~$f_i$'s are irreducible polynomials in~$\K[z_{1\smile d}]$ and $a_{1\smile r}, P\in\K(z_{1\smile k-1},z_{k+1\smile d})[z_k]$.
  \State $S \gets 0$
  \For{$i$ from~$1$ to~$r$}
  \State $\tau \gets \allbigallsmall\left( \{\text{exponents of the monomials of~$f_i$} \}, k \right)$
    \If{$\tau \subset \{ \textsc{in} \}$}
       $S \gets \Res_{z_k =\infty}( a_i/f_i^{n_i} )$
    \EndIf
    \If{$\tau = \left\{ \textsc{in}, \textsc{out} \right\}$}
      \Return \textsc{Fail}
    \EndIf
  \EndFor
  \State \Return $S$
  \EndFunction
\end{algorithmic}
\caption[Geometric reduction]{
  Geometric reduction
  \begin{description}
    \item[Input] $F(z_{1\smile d})$, a rational function; an integer~$1\leq k \leq d$.
    \item[Output] \textsc{Fail} or a rational function $S(z_{1\smile k-1},z_{k+1\smile d})$.
    \item[Specification] 
      If a rational function~$S$ is returned and then~$\res_{z_k} F = S$.
  \end{description}}
\label{algo:ratres}
\end{algo}

\begin{example}
  Continuing Example~\ref{exem:simpl1}, let us consider~$\res_z F_1$,
  where~$F_1=1/(t(1+z)^d+z-t)$, with~$t\prec z$.  The denominator factors
  as~$z g$, where~$g=1+t\sum_{k=0}^{d-1} \binom{d}{k+1}z^k$.  Thus, the leading
  monomial of any root of~$f$ is~$t^{-\frac{1}{d-1}}$, which it not~$\prec z$.
  All the roots of~$g$ are {large}, so only the pole~$z=0$ remains and
  $\res_z F_1 = \Res_{z=0} F_1$, by Proposition~\ref{prop:geomred}.
  Thus~$\res_z F_1 = \frac{1}{1+dt}$ and it follows that
  \[ \sum_{k=0}^n (-1)^k \binom{n}{k}\binom{dk}{n} = (-d)^n. \]
  If we computed the Picard-Fuchs equation of~$F_1$, we would find a
  differential equation of order~$d$ because the Picard-Fuchs operator
  annihilates \emph{all} the periods, and the periods associated to the roots
  of~$g$ are algebraic of degree~$d-1$.
\end{example}

\section{Optimizations}
\label{sec:opt}

\subsection{Infinite sums}
So far, we have only considered binomial sums in which the bounds of the~$\sum$
symbols are finite and explicit.  It is possible and desirable to consider
infinite sums, or more exactly \emph{syntactically} infinite sums which in fact
reduce to finite sums but whose summation bounds are implicit.
This often leads to simpler integral representations.

For example, in the sum~$\sum_{k=0}^n \binom{n}{k}$, the upper summation
bound~$n$ is not really useful when~$n\geq0$, we could as well
write~$\sum_{k=0}^\infty\smash{\binom{n}{k}}$, which defines the same sequence.
It is possible to adapt Algorithm~\ref{algo:sumtoct} to handle infinite sums
as long as the underlying summations in the field of iterated Laurent series are
convergent.  Recall that a geometric sum~$\sum_{n\geq 0}f^n$ converges in the field~$\L_d$
if and only if~$\lm(f)\prec 1$,
see Lemma~\ref{lem:geomsum}.

To compute  an integral representation of binomial sums involving infinite
summations, the principle is to proceed as in~\S\ref{sec:gfbs} except when an 
infinite geometric sum~$\sum_{n\geq 0} f$ in~$\L_d$ shows up, where we check that~$\lm(f)\prec
1$ so that Lemma~\ref{lem:geomsum} applies.  If it does, then the summation is
performed and the computation continues.  If it does not, then the binomial sum is
simply rejected.

\begin{example}
  Consider the binomial sum~$u_n =
  \sum_{k=0}^\infty \binom{n}{k}$, for~$n \geq 0$.
  Note that Algorithm~\ref{algo:sumtoct} applied to~$\sum_{k=0}^n \binom{n}{k}$
  returns
  \[ \sum_{k=0}^n \binom{n}{k} = [1]\left((1+z)^n \frac{z-z^{-n}}{z-1}\right). \]
  We proceed as in~\S\ref{sec:gfbs} except
  that infinite geometric sums in~$\L_d$ are valid when
  Lemma~\ref{lem:geomsum} applies.  Firstly~$\binom{n}{k} = [1] (1+z)^n
  z^{-k}$.  Then we consider the infinite sum $\sum_{k=0}^\infty (1+z)^n z^{-k}$.
  Since~$1/z \succ 1$, it does not converge, so the
  binomial sum is rejected.
  And indeed, when~$n < 0$ the sum~$\sum_{n=0}^\infty\binom{n}{k}$
  has infinitely many non zero terms.
  If we change~$\binom{n}{k}$ into~$\binom{n}{n-k}$, which is the same when~$n \geq 0$,
  we obtain
  \[ u_n = \sum_{k= 0}^\infty [1] (1+z)^n z^{k-n-1} = [1] \sum_{k= 0}^\infty
  \frac{1}{z}\left(1+\frac{1}{z}\right)^n z^{k} = [1]
  \frac{1}{z(1-z)}\left(1+\frac{1}{z}\right)^n, \]
  where this time the sum converges.
\end{example}

\subsection{Building blocks}\label{sec:building-blocks}
Besides the binomial coefficient~$\smash{\binom{n}{k}} = [1] (1+z)^n z^{-k}$,
we have found useful to have additional building blocks
to extend Algorithms~\ref{algo:sumtoct} and~\ref{algo:sumtores}
that contruct integral representations.
Without enlarging the
class of binomial sums, one can add any sequence of the form~$\ul n \mapsto
[1]R_0 R_1^{n_1}\dotsm R_d^{n_d}$, where~$R_{0\smile d}$ are rational
functions, as a new building block. Judicious extra building blocks may give
simpler integral representations or speed up the computation.
For example, when working with Motzkin numbers
\[ M_n \eqdef \sum_{k=0}^\infty (-1)^{n-k} \binom{n}{k} \left( \binom{2k+2}{k+1} - \binom{2k+2}{k+2} \right), \]
one may add the new building block
\[ M_n = [1] \frac{(1-x)(1+x)^2}{x} \left( \frac{1+x+x^2}{x} \right)^n. \]

In the applications below, we made use of alternative definitions of the binomial coefficient and their corresponding representations:
\begin{align}
  \label{eqn:natbinomial} \binom{n}{k}' &\eqdef H_n H_k \binom{n}{k} = [1] \frac{x^{k-n}y^{-k}}{1-x-y},\\
  \binom{n}{k}'' &\eqdef \binom{n}{n-k} = [1]\frac{1}{(1-z)^{k+1} z^{n-k}}.
\end{align}
They differ from the binomial coefficient, as defined in~\S\ref{sec:algbinomsums}, only when~$n < 0$ and agree on their non zero values (see also Figure~\ref{fig:support-binomial}).

\begin{figure}[t]
  
  \subfloat[$\binom{n}{k}$]{
  \begin{tikzpicture}[scale=.22]
    \draw [->] (-7,0) -- (7, 0) node (naxis) [right] {$n$};
    \draw [->] (0,-7) -- (0,7) node (kaxis) [above] {$k$};

    \foreach \k in {-6,...,6}
      \foreach \n in {-6,...,6} {
        \ifthenelse{\k > -1 \AND \( \cnttest{\k}{<=}{\n} 
      \OR \n < 0 \)}{ \fill (\n,\k) circle[radius = .22]; }{\draw ( \n,\k) circle[radius=.1];} 
      }
    \end{tikzpicture}}
    \quad
  \subfloat[$\binom{n}{k}'$]{
  \begin{tikzpicture}[scale=.22]
    \draw [->,] (-7,0) -- (7, 0) node (naxis) [right] {$n$};
    \draw [->,] (0,-7) -- (0,7) node (kaxis) [above] {$k$};

    \foreach \k in {-6,...,6}
      \foreach \n in {-6,...,6} {
        \ifthenelse{ \cnttest{\k}{<=}{\n}\AND  \k > -1}{ \fill (\n,\k) circle[radius = .22]; }{\draw ( \n,\k) circle[radius=.1];} 
      }
    \end{tikzpicture}}
    \quad
  \subfloat[$\binom{n}{k}''$]{
  \begin{tikzpicture}[scale=.22]
    \draw [->,] (-7,0) -- (7, 0) node (naxis) [right] {$n$};
    \draw [->,] (0,-7) -- (0,7) node (kaxis) [above] {$k$};

    \foreach \k in {-6,...,6}
      \foreach \n in {-6,...,6} {
        \ifthenelse{\cnttest{\k}{<=}{\n} \AND \( \n < 0
      \OR \k > -1 \)}{ \fill (\n,\k) circle[radius = .2]; }{\draw ( \n,\k) circle[radius=.1];} 
      }
    \end{tikzpicture}}

  \caption{Supports of some variants of the binomial coefficient.}
  \label{fig:support-binomial}
\end{figure}

\subsection{Summation over a polyhedron}\label{sec:brion-barvinok}
As in~\S\ref{sec:sumpoly},
let~$\Gamma\subset\R_\geq^{d+e}$ be a rational polyhedron and~$u:\Z^{d+e}\to\K$ be a
binomial sum.  Let us consider the sequence
\[ v : \ul n\in \N^d \mapsto \sum_{\ul m\in \Z^e} u_{\ul n,\ul m} \mathbf1_{\Gamma}(\ul n,\ul m), \]
where~$\mathbf1_{\Gamma}(\ul n,\ul m) = 1$ if~$(\ul n,\ul m)\in\Gamma$ and~$0$ otherwise.
Proposition~\ref{prop:somme-poly} shows that~$v$ is a binomial sum under an additional finiteness hypothesis on~$\Gamma$,
and according to Corollary~\ref{coro:bs-res}, there exists a rational function~$R(t_{1\smile d},z_{1\smile r})$ such that
\[ \sum_{\ul n \in \N^d} v_{\ul n} \ul t^{\ul n} = \res_{z_{1\smile r}} R. \]
It is possible to circumvent the construction used in the proof of Proposition~\ref{prop:somme-poly}
and compute directly such a rational function~$R$ given two ingredients: firstly, the generating function of~$\Gamma$
\[ \varphi_\Gamma(t_{1\smile d},s_{1\smile e}) \eqdef \sum_{(\ul n, \ul m)\in \Z^{d+e}} \mathbf1_{\Gamma}(\ul n,\ul m) \ul t^{\ul n} \ul s^{\ul m}, \]
which is known to be a rational function \parencite[e.g.][]{Bri88};
and secondly, a representation of the binomial sum~$u$ as
\[ u_{\ul n,\ul m} = [1] R T_1^{n_1} \dotsm T_d^{n_d} S_1^{m_1} \dotsm S_e^{m_e} \]
for some rational functions~$R$, $T_{1\smile d}$ and~$S_{1\smile e} \in \K(z_{1\smile r})$.
Then, with~$t_{1\smile e} \prec z_{1\smile d}$,
\begin{align*}
  \sum_{\ul n \in \N^d} v_{\ul n} \ul t^{\ul n}
    &= [1] R \sum_{(\ul n, \ul m)\in \Z^{d+e}} \mathbf1_{\Gamma}(\ul n,\ul m) (t_1T_1)^{n_1} \dotsm (t_d T_d)^{n_d} S_1^{m_1} \dotsm S_e^{m_e}\\
    &= \res_{z_{1\smile r}} \frac{R \cdot \varphi_\Gamma(t_1T_1,\dotsc,t_dT_d,S_1,\dotsc,S_d)}{z_1\dotsm z_d},
\end{align*}
provided that the sums converge in~$\L_{d+e}$.  This method is interesting
because it is known how to compute efficiently compact representations of the
rational function~$\varphi_\Gamma$ \parencite{Barvinok2008}.

\section{Applications}
\label{sec:applications}

\subsection{Andrews-Paule identity}\label{par:binom-exem}

We detail the proof of the following identity:
\begin{equation}
  \sum_{i=0}^n \sum_{j=0}^n\binom{i+j}{j}^2\binom{4n-2i-2j}{2n-2i} = (2n+1)\binom{2n}{n}^2.
  \label{equ:andrews-paule}
\end{equation}
This identity appeared first as a problem in the \emph{American Mathematical Monthly} \parencite{AMME3376}
and was subsequently proved by \textcite{AndrewsPaule1993} and \textcite{Wegschaider1997}
using various tools from the method of creative telescoping.
It attracted attention because the theory of creative telescoping was unable to give a complete automated proof at that time.

Let~$u_n$ denote the left-hand side. It can be written as an infinite sum
\[ u_n = \sum_{i=0}^\infty \sum_{j=0}^\infty \left.\binom{i+j}{j}'\right.^2 \binom{(2n-2i)+(2n-2j)}{2n-2i}',\]
where~$\binom{n}{k}'$ is the \emph{natural} binomial defined by~$\binom{n}{k}'
= H_n H_k \binom{n}{k}$, see~\S\ref{sec:building-blocks}.  Of course, we could
have stuck to the former definition and used finite sums, but while the natural
binomials introduce two variables instead of one in the integral
representation---see Equation~\eqref{eqn:natbinomial}---the integral
representation obtained after the geometric reduction step is often simpler
when using the natural binomial.

So we obtain the following integral representation:
\[ \sum_{n\geq 0} u_n t^n = \res_{z_{1\smile 6}}
  \tfrac{z_1z_2}{(z_1^2z_2^2-t)(z_4z_6-z_1^2)(z_3z_5-z_2^2)(1-z_3-z_4)(1-z_5-z_6)(1-z_1-z_2)}, \]
with the ordering~$t \prec z_1 \prec z_2 \prec z_3 \prec z_4 \prec z_5 \prec z_6$.
As expected, each binomial coefficient brings two extra variables so we end up
with six variables in addition to the parameter~$t$. 

Geometric reduction applies successively with respect to the variables~$z_1$,
$z_3$, $z_4$ and~$z_5$.  For example, the poles w.r.t. the variable~$z_1$,
are~$\{\pm tz_2^{-1/2}\}$,
$\{\pm z_4^{1/2}z_6^{1/2}\}$ and~$\left\{1-z_2\right\}$, gathered by conjugacy classes.
The first pair of conjugate poles is~$\prec z_1$ whereas the second one is~$\succ
z_1$.  The rational root~$1-z_2$ is~$\succ z_1$.  Thus, by application of Proposition~\ref{prop:geomred},
\[ \sum_{n\geq 0} u_n t^n = \res_{z_{2\smile 6}} \frac{(1-z_2) z_2^3}{\left(t-(1-z_2)^2 z_2^2\right) (1-z_3-z_4) \left(z_2^2-z_3 z_5\right) (1-z_5-z_6) \left(t-z_2^2 z_4 z_6\right)}. \]
In the end, repeated application of Algorithm~\ref{algo:ratres} leads to
\[ \sum_{n\geq 0} u_n t^n = \res_{z_{4},z_6}
  \frac{(z_6-1)z_6^3}{(t-z_6^2(z_6-1)^2)((z_4-1)t-z_4z_6^2(z_6^2+z_4-1))}. \]
Using the algorithm of \textcite{Lai15}, we obtain (in about one second) a differential operator annihilating~$\sum_{n\geq 0} u_n t^n$:
\begin{multline*}
  \allowdisplaybreaks[1]
 16t^4(256t^2+736t+81)(16t-1)^2\partial_t^6\\
  +16t^3(16t-1)(86016t^3+256256t^2+20976t-1053)\partial_t^5\\
  +4t^2(36601856t^4+113760256t^3+6103168t^2-908088t+14823)\partial_t^4\\
  +16t(22691840t^4+75716608t^3+6677824t^2-459552t+3645)\partial_t^3\\
  +(305827840t^4+1109626112t^3+139138736t^2-4247073t+9720)\partial_t^2\\
  +(60272640t^3+244005120t^2+42117840t-374625)\partial_t+691200t^2+3369600t+996300
\end{multline*}
The roots of the indicial equation are~$0$, $1$, $-\frac12$ et~$\frac12$.  This
differential operator corresponds to the Picard-Fuchs equation associated to
the integral representation, but it is not the minimal-order operator
annihilating~$\sum_{n\geq 0} u_n t^n$.  This may happen because the
differential operator that we compute annihilates more that simply the residue
in which we ar interested: it annihilates every period of the integral of the
rational function inside the residue.  Of course, this is not a issue as long
as we do obtain a differential equation.

Concerning the right-hand side, we find the integral representation
\[ \sum_{n\geq 0}  (2n+1)\binom{2n}{n}^2 t^n =  \res_{z_1,z_2}
  \frac{t+u_2(u_2-1)u_1(u_1-1)}{(t-u_2(u_2-1)u_1(u_1-1))^2}, \]
which cannot be simplified further with the geometric reduction.
We compute (in about $0.1$s) the annihilating operator: $t(16t-1)\partial_t^2+(48t-1)\partial_t+12$.
The Andrews-Paule identity follows with the equality test described in~\S\ref{sec:diffeqseries}.

In this case the right-hand side is a hypergeometric sequence, so it can be
\emph{discovered} automatically: 
the differential equation of order~6 leads to a recurrence relation of order~4 for~$u_n$
from which the algorithm of \textcite{Pet92} 
finds the hypergeometric solutions and the initial conditions are enough to identify the right-hand side.

\subsection{Several known identities}
\label{sec:known-identities}

This section shows the integral representations and the Picard-Fuchs equations
appearing in the proofs of known identities.  The integral
representations have been obtained with the method presented in this article and
the variants presented in~\S\ref{sec:opt}.
Note that the computation of the annihilating operator never takes longer than~$4$ seconds.

\subsubsection{\textcite{Str94}}\label{sec:strehl}
  $\displaystyle \sum_{k=0}^n \binom{n}{k}^2 \binom{n+k}{k}^2 = \sum_{k=0}^n \binom{n}{k} \binom{n+k}{k} \sum_{j=0}^k \binom{k}{j}^3.$

This identity relates the Apéry numbers (left) and Franel numbers (the inner sum in the right-hand side).

  \medskip
  $\displaystyle  \text{g.f.l.h.s\footnotemark} = \res_{z_{1\smile 3}} \frac{1}{(1-z_1) (1-z_2) (1-z_3)z_1z_2 z_3-t (z_1+z_2 z_3-z_1 z_2 z_3)} $
  \footnotetext{Generating function of the left-hand side} 

  \medskip
  $\displaystyle  \text{g.f.r.h.s\footnotemark}
    = \res_{z_{1\smile 3}} \tfrac{1}{(1-z_1) (1-z_2) (1-z_3)z_1z_2 z_3-t (1-z_3-z_2 (1-(2+z_1 (1-z_2) (1-z_3)) z_3))} $
  \footnotetext{Generating function of the right-hand side} 
  \medskip
  
  $\displaystyle \text{ann. op.\footnotemark} =
    t^2(t^2-34t+1)\partial_t^3+3t(2t^2-51t+1)\partial_t^2+(7t^2-112t+1)\partial_t+t-5$
  \footnotetext{Annihilating operator of both right and left-hand sides}

\subsubsection{\textcites[33]{GraKnuPat89}[\S5.7.6]{Wegschaider1997}}
\[ \sum_{r \geq 0}\sum_{s \geq 0}(-1)^{n+r+s} \binom{n}{r}\binom{n}{s}\binom{n+s}{s}\binom{n+r}{r}\binom{2n-r-s}{n} = \sum_{k\geq 0} \binom{n}{k}^4 \]

  \medskip
  $\displaystyle  \text{g.f.l.h.s} = \res_{z_{1\smile 3}} \frac{1}{(1-z_1) (1-z_2) (1-z_3)z_1z_2 z_3 + (z_2-z_3)(z_1-z_3)t}$

  \medskip
  $\displaystyle  \text{g.f.r.h.s} = \res_{z_{1\smile 3}} 
  \frac{1}{(1-z_1) (1-z_2) (1-z_3)z_1z_2 z_3 +  (1-z_1-z_2) z_3 t  -(1-z_1) (1-z_2) t}$
  \medskip

   $\displaystyle \text{ann. op.} = t^2(4t+1)(16t-1)\partial_t^3+3t(128t^2+18t-1)\partial_t^2t+(444t^2+40t-11)\partial_t+60t+2$

  \subsubsection{Dent} The following identity is due to Dent and used as an example by \textcite[][90]{Wegschaider1997}:
  \[ \displaystyle \sum_{k=0}^{\mathclap{n_1+2n_2}} \sum_{j \geq 0} (-1)^j \binom{k}{j}\binom{2n_2+n_1-k}{2n_2-j}\binom{n_1}{k-j} = 2^{n_1}\binom{n_1+n_2}{n_1}, \text{with $n_1, n_2 \geq 0$}. \]

  \medskip
  $\displaystyle  \text{g.f.l.h.s} = \text{g.f.r.h.s} = \frac{1}{1-2t_1-t_2}$
  \medskip

  Here, the generating series is rational and geometric reduction performs the entire computation, there is no need to compute a Picard-Fuchs equation.

\subsubsection{\textcite{Dix91}} $\displaystyle \sum_{k=0}^{2n} (-1)^k \binom{2n}{k}^3 = (-1)^n \frac{(3n)!}{n!^3}$

  \medskip
  $\displaystyle \text{g.f.l.h.s} = \res_{z_{1\smile 2}} \frac{(1-z_2)(1-z_1)z_1z_2}{z_1^2z_2^2(1-z_2)^2(1-z_1)^2-(1-z_1-z_2)^2t}$
  
  \medskip
  $\displaystyle  \text{g.f.r.h.s} =\res_{z_{1\smile 3}} \frac{1}{t+z_1z_2(1-z_1-z_2)}$

  \medskip
  $\displaystyle \text{ann. op.} = t(27t+1)\partial_t^2+(54t+1)\partial_t+6$

\subsubsection{Moriarty~\parencite[see][11]{Egorychev1984}}
  \[\displaystyle \sum_{k=n}^{m} (-4)^k \binom{k}{m} \frac{n}{n+k}\binom{n+k}{2k} = (-1)^n 4^m \frac{n}{n+m}\binom{n+m}{2m} \]

  Because of the division by~$n+m$, the right-hand side is not obviously a binomial sum.
  However, it becomes obvious after observing that
  \[ \frac{n}{n+k}\binom{n+k}{2k} = \binom{n+k-1}{2k} + \frac12 \binom{n+k-1}{2k-1}. \]

  \medskip
  $\displaystyle  \text{g.f.l.h.s} = \text{g.f.r.h.s} = \frac12\frac{(1-t_1)(1+t_1)}{t_1^2+4t_1t_2+2t_1+1}$

Here again it is a rational power series and the geometric reduction finds it.

\subsubsection{\textcite[Theorem 1.2]{DavEgoKri15}}
  
\begin{multline*}
   1 + \sum_{q=1}^\infty 2^{q-1}\binom{n_2}{q}\left(
    \sum_{m=1}^{n_1/2} \binom{m-1}{q-1} + \sum_{m=1}^{n_1} \binom{m-1}{q-1} \right) \\=
    \sum_{q=1}^\infty 2^{q-1}\binom{n_2}{q} \left( \binom{n_1}{q} + \binom{\lfloor n_1/2\rfloor}{q} \right)
\end{multline*}
 
  $\displaystyle  \text{g.f.l.h.s} = \text{g.f.r.h.s} =
  \frac{t_1t_2(1+t_1-t_2-t_1t_2-2t_1^2-2t_1^2t_2)}{(1-t_2)(1-t_1)(1-t_2-t_1^2t_2-t_1^2)(1-t_1-t_2-t_1t_2)}$
  \medskip

  The summation bound~$n_1/2$ and the integer part~$\lfloor n_1/2 \rfloor$
  may look problematic, until we observe that
  \begin{align*}
  \sum_{m=1}^{n/2} \binom{m-1}{q-1} &= \sum_{m=1}^\infty \binom{m-1}{q-1} H_{n-2m} \\\quad\text{and}\quad
    \binom{\lfloor n/2\rfloor}{q} &= \sum_{k=0}^\infty \binom{k}{q} \left( \delta_{2k-n} + \delta_{2k+1-n} \right),
  \end{align*}
  using the binomial sums~$\delta$ and~$H$ defined in~\S\ref{sec:algbinomsums}.

\subsubsection{\textcite[Theorem 1.1]{DavEgoKri15}}
\begin{multline*}
    2 + \sum_{q=1}^\infty \binom{n_2}{q}\left( \sum_{m=1}^{n_1/2} \binom{m-1}{q-1} + \sum_{k_1=1}^\infty \dotsb \sum_{k_q=1}^\infty \delta_{k_1+\dotsb+k_q-m}\prod_{i=1}^q (k_i+1) \right) \\
    = \binom{n_1+2n_2}{2n_2} + \binom{\lfloor n_1/2\rfloor + n_2}{n_2}
\end{multline*}

  $\displaystyle  \text{g.f.l.h.s} = \text{g.f.r.h.s} =
  \frac{t_1t_2(1-t_2-2t_1^2 +t_1^3-2t_1^2)}{(1-t_2)(1-t_1)(1-t_2-t_1^2)(1-t_2-2t_1+t_1^2)}$
  \medskip

  Again, the left-hand side does not have the appearance of a binomial sum until we remark that
  \[ \sum_{k_1=1}^\infty \dotsb \sum_{k_q=1}^\infty \delta_{k_1+\dotsb+k_q-m}\prod_{i=1}^q (k_i+1) = [1]\left(
  \left(\frac{t(2-t)}{(1-t)^2} \right)^q \frac{1}{t^m} \right). \]

\subsubsection{\textcite{AMM11914}}
\[ \sum_{k=1}^n (-4)^{-k}\binom{n-k}{k-1} \sum_{j=1}^{3m} (-2)^{-j} \binom{n+1-2k}{j-1} \binom{m-k}{3m-j} = 0, \quad n,m > 0, \]
where we consider the binomial coefficient as defined in~\S\ref{sec:def} and
not one of the variants of~\S\ref{sec:building-blocks}.  This is an important
clarification because negative values may appear in the upper arguments of the
binomial coefficients.  The geometric reduction is enough to prove this
equality, no integration step is required.


\subsection{Proof of some conjectures}
\label{sec:proof-conjectures}
\subsubsection{Le Borgne's identity}

The following identity for Baxter's numbers arises as a conjecture in an
unpublished work by Yvan Le~Borgne. With the methods presented here we can
prove it automatically.
\begin{multline*}
  1 + F_n^{-1, -1}+2F_n^{0, 0}-F_n^{0, 1}+F_n^{1, 0}-3F_n^{1, 1}+F_n^{1, 2}-F_n^{3, 1}+3F_n^{3, 2}\\-F_n^{3, 3}-2F_n^{4, 2}+F_n^{4, 3}-F_n^{5, 2}
  = \sum_{m=0}^{n} \frac{\binom{n+2}{m}\binom{n+2}{m+1}\binom{n+2}{m+2}}{\binom{n+2}{1}\binom{n+2}{2}}, \\
  \text{where } F_n^{a,b} = \sum _{d=0}^{n-1} \sum_{c=0}^{d-a} \tbinom{d-a-c}{c}\tbinom{n}{d-a-c}\left(\tbinom{n+d+1-2a-2c+2b}{n-a-c+b}-\tbinom{n+d+1-2a-2c+2b}{n+1-a-c+b}\right).
\end{multline*}

The sum in the right-hand side does not have the appearance of a binomial sum (even if it is such), but the identity becomes clearly an identity between binomial sums after multiplication by~$\binom{n+2}{1}\binom{n+2}{2}$.
Alternatively, one can use the following equality to write the right-hand side as a binomial sum:
\[ \frac{\binom{n+2}{m}\binom{n+2}{m+1}\binom{n+2}{m+2}}{\binom{n+2}{1}\binom{n+2}{2}} = \det
  \begin{pmatrix}
    \binom{n}{k} & \binom{n}{k+1} & \binom{n}{k+2} \\
    \binom{n}{k-1}& \binom{n}{k}& \binom{n}{k+1} \\
    \binom{n}{k-2} & \binom{n}{k-1} & \binom{n}{k}
  \end{pmatrix},
\]

\subsubsection{Identities from \textcite{BreOhtOsb14}} We have been able to
prove the following identities conjectured by \textcite{BreOhtOsb14}.  The
left-hand sides involve absolute values of nonlinear polynomials.  They are
nevertheless binomial sums for two reasons.  The first one is that all the
nonlinear polynomials under consideration split into linear or positive
factors.  For example~$|i^3-j^3| = |i-j|(i^2+ij+j^2)$ and~$|i-j| =
(i-j)(H_{i-j}-H_{j-i})$ is a binomial sum.

The second reason, which we used for the computation, is that we can eliminate
the absolute values by using the symmetries of the sums. For example
\[ \sum_{i,j} \binom{2n}{n+i}\binom{2n}{n+j}|i^3-j^3| = 4\sum_{i=0}^n\sum_{j=-i}^{i-1} \binom{2n}{n+i}\binom{2n}{n+j}i^3. \]

The integral representations obtained are too lengthy to be presented
here, they can be found online\footnote{\url{https://github.com/lairez/binomsums}}.  Since the
right-hand sides are hypergeometric sequences, they can be computed from the
recurrence relations satisfied by the left-hand sides with the algorithm of \textcite{Pet92}.

\[\sum_{i,j} \binom{2n}{n+i}\binom{2n}{n+j}|i^3-j^3| = \frac{2n^2(5n-2)}{4n-1}\binom{4n}{2n}
  \tag*{\parencite[Eq.~5.7]{BreOhtOsb14}} \]

\[\sum_{i,j} \binom{2n}{n+i}\binom{2n}{n+j}|i^5-j^5| =
  \frac{2n^2(43n^3-70n^2+36n-6)}{(4n-1)(4n-3)}\binom{4n}{2n}
  \tag*{\parencite[Eq.~5.8]{BreOhtOsb14}} \]

\[\sum_{i,j} \binom{2n}{n+i}\binom{2n}{n+j}|i^7-j^7| =
  \tfrac{2n^2(531n^5-1960n^4+2800n^3-1952n^2+668n-90)}{(4n-1)(4n-3)(4n-5)}\binom{4n}{2n}
  \tag*{\parencite[Eq.~5.9]{BreOhtOsb14}} \]

\[\sum_{i,j} \binom{2n}{n+i}\binom{2n}{n+j}|ij(i^2-j^2)| =
  \frac{2n^3(n-1)}{2n-1}\binom{2n}{n}^2
  \tag*{\parencite[Eq.~5.12]{BreOhtOsb14}} \]

\[\sum_{i,j} \binom{2n}{n+i}\binom{2n}{n+j}|i^3j^3(i^2-j^2)| =
  \frac{2n^4(n-1)(3n^2-6n+2)}{(2n-3)(2n-1)}\binom{2n}{n}^2
  \tag*{\parencite[Eq.~5.14]{BreOhtOsb14}} \]

\subsection{Computational limitations}
The method is mainly limited by the integration step.  When the integral
representation has more than four variables (in addition to the parameter and after the geometric reduction),
then computation of the Picard-Fuchs equation becomes challenging.  For example,
in Strehl's second identity
\[ \sum_{k=0}^n \binom{n}{k}^3\binom{n+k}{k}^3 = \sum_{k=0}^\infty \binom{n}{k}\binom{n+k}{k}  \sum_{i=0}^\infty \binom{k}{i}^2\binom{2i}{i}^2 \binom{2i}{k-i}, \]
the integral representation of the generating function of each side has five
variables and a parameter.  For the left-hand side, we obtain
\[ \res_{z_{1\smile 5}} \tfrac{1}{(z_1z_3z_4z_5+z_2z_3z_4z_5 -z_1z_2z_3z_4z_5-z_3z_4z_5+z_1z_2)t+z_1z_2z_3z_4z_5(1-z_1)(1-z_2)(1-z_3)(1-z_4)(1-z_5)}. \]
The integral representation of the right-hand side is more complicated
ant still has five variables, in addition to the parameter. 
The computation  requires several hours with the current algorithms.
Without the
geometric reduction, it involves nine variables and a parameter.

\raggedright
\printbibliography

\end{document}